\documentclass[12pt]{article}

\usepackage[a4paper,margin=1in]{geometry}
\usepackage[authoryear,round]{natbib}
\usepackage{changepage}
\usepackage[colorlinks,citecolor=blue,linkcolor=blue,urlcolor=blue]{hyperref}

\usepackage{lipsum}

\usepackage{setspace}
\usepackage{pdfpages}
\usepackage[english]{babel}
\usepackage[en-GB]{datetime2}
\usepackage[T1]{fontenc}
\usepackage[utf8]{inputenc}
\usepackage{csquotes}

\newcommand{\OMIT}[1]{{\bf [OMIT:} #1 \ {\bf --- end OMIT] }}  
\renewcommand{\OMIT}[1]{}            

\usepackage{amsmath,amssymb,amsthm}
\usepackage{fixmath}
\usepackage{lmodern}
\usepackage{bm}
\allowdisplaybreaks

\usepackage{graphicx}
\usepackage{caption,subcaption}
\usepackage{float,subfloat}
\graphicspath{{pictures/}}

\usepackage{tikz}
\usepackage{pgfplots}
\usetikzlibrary{plotmarks}
\usetikzlibrary{shapes}
\usetikzlibrary{patterns}

\newtheorem{theorem}{Theorem}
\newtheorem{lemma}{Lemma}
\newtheorem{proposition}[theorem]{Proposition}

\newtheorem{conjecture}[theorem]{Conjecture}

\newtheorem{definition}{Definition}
\newtheorem{remark}{Remark}

\newenvironment{widerequation*}{%
    \begin{adjustwidth}{-1.5cm}{-1.5cm}\begin{equation*}}
    {\end{equation*}\end{adjustwidth}}

\newcounter{savefootnote}
\newcounter{symfootnote}
\newcommand{\symfootnote}[1]{%
   \setcounter{savefootnote}{\value{footnote}}%
   \setcounter{footnote}{\value{symfootnote}}%
   \ifnum\value{footnote}>30\setcounter{footnote}{0}\fi%
   \let\oldthefootnote=\thefootnote%
   \renewcommand{\thefootnote}{\fnsymbol{footnote}}%
   \footnote{#1}%
   \let\thefootnote=\oldthefootnote%
   \setcounter{symfootnote}{\value{footnote}}%
   \setcounter{footnote}{\value{savefootnote}}%
}

\begin{document}

\title{Parameter Sensitivity Analysis of Hierarchical Spatial Economy: 
Trade Strategy around Brexit%
\thanks{This work has received Grants-in-Aid from JSPS 21K04299/24K22968/24K16372/25K01339/25H00543 and Fusion Oriented Research for Disruptive Science and Technology (Grant No.\ JPMJFR215M).}}

\author{%
Kiyohiro Ikeda\thanks{Department of Civil and Environmental Engineering, Tohoku University, Sendai, Japan.}\quad ~
Yosuke Kogure\thanks{Civil and Environmental Engineering Course, Akita University, Akita, Japan.}\quad ~
Hiroki Aizawa\thanks{College of Transdisciplinary Sciences for Innovation, Kanazawa University, Ishikawa, Japan.}\quad ~
Yuki Takayama\thanks{Corresponding author. Department of Civil and Environmental Engineering, Institute of Science Tokyo, Tokyo, Japan. \texttt{takayama.y.cc65@m.isct.ac.jp}.}%
}

\date{\today}

\maketitle

\begin{abstract}
\noindent
This paper presents a systematic framework for analyzing the economic parameter sensitivity of a hierarchical spatial economy within economic geography models. Through the hierarchical reduction approach proposed in this study, the original region-level governing equation is condensed into country-level and alliance-level equations. Based on the reduced governing equation, we formulate the sensitivity of economic variables on each country’s population. This approach is applied to the analysis of international trade competition---covering both trade liberalization and protectionism around Brexit---among the UK, France, and Germany. We find that both the UK and the EU should focus on reducing domestic transportation costs, whereas tariffs and retaliatory tariffs act as a double-edged sword that can either strengthen or weaken their trade positions.

\noindent\textbf{Keywords:} Brexit; economic geography model; hierarchical spatial economy; reduction analysis; international trade competition; parameter sensitivity analysis; tariffs.

\noindent
{\bf JEL Classification:}
F15, F22, R12

\end{abstract}




\medskip

\newpage


\section{Introduction}

Population distributions in hierarchical systems of countries and regions have been extensively studied in economic geography.
  The impacts of reductions in tariffs 
  and transportation costs between countries
  on within-country population distributions are analyzed extensively.\footnote{%
See, e.g., \citet{Krugman.Elizondo.1996,Crozet.Soubeyran.2004}, 
  \citet{Behrens.etal.2006,Behrens.etal.2007},
and \citet{Gallego.Zofio.2018}.
} 

 \begin{figure}
   \centering\small
   \begin{tabular}{cc}
\includegraphics[scale=0.39]{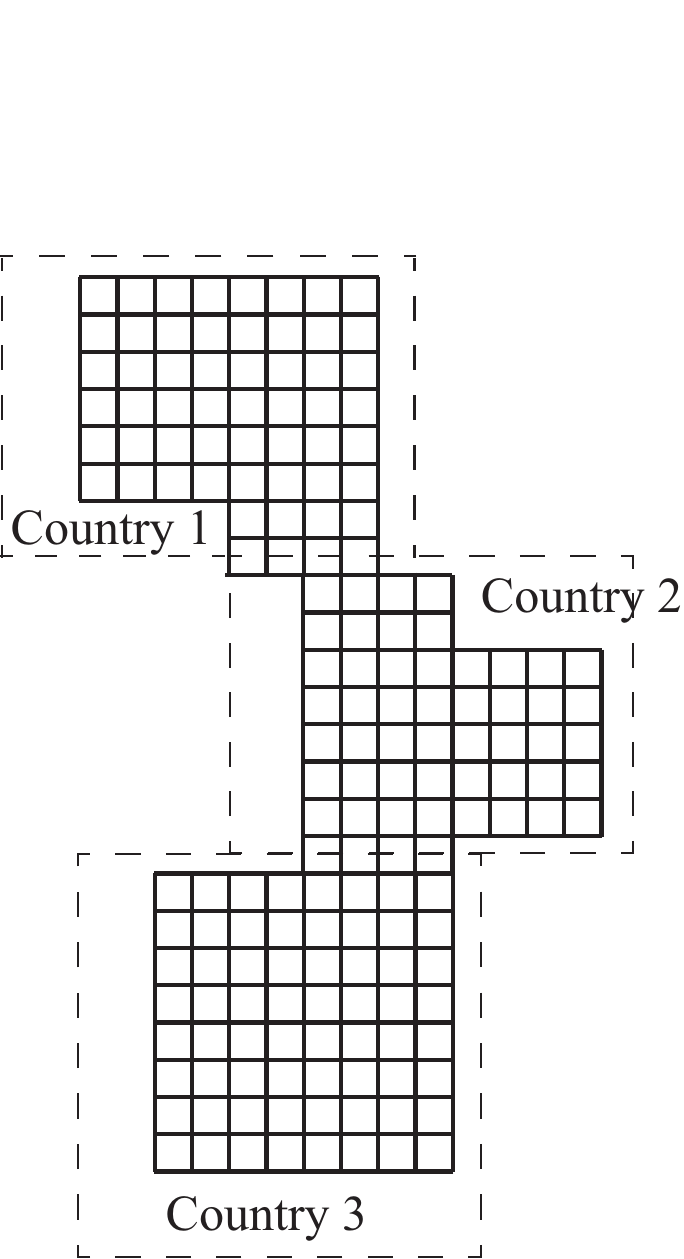} &
\includegraphics[scale=0.37]{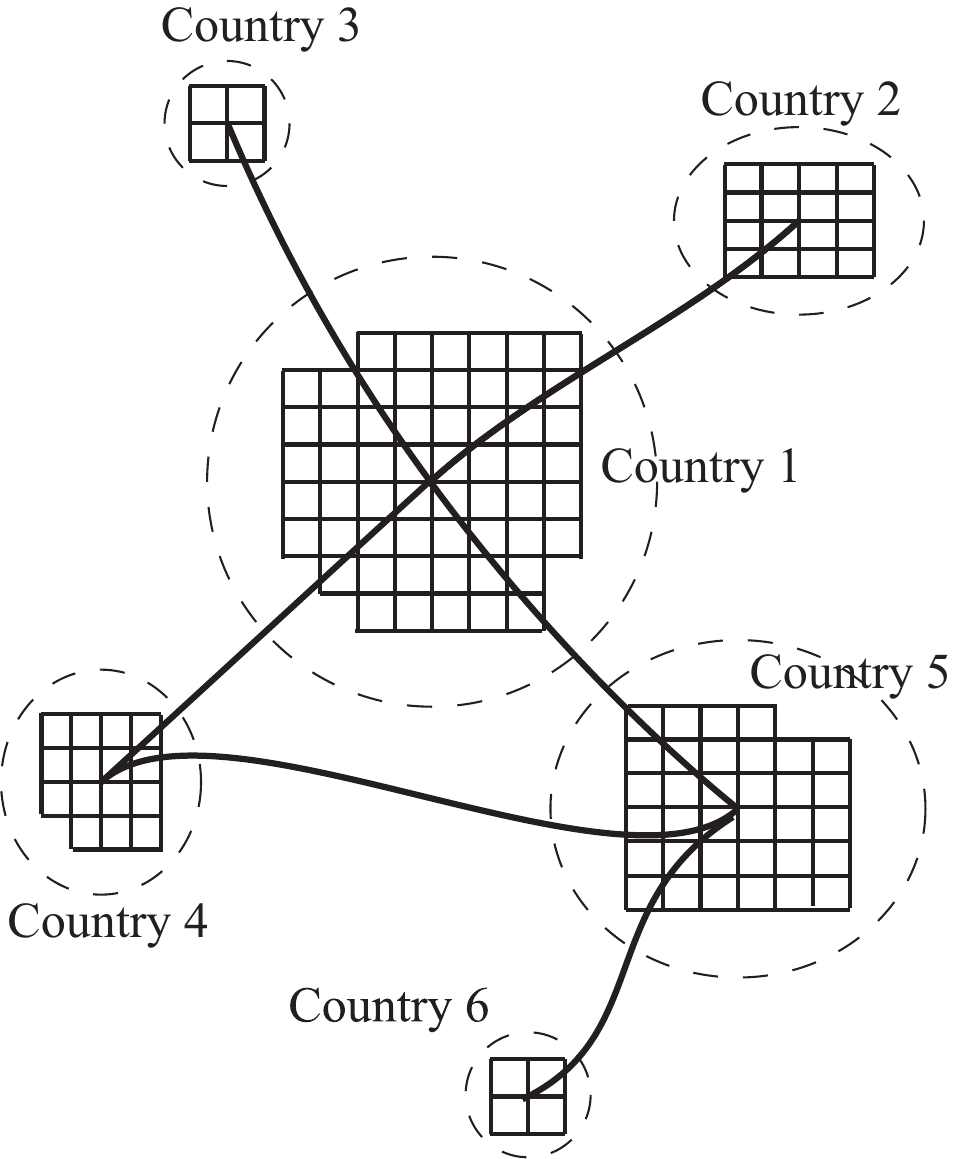} \\
(a) Continuous network &
(b) Discrete network 
\end{tabular}
   \caption{Global--local system comprising countries
   with local regions. 
   \label{Global--local system}}

\vspace{5mm}
   \centering\small

\vspace{2mm}
\includegraphics[scale=0.29]{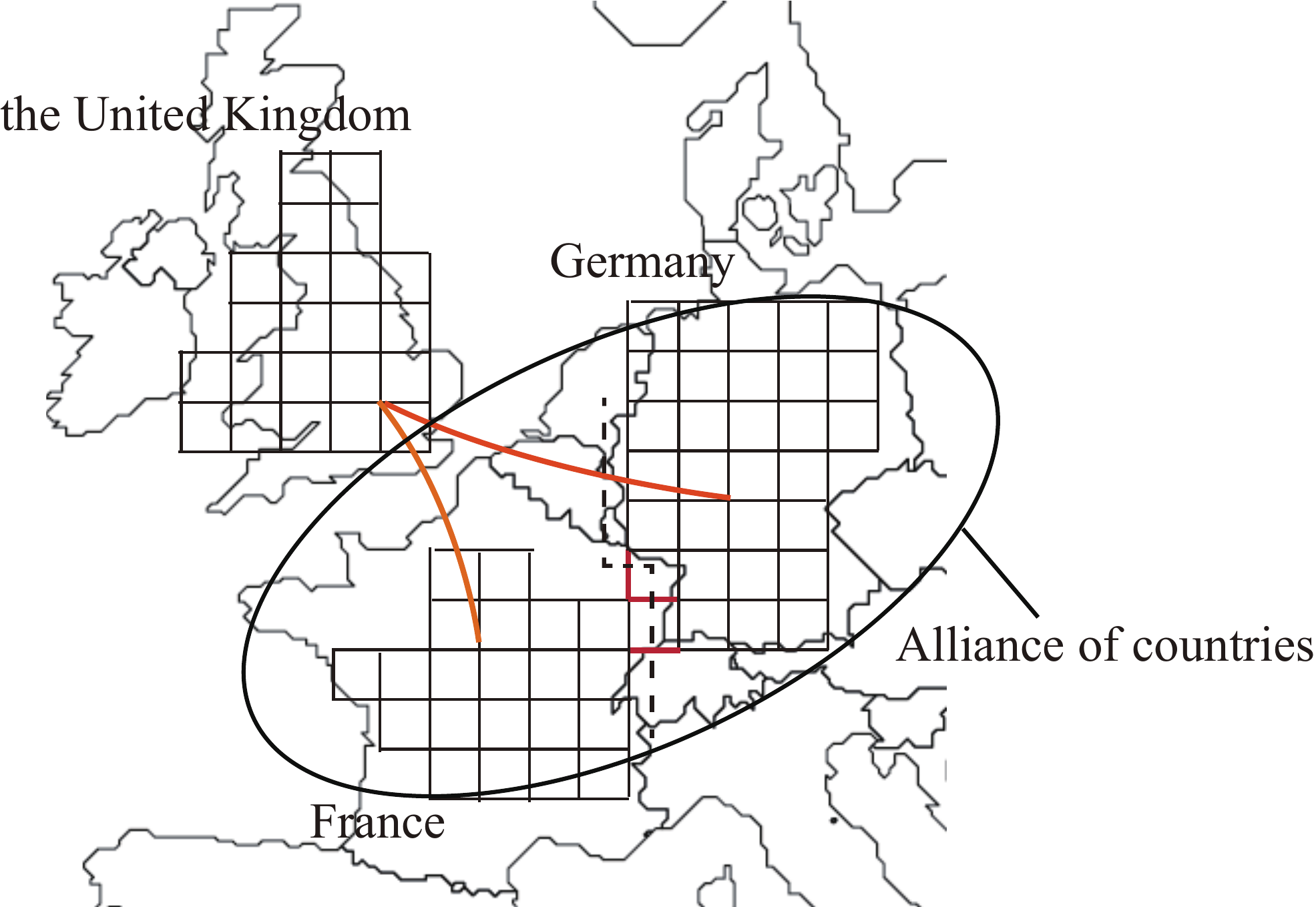}
   \caption{Spatial model of three countries:
the UK, France, and Germany.
Discrete local regions are located
at grid points of square lattices, and goods are transported along the 
lattice.
The number of local regions assigned to each country
reflects the relative sizes of its population and economy.
London, UK, is directly connected to
France (Paris) and Germany (Frankfurt),
as indicated by the red curves. 
France and Germany are continuously connected
by the red grid lines.
     \label{Global--local system 3 countries Intro}}
 \end{figure}

To generalize the conventional two-level hierarchical system of countries and regions commonly examined in economic geography,
this study employs a hierarchical spatial system,\footnote{
 An urban geographical theory 
(central place theory)
seeks to explain such a hierarchy of human settlements in a residential system
\citep{Christaller.1933.1966}.
} 
which, e.g., 
is structured as follows:
\begin{align} \label{multi-hierarchical Intro} &
 & \mbox{Regions} ~\rightarrow~ 
 \mbox{Countries} ~\rightarrow~
\mbox{Alliances of countries}.
\end{align}
There are several types of modeling of the hierarchy of places in \eqref{multi-hierarchical Intro}.\footnote{%
Recent studies have employed various spatial models of economic activity.
\citet{Redding.2016} and
\citet{Redding.Rossi-Hansberg.2017Ch01} use a latitude–longitude grid
of several sizes divided into two countries.
\citet{Allen.Arkolakis.2022} employ an irregular mesh to simulate the US highway network and a regular mesh to model Seattle's road network.
\citet{Fajgelbaum.Schaal.2020} use a $15 \times 15$ square grid and discretized road network models of France, Spain, and Western Europe.
\citet{Ikeda.Murota.2014Ch01} employ a hexagonal lattice to demonstrate the 
self-organization of hexagonal patterns in mobile population distributions.
} 
In this study,
we connect local networks
through two types of global-networks:\footnote{%
See \citet{Kogure.Ikeda.IJBC.2022Ch11} 
and \citet{{Ikeda.Takayama.2024}} for 
the analysis of 
a global--local system.
} 
Figure~\ref{Global--local system}(a) depicts a continuous network,
where distributed local regions are grouped into several countries.
The discrete network, shown in Fig.~\ref{Global--local system}(b),
can represent 
the US highway system as well as
high-speed railway networks and air transportation networks.
The centripetal force prevails
around the trade/transportation hubs
in this  discrete network.
Figure~\ref{Global--local system 3 countries Intro}
depicts a model of
the UK, France, and Germany.\footnote{This
 model combines both continuous and discrete networks: 
France and Germany are connected continuously, whereas the UK is linked to them
via a discrete network.
France and Germany can be treated as an alliance 
of countries within the EU single market.
} 
This model has a three-level hierarchy,
comprising regions, countries, and the alliance of countries.

Direct analysis of the governing equation of 
a hierarchical spatial system 
in a large-scale setting 
is computationally intensive and requires substantial data processing to derive economic insights.
To address this issue,
this study proposes a reduction analysis of the
 hierarchical system.
The governing equation of this system is reduced to
governing equations for higher-level places, such as
countries and alliances of countries. 
In the spatial model of the three countries
in Fig.~\ref{Global--local system 3 countries Intro},
the original region-level governing equation
with 109 degrees of freedom is reduced to the 
country-level and alliance-level equations with only
three and two degrees of freedom, respectively. 

This study aims to present systematic methods
for analyzing cross-country mobility in a large-scale hierarchical
system of places.
Based on the reduced governing equation,
we formulate 
a \textit{population-gradient matrix} to quantify the sensitivity of each country’s population to economic parameters, such as transportation costs.
The sensitivity of general economic parameters
to transport costs is formulated.

The proposed analysis is applied to
international trade competition
within the framework of economic geography models.
We consider internationally mobile workers,
given the progress of global trade liberalization, 
particularly within the EU.
A country is regarded as a winner (loser) when it experiences population gains (losses).
Tariffs are modeled as international trade costs, 
whereas reductions in domestic transportation costs represent
each country’s infrastructure development.
We analyze how changes in tariffs 
and domestic transportation costs
affect population distributions across and within 
countries.\footnote{\label{Behrens-footnote}%
\citet{Behrens.etal.2006,Behrens.etal.2007} 
present a seminal analysis of
the effects of both national transportation costs 
and international trade costs
on domestic population distributions in a two-country, four-region model.
They show that when international trade costs are low, population agglomeration occurs in one region of a country.
In contrast, this paper presents an innovative analysis
 method that can handle the spatial hierarchy in \eqref{multi-hierarchical Intro}
 in a large-scale setting.}

As a concrete example,
we highlight international trade competition in the context of Brexit
using the geographical model of the UK, France, and Germany
shown in Fig.~\ref{Global--local system 3 countries Intro}.
We use square lattices\footnote{%
The importance of the lattice analysis
of the combination of France and Germany
was suggested by Professor J.-F.~Thisse.
} 
to represent the countries’ local networks,
and employ, e.g.,\footnote{The analytical framework proposed here
is applicable to a broad class of 
economic models beyond the Helpman model.
} 
an economic geography 
model \`{a} la Helpman\footnote{\label{HelpmanImportance}%
This model is based on the multi-region version of the new economic geography model of \citet{Helpman.1998Ch01} \citep{Redding.Sturm.2008}.
  This multi-region version is similar to several quantitative spatial models (e.g., \citealp{Allen.Arkolakis.2014Ch01}; \citealp{Becker.Heblich.Sturm.2021}).}
with replicator dynamics.\footnote{%
Many economic geography models 
employ the dynamics 
together with the corresponding static governing equation.
For example, \textit{replicator and logit dynamics}
are used to investigate the stable equilibria of an economic system. 
} 
We examine how the three countries can design trade policies and develop national transport infrastructure to attract mobile workers. 
We find that, 
in the post-Brexit trade environment,
the UK gains population through
its domestic infrastructure development.
Trade liberalization benefits the UK irrespective of the EU’s tariff policy.
In the EU’s post-Brexit trade strategy,
investing in its domestic infrastructure favors the EU.
However, changes in import trade freeness may exert 
either positive or negative effects, 
depending on the tariff type.

This paper is structured as follows.
The next section reviews 
recent trade environment and related studies.
Section~\ref{EconomicModel}
introduces the 
economic geography model.
A reduction method for a hierarchical spatial system
is presented in Section \ref{StaticReduction}.
Section~\ref{ParameterSensitivity}
formulates parameter sensitivity analysis
for economic geography models.
The spatial and economic model of the UK, France, and Germany
is presented, and their joint economic integration
is analyzed in Section~\ref{Analysis of Three Countries}.
Trade strategies for the UK and for the EU
are examined in Sections~\ref{UK-Strategy}
and \ref{EU-Strategy}, respectively.
Section \ref{sec:global_concl} concludes.


\section{Recent Trade Environment and Related Studies}\label{Related Studies}

Average tariff rates in France, the UK, and the US have declined over time 
until recently (1790–2019; \citealp{Irwin.2020}).
Brexit in 2020 cast a shadow over the EU’s global economic integration 
and contributed to a decline in net migration from EU countries to the UK 
\citep{DiIasio.Wahba.2023}.
At present, globalization faces a threat of the resurgence of higher tariffs,
 particularly in the US.
 Changes in a country’s tariff rate often raise serious concerns
among other countries,
as evidenced by recent international trade tensions.
There is an urgent need to assess the impacts of these tariffs and to identify potential winners and losers in international trade competition.

Several studies 
have examined the effects of
uncertainty shocks generated by the 2016 Brexit referendum on labor markets, investment, and trade, as reviewed by \citet{Dhingra.Sampson.2022}.
\citet{Brakman.etal.2023} note: 
 ^^ ^^ As part of its ‘Global Britain’ strategy and as a reaction to 
 its Brexit decision, the UK government is pursuing a series of Free Trade Agreements with countries around the world, $\cdots$.''
\citet{DeLucio.etal.2024} demonstrate that the UK's withdrawal
from the EU reduced
 both Spanish exports to, and imports from, the UK.
\citet{Freeman.etal.2025} find that both UK imports and exports declined 
following the UK's withdrawal.
Among studies focusing on economic disintegration, such as Brexit, 
several studies examine its effects on the spatial distribution of 
workers and firms across countries.
\citet{Commendatore.Kubin.Sushko2} and \citet{Saraiva.Gaspar.2026}
explore how economic disintegration 
affects the spatial distribution of workers 
in a three-country model.
 \citet{Janeba.Schulz.2024} investigate
 how economic disintegration influences firm relocation and national tax policies 
 using a general-equilibrium trade model.

 The cross-country mobility of population and capital has been studied extensively.
\citet{Zeng.Zhao.2010}
theoretically examine how reductions in trade costs affect the spatial distribution of capital across countries in a two-country, four-region model.
They show that whether reductions in trade costs increase the amount of capital in a region depends on the country's population and its
 level of domestic transportation costs.
\citet{Persyn.etal.2023}
quantitatively examine how reduced transportation
costs influence population distribution in the EU and the UK, using a spatial dynamic general-equilibrium model across 267 European NUTS-2 regions.
They investigate the long-run impact of road-transport infrastructure investments under the 2014–2020 European Cohesion Policy and show that the policy can increase the population of targeted regions 
by up to 0.167\%.

An extensive body of research has emerged in quantitative spatial economics (QSE), as reviewed by \citet{Redding.Rossi-Hansberg.2017Ch01}
and \citet{Allen.Arkolakis.2025}.
Key contributions include the following:
\citet{Eaton.Kortum.2002} model international trade.
\citet{Allen.Arkolakis.2014Ch01} estimate the topography of trade costs, productivity, and amenities in the US using an irregular lattice.
\citet{Ahlfeldt.etal.2015} develop a model of internal city structure and apply it to data from thousands of city blocks in Berlin.
\citet{Caliendo.Parro.2015}
quantify the trade and welfare effects of NAFTA.
\citet{Caliendo.etal.2019}
quantify the effect of the China shock on manufacturing employment in the US from 2000 to 2007.
\citet{Redding.2016} quantitatively explores the effect of reductions in transportation costs on population distributions using a 
$20 \times 20$ latitude–longitude grid.
\citet{Behrens.etal.2017} introduce a multi-city general equilibrium model to analyze the influence of spatial frictions.
\citet{Desmet.etal.2018} propose a dynamic theory of spatial growth 
that incorporates realistic geography.
\citet{Behrens.Murata.2021} derive spatial-equilibrium conditions in QSE.
\citet{Fajgelbaum.etal.2020} conduct a simulation of real wage impacts from the US and retaliatory tariffs in a general-equilibrium model of the US economy.

\section{Economic Geography Model}\label{EconomicModel}

We introduce the economic geography model
employed in this paper.

\subsection{Dynamics and Governing Equation}%
\label{DynamicsGoverningEquation}

Many economic geography models 
employ the dynamics $\frac{{\rm d}{\bm \lambda}}{{\rm d}t}
=\bm{F}({\bm \lambda},\bm{\tau})$
together with the associated static governing equation:
 \begin{align}\label{nonlinearGovEq}
 \bm{F}({\bm \lambda},\bm{\tau})={\bf 0},
\end{align}
where $\bm{\lambda}=(\lambda_i)
\in \mathbb{R}\sp{n}$ denotes the vector of independent variables,
$\bm{\tau}=(\tau_k)\in \mathbb{R}\sp{p}$ indicates 
the vector of economic variables, and 
$\bm{F}=(F_i)\in \mathbb{R}^n$
denotes a sufficiently smooth nonlinear function.
A typical example of $\tau_k$ is a transportation cost,
and $\lambda_i$ denotes the population of mobile workers.
Hereafter, $(\cdot)_i$
denotes a variable at place $i \in N=\{1,\ldots,n\}$,
and $(\cdot)_k$ denotes the $k$th economic variable.

We employ the dynamics that satisfy the relations 
 \begin{align} & \label{law for dynamics conservation law} 
 \sum_{i\in N} F_i(\bm{\lambda},\bm{\tau})=0,
 \quad
  \sum_{i\in N} \lambda_i =1,
\end{align}
which are used in the reduction to a simplex 
 in Section~\ref{FurtherReduction}.  
The governing equations of customary dynamics, 
such as replicator and logit dynamics
satisfy these relations.\footnote{
Many economic geography models employ such dynamics together with 
the corresponding static governing equation. In particular, the equilibrium 
population share in QSE models with Fréchet-distributed location 
preferences takes the same multinomial-logit form as the steady state of 
the logit dynamics \citep{Eaton.Kortum.2002, Allen.Arkolakis.2014Ch01, Allen.Arkolakis.2022}.
} 
This generality 
allows our framework to encompass both NEG-type models and 
quantitative spatial models with Fréchet-distributed preferences.

In the international trade analysis
(cf. Sections~\ref{Analysis of Three Countries}--\ref{EU-Strategy}),
we adopt the \textit{replicator dynamics}
(\citealp{TaylorJonker1978Ch01};
\citealp{Sandholm.2010Ch01}):
  \begin{align}\label{eq:model_replicatorCh1}
    F_i(\bm{\lambda},\bm{\tau})= (v_i(\bm{\lambda},\bm{\tau}) - \bar{v}(\bm{\lambda},\bm{\tau})) \lambda_i
\end{align}
\noindent
with the weighted-average utility
$\bar{v} = (\sum_{i\in N} \lambda_i v_i)/\sum_{i\in N} \lambda_i$.
Each worker in place $i$ is assigned an indirect utility $v_i$,
as defined by the economic model, and migrates
across places in search of higher utility.

\subsection{Transportation Cost and Trade Freeness}\label{IcebergTransportationCost}

We adopt the iceberg transportation cost:
when one unit of goods is shipped from place $i$
to place $j$,
only $1/\tau_{ij}$ arrives.
We consider 
\begin{itemize}
\item
 $\tau_\alpha$: the national transportation cost 
 parameter
 within country $\alpha$.
\item
 $\tau_{\alpha\rightarrow \beta}$: 
 the trade cost parameter of exporting goods from country $\alpha$ to country $\beta$.
\end{itemize}
If places $i$ and $j$ are located in the same
country $\alpha$,
there is no trade cost,
and the transportation cost is given by
$\tau_{ij} = \exp(L_{ij} \tau_\alpha)$,
where $L_{ij}$ denotes the road distance between them.
When places $i$ and $j$ belong to different countries $\alpha$ and $\beta$,
and place $i^*$ in country $\alpha$ is directly connected to place $j^*$ in country $\beta$, 
a trade cost
$\exp(\tau_{\alpha\rightarrow \beta})$ emerges,
and the total cost
 between places $i$ and $j$ is given by
\begin{align*} &
\tau_{ij}=\exp (L_{i \, i^*} \, \tau_\alpha
+ L_{j \, j^*} \, \tau_\beta + \tau_{\alpha\rightarrow \beta}).
\end{align*}

We consider \textit{trade freeness parameter}, 
which is inversely related to the transportation cost.
In general economic geography models,
this parameter is defined as
\begin{align*}
& 
\phi_k =\exp[-(\sigma-1)\tau_k]
\qquad 
(k=\alpha,\alpha\rightarrow \beta;~0<\phi_k<1).
\end{align*}
Here,
$\sigma~(>1)$ is an economic parameter
representing the 
constant elasticity of substitution
(cf. Section~\ref{HelpmanModel}).

\subsection{The Helpman Model}\label{HelpmanModel}

We employ the multi-place version of the \citet{Helpman.1998Ch01} model
to analyze international trade competition.\footnote{%
  This model is closely related to several quantitative spatial models,
  such as those developed by \citet{Allen.Arkolakis.2014Ch01}
  and \citet{Becker.Heblich.Sturm.2021}.
} 
We outline this model below
and provide its details in \ref{HelpmanDetails}.

The economy is composed of two sectors:
housing and manufacturing.
Each place has a fixed stock of housing
that cannot be traded across places.
The manufacturing sector produces differentiated goods 
under monopolistic competition
with increasing returns to scale.
These differentiated goods can be traded
but are subject to transport costs across places.
Each place hosts a continuum of firms, 
and each firm produces a single type of differentiated good 
using only labor as the factor of production.
The labor market is perfectly competitive, 
and all firms treat wages as given.

The utility of a worker in place $i$ 
is expressed as
\begin{align}\label{utility}
  u_i = \left(\frac{Q_i}{\mu}\right)^{\mu}
  \left(\frac{h_i}{1 - \mu}\right)^{1 - \mu} 
  \qquad (0 < \mu < 1)
\end{align} 
in terms of the consumption index  $Q_i$
over differentiated traded goods,
the consumption of housing services $h_i$,
and the expenditure share $\mu$ allocated to the consumption of
differentiated goods.
The consumption index $Q_i$ is given by
\begin{align*} &
Q_i =  
\left(
    \sum_{j\in N} 
    \int_{0}^{m_j} q_{ji}(\varphi)^{(\sigma - 1) / \sigma} 
    \,\mathrm{d}\varphi
\right)^{\sigma / (\sigma - 1)}
\end{align*}
in terms of a constant elasticity of substitution (CES) aggregator
with elasticity $\sigma > 1$ between traded goods. 
Here, $m_j$ is the mass of varieties in place $j$ and
$q_{ji}(\varphi)$ is the consumption of the $\varphi$th 
differentiated good produced in place $j$ and consumed in place $i$.

The budget constraint of a worker in place $i$ is 
\begin{align*}
    \left( 
        \sum_{j\in N} \int_{0}^{m_j} p_{ji}(\varphi)q_{ji}(\varphi) \,\mathrm{d}\varphi
    \right) + 
    r_i h_i = Y_i .
\end{align*}
Here,
 $r_i$ is the housing price and
 $p_{ji}(\varphi)$ is the price of the $\varphi$th differentiated good
produced in place $j$ and consumed in place $i$.
Utility maximization yields the following:
\begin{align}\label{qijhivi}
    & q_{ji}(\varphi) = \frac{\mu Y_i}{p_{ji}(\varphi)} 
    \left(
        \frac{p_{ji}(\varphi)}{P_i} 
    \right)^{1 - \sigma}, \quad
    h_i = \frac{(1 - \mu) Y_i}{r_i}, \quad
    v_i = \frac{Y_i}{P_i^\mu r_i^{1 - \mu}}.
\end{align}
Here, $P_i$ is the price index: 
\begin{align}\label{price_index}
    P_i = 
    \left(
        \sum_{j\in N} \int_{0}^{m_j} p_{ji}(\varphi)^{1 - \sigma} \, \mathrm{d}\varphi
    \right)^{1 / (1 - \sigma)}.
\end{align}

Under utility maximization, profit maximization, and market clearing,
the indirect utility is given by
\begin{align}\label{closed_indirect_utility}
  v_i = \zeta
  \lambda_{i}^{\mu - 1} Y_i^\mu 
  \left(
    \sum_{j \in N} \lambda_j (\tau_{ji} w_j)^{1 - \sigma}
  \right )^{\mu / (\sigma - 1)},
\end{align}
where $\zeta$ is a constant that depends on exogenous parameters
and $Y_i$ is 
each worker's income in place $i$, which consists of wage income $w_i$
and income from housing services.  

The static spatial distribution of workers
$\bm{\lambda} = (\lambda_i)$
is determined by the governing equation in \eqref{nonlinearGovEq},
together with the replicator dynamics \eqref{eq:model_replicatorCh1}
and the indirect utility \eqref{closed_indirect_utility}.


\section{Reduction Method for Hierarchical Spatial System}\label{StaticReduction}

We consider a hierarchical spatial system that is
organized from lower to higher levels of places.
 As an example of this system, we employ
\begin{align} 
\label{multi-hierarchical urban} &
 & \mbox{Regions} ~\rightarrow~ 
 \mbox{Countries} ~\rightarrow~
\mbox{Alliances of countries},
\end{align}
while the proposed theoretical framework is 
extendable to more general hierarchies.
We propose a systematic method
for analyzing this hierarchical system within the framework of 
 economic geography.
We formulate the governing equation at a particular level
in the hierarchical system,
and derive information at any higher level 
using this method.

\subsection{Fundamental and Target Levels}

We consider
$n$ lower-level places,
referred to the \textit{fundamental level}, 
and $n^{\rm L}$ higher-level ones $(n>n^{\rm L})$,
referred to the \textit{target level} L.
In the hierarchy in \eqref{multi-hierarchical urban}, 
the  fundamental level corresponds to regions,
while 
the target level to countries or alliances of countries,
which we express ${\rm L}={\rm C}$ or A.

We aim to express target-level vectors $\bm{\lambda}^{\rm L}$ and $\bm{F}^{\rm L}$ in terms of
the fundamental-level vectors $\bm{\lambda}$ and $\bm{F}$.
We index fundamental-level places by integers $i\in N=\{1,\ldots,n\}$.
Each target-level place $\alpha$ comprises 
several fundamental-level places indexed by a subset
$\mathcal{C}_\alpha$ of $N$
$(N=\bigcup_{\alpha=1}^{n^{\rm L}}\mathcal{C}_\alpha)$.
The population component $\lambda_\alpha^{\rm L}$
of the place $\alpha$ is obtained by
summing the populations of the 
fundamental-level places belonging to $\alpha$ as
$\lambda_\alpha^{\rm L}=\sum_{i\in \mathcal{C}_\alpha}\lambda_i$.
Similarly, 
we define the governing equation component
as $F_\alpha^{\rm L}=\sum_{i\in \mathcal{C}_\alpha}F_i$. 
Then, we assemble these components
$\lambda_\alpha^{\rm L}$ and $F_\alpha^{\rm L}$ 
to define the target-level vectors as
\[
\bm{\lambda}^{\rm L}=(\lambda_\alpha^{\rm L}
=\sum_{i\in \mathcal{C}_\alpha}\lambda_i 
\mid 
\alpha=1,\ldots,n^{\rm L}), \quad
\bm{F}^{\rm {\rm L}}=(F_\alpha^{{\rm L}}
=\sum_{i\in \mathcal{C}_\alpha}F_i 
\mid \alpha = 1,\ldots,n^{\rm {\rm L}}).
\]
The target-level component $F_\alpha^{\rm L}$ 
has economic significance, as explained in Remark~\ref{Ai Implication}.

\begin{remark}\label{Ai Implication}
For the replicator dynamics in \eqref{eq:model_replicatorCh1},
we have
\begin{align*}
F_\alpha^{\rm L} &
= \sum_{i\in \mathcal{C}_\alpha}(v_i-\bar{v})\lambda_i
=\sum_{i\in \mathcal{C}_\alpha}v_i\lambda_i
 -\bar{v}\sum_{i\in \mathcal{C}_\alpha}\lambda_i= 
 \left(\bar{v}_\alpha-\bar{v}\right)\lambda_\alpha^{\rm L}
 \cr
 & =  \{(\mbox{weighted average utility $\bar{v}_\alpha$ of place $\alpha$})
 \\ & \hspace{10mm}
  -(\mbox{weighted average utility $\bar{v}$ of the whole world}) \}
 \\ & \hspace{55mm}
 \times (\mbox{place $\alpha$'s population $\lambda_\alpha^{\rm L}$}) ,
\end{align*}
where 
$\bar{v}_\alpha=(\sum_{i\in \mathcal{C}_\alpha}v_i\lambda_i)/\lambda_\alpha^{\rm L}$.
Thus, any target-level equation $F_\alpha^{\rm L}$
inherits the form of replicator dynamics of the fundamental level.
\hfill$\Box$
\end{remark}

\subsection{Reduction of Governing Equation}

As a preliminary step for the reduction analysis,
we derive the incremental governing equation.
By considering two solutions
$({\bm \lambda},\bm{\tau})$ and 
$({\bm \lambda}+{\rm d}{\bm \lambda}, 
\bm{\tau}+{\rm d} \bm{\tau})$
 of \eqref{nonlinearGovEq},
we obtain a fundamental-level incremental governing equation
\begin{align}\label{IncNonlinear}
{\rm d} \bm{F}({\bm \lambda},\bm{\tau}) & \equiv
 \bm{F}({\bm \lambda}+{\rm d}{\bm \lambda},
\bm{\tau}+{\rm d} \bm{\tau}) - \bm{F}({\bm \lambda},\bm{\tau})
\nonumber \\
& = J({\bm \lambda},\bm{\tau}){\rm d}{\bm \lambda}
 +G({\bm \lambda},\bm{\tau}){\rm d} \bm{\tau} +{\rm h.o.t.}=
 \mbox{$\bm{0}$}
\end{align}
for infinitesimal
increments ${\rm d}{\bm \lambda}$ and ${\rm d} \bm{\tau}$.
Here, $J={\partial {\bm F}}/{\partial {\bm \lambda}}$ and
$G=\partial {\bm F}/\partial \bm{\tau}$ are Jacobian matrices,
and h.o.t.~denotes higher order terms.
The form of $J$ depends on the dynamics under consideration,
while the form of $G$ depends also on the economic modeling.
In the replicator dynamics in
\eqref{eq:model_replicatorCh1}, 
we have
\begin{align} \label{GJacobianGeneral}
G=\frac{\partial \bm{F}}{\partial \bm{\tau}}
&=
\mathrm{diag}[\bm{\lambda}]\,
\left(
  \left(
    \frac{\partial \bm{v}}{\partial \bm{\tau}}
  \right)
-
\bm{\lambda}^\top
\left(
  \frac{\partial \bm{v}}{\partial \bm{\tau}}
\right) \bm{1}
\right),
\end{align}
where $\bm{1} = ({1, \ldots, 1})^{\top} \in \mathbb{R}\sp{n}$.
The explicit form of $\partial \bm{v} / \partial \bm{\tau}$ in the Helpman model is given in \eqref{par_equ_indi}.

We construct a transformation
from the fundamental-level
vectors $\bm{\lambda}$ and $\bm{F}$ to 
target-level ones
$\bm{\lambda}^{\rm L}$ and $\bm{F}^{\rm L}$ 
as\footnote{Since 
$\tilde{H}$ is invertible, $\bm{F} = \bm{0}$
 is equivalent to $\bm{F}^{\rm L}= \bm{0}$
  and $\bm{B} = \bm{0}$.
} 
\begin{align} \label{coordinate transformation} &
{\bm \lambda}=H\begin{pmatrix}
{\bm \lambda}^{\rm L} \cr
{\bm b}
\end{pmatrix}=
(H_a,H_b)\begin{pmatrix}
{\bm \lambda}^{\rm L} \cr
{\bm b}
\end{pmatrix}
=H_a {\bm \lambda}^{\rm L}+H_b{\bm b},
~~
\begin{pmatrix}
{\bm A} \cr
{\bm B}
\end{pmatrix}=\tilde{H}^\top \bm{F}=
\begin{pmatrix}
\tilde{H}_a^\top \bm{F} \cr
H_b^\top \bm{F}
\end{pmatrix} ,~~
\end{align}
using transformation matrices 
$H=(H_a,H_b)$ and $\tilde{H}=(\tilde{H}_a,H_b)$,
given in \ref{TransMatH},
and auxiliary vectors
  $\bm{b}$ and $\bm{B}$ of 
  dimension ($n-n^{\rm L}$).
By \eqref{coordinate transformation},
we can transform the Jacobian matrices $J$ 
and $G$, respectively, to 
\begin{align*} &
\tilde{H}^\top JH=
\begin{pmatrix}
J_{a} & J_{ab} \cr
J_{ba} & J_{b} 
\end{pmatrix},
\quad
\tilde{H}^\top G=
\begin{pmatrix}
G_a \cr
G_b
\end{pmatrix}.
\end{align*}

Using this transformation,
we can reduce the fundamental equation in \eqref{IncNonlinear} 
to the target-level equation
presented in the following lemma.\footnote{%
This elimination process is known as Lyapunov--Schmidt reduction in nonlinear mathematics  
 and is used
for a different purpose, namely, the bifurcation analysis of symmetric systems
(e.g., \citealp{Ikeda.Murota.2019Ch01}).  
} 

\begin{lemma}\label{country-levelCondense}
Under the condition that 
$J_{b}$ is nonsingular,
the target-level governing equation is given by
\begin{align} \label{Reduced eq general} &
 {\rm d} \bm{F}^{\rm L}({\bm \lambda}^{\rm L},\bm{\tau})  =
 J^{\rm L}{\rm d}{\bm \lambda}^{\rm L}
 +G^{\rm L}{\rm d}\bm{\tau} +{\rm h.o.t.}=
 \mbox{$\bm{0}$} 
\end{align}
with ${J}^{\rm L}=J_{a}-J_{ab}J_{b}^{-1} J_{ba}$ and
${G}^{\rm L}=G_a - J_{ab} J_{b}^{-1} G_b$.
\end{lemma}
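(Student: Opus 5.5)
The plan is to apply the coordinate change \eqref{coordinate transformation} to the incremental equation \eqref{IncNonlinear} and then eliminate the auxiliary increment ${\rm d}\bm{b}$ by a Schur-complement step. Because $H$ is a constant matrix, increments transform linearly: ${\rm d}\bm{\lambda}=H_a{\rm d}\bm{\lambda}^{\rm L}+H_b{\rm d}\bm{b}$. Premultiplying \eqref{IncNonlinear} by $\tilde{H}^\top$ is legitimate because $\tilde{H}$ is invertible, so the system is unchanged. Using the block forms of $\tilde{H}^\top JH$ and $\tilde{H}^\top G$, the equation splits into two blocks:
\begin{align*}
{\rm d}\bm{A} &= J_a{\rm d}\bm{\lambda}^{\rm L}+J_{ab}{\rm d}\bm{b}+G_a{\rm d}\bm{\tau}+{\rm h.o.t.}=\bm{0},\\
{\rm d}\bm{B} &= J_{ba}{\rm d}\bm{\lambda}^{\rm L}+J_b{\rm d}\bm{b}+G_b{\rm d}\bm{\tau}+{\rm h.o.t.}=\bm{0}.
\end{align*}

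Next, since $J_b$ is nonsingular, I solve the second block for the auxiliary increment. At linear order this gives ${\rm d}\bm{b}=-J_b^{-1}(J_{ba}{\rm d}\bm{\lambda}^{\rm L}+G_b{\rm d}\bm{\tau})+{\rm h.o.t.}$ This is the implicit function theorem applied to $\bm{B}=H_b^\top\bm{F}(H_a\bm{\lambda}^{\rm L}+H_b\bm{b},\bm{\tau})=\bm{0}$ with respect to $\bm{b}$. It shows that $\bm{b}$ is locally a smooth function of $(\bm{\lambda}^{\rm L},\bm{\tau})$, so the neglected terms are genuinely higher order. Substituting into the first block yields
\[
(J_a-J_{ab}J_b^{-1}J_{ba})\,{\rm d}\bm{\lambda}^{\rm L}+(G_a-J_{ab}J_b^{-1}G_b)\,{\rm d}\bm{\tau}+{\rm h.o.t.}=\bm{0},
\]
which is \eqref{Reduced eq general} with the stated $J^{\rm L}$ and $G^{\rm L}$.

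The remaining, and main, point is to identify ${\rm d}\bm{A}=\tilde{H}_a^\top{\rm d}\bm{F}$ with ${\rm d}\bm{F}^{\rm L}$. I expect this to follow from the construction of $\tilde{H}_a$ in \ref{TransMatH}. There, $\tilde{H}_a$ should be the $0$--$1$ aggregation matrix whose $\alpha$th column is the indicator vector of $\mathcal{C}_\alpha$, so that $\tilde{H}_a^\top\bm{F}=(\sum_{i\in\mathcal{C}_\alpha}F_i)=\bm{F}^{\rm L}$. Correspondingly, $H_a{\bm \lambda}^{\rm L}$ should reproduce the country totals, with $H_b$ spanning the within-group redistributions; this is what makes $\bm{\lambda}^{\rm L}$ in the parametrization coincide with the aggregated populations. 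I would verify this consistency condition, namely that $\tilde{H}_a^\top H_a$ is the identity and $\tilde{H}_a^\top H_b$ is zero, from the explicit matrices.

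Finally, on the solution set $\bm{B}=\bm{0}$, $\bm{F}^{\rm L}$ is then a function of $(\bm{\lambda}^{\rm L},\bm{\tau})$ alone, which justifies writing $\bm{F}^{\rm L}(\bm{\lambda}^{\rm L},\bm{\tau})$. Its linearization is exactly the Schur-complement expression obtained above.
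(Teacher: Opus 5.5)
Your proposal is correct and follows the paper's proof in Appendix~\ref{TransMatH}: transform by $\tilde{H}^\top J H$ and $\tilde{H}^\top G$, split into the $\bm{A}$- and $\bm{B}$-blocks, solve the $\bm{B}$-block for ${\rm d}\bm{b}$ using nonsingular $J_b$, and substitute into the $\bm{A}$-block. Your extra steps make explicit what the paper takes as definitional by setting $\bm{a}=\bm{\lambda}^{\rm L}$ and $\bm{A}=\bm{F}^{\rm L}$: the consistency checks $\tilde{H}_a^\top H_a=I$ and $\tilde{H}_a^\top H_b=0$ do hold, since the columns of each $W_\alpha$ sum to zero, and the implicit-function-theorem remark justifies treating the neglected terms as higher order.
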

\begin{proof}
See \ref{TransMatH} for the proof.
\end{proof}

  \subsection{Further Reduction to Simplex}\label{FurtherReduction}

We introduced above the reduction from $n$ lower-level places to $n^{\rm L}$ higher-level ones
$(n>n^{\rm L})$.
Certain dynamics---such as replicator and logit dynamics\allowbreak---permit
a further reduction to an $(n^{\rm L} - 1)$-dimensional simplex
by exploiting Lemma~\ref{Lemma reductions}.

\begin{lemma}\label{Lemma reductions}
For any dynamics that satisfies 
 $\sum_{i\in N} F_i(\bm{\lambda},\bm{\tau})=0$
  and $\sum_{i\in N} \lambda_i =1$
in \eqref{law for dynamics conservation law},
 the following
target-level relations hold.
\begin{align}
& 
\sum_{\alpha=1}^{n^{\rm L}} {\rm d}\lambda_\alpha^{\rm L}=0, \quad
\sum_{\alpha=1}^{n^{\rm L}} {\rm d}F_\alpha^{\rm L}=0; \qquad
\sum_{\alpha=1}^{n^{\rm L}} \frac{\partial F_\alpha^{\rm L}}{\partial \tau_k}=0
 \quad (k=1,\ldots,p).
\label{conservation m} 
\end{align}
\end{lemma}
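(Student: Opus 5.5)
The plan is to use the fact that the target-level places partition the fundamental level. Since $N=\bigcup_{\alpha=1}^{n^{\rm L}}\mathcal{C}_\alpha$ with the $\mathcal{C}_\alpha$ pairwise disjoint, any sum over $i\in N$ splits as $\sum_{\alpha=1}^{n^{\rm L}}\sum_{i\in\mathcal{C}_\alpha}$. Disjointness is implicit in the construction, because each region belongs to exactly one country or alliance; I would state this assumption explicitly at the outset. With it, the definitions $\lambda_\alpha^{\rm L}=\sum_{i\in\mathcal{C}_\alpha}\lambda_i$ and $F_\alpha^{\rm L}=\sum_{i\in\mathcal{C}_\alpha}F_i$ immediately give
\[
\sum_{\alpha=1}^{n^{\rm L}}\lambda_\alpha^{\rm L}=\sum_{i\in N}\lambda_i=1,
\qquad
\sum_{\alpha=1}^{n^{\rm L}}F_\alpha^{\rm L}(\bm{\lambda},\bm{\tau})=\sum_{i\in N}F_i(\bm{\lambda},\bm{\tau})=0 .
\]
Both identities come from \eqref{law for dynamics conservation law}, and both hold for all admissible $(\bm{\lambda},\bm{\tau})$.

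The three claimed relations then follow by differentiation.

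\textbf{First relation.} The constraint $\sum_\alpha\lambda_\alpha^{\rm L}=1$ holds at both states $\bm{\lambda}$ and $\bm{\lambda}+{\rm d}\bm{\lambda}$, so subtracting gives $\sum_\alpha{\rm d}\lambda_\alpha^{\rm L}=0$.

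\textbf{Second relation.} The identity $\sum_\alpha F_\alpha^{\rm L}\equiv 0$ holds identically in $(\bm{\lambda},\bm{\tau})$. So for the increment defined as in \eqref{IncNonlinear}, $\sum_\alpha{\rm d}F_\alpha^{\rm L}=\sum_\alpha F_\alpha^{\rm L}(\bm{\lambda}+{\rm d}\bm{\lambda},\bm{\tau}+{\rm d}\bm{\tau})-\sum_\alpha F_\alpha^{\rm L}(\bm{\lambda},\bm{\tau})=0-0=0$. This holds exactly, not just to first order.

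\textbf{Third relation.} Because $\sum_{i\in N}F_i(\bm{\lambda},\bm{\tau})=0$ holds for all $\bm{\tau}$ at fixed $\bm{\lambda}$, I would differentiate with respect to $\tau_k$. Using the smoothness of $\bm{F}$, this gives $\sum_{i\in N}\partial F_i/\partial\tau_k=0$, and regrouping by $\mathcal{C}_\alpha$ yields $\sum_\alpha\partial F_\alpha^{\rm L}/\partial\tau_k=0$. Equivalently, $\bm{1}^\top G=\bm{0}^\top$. As a consistency check, this can be read off directly from \eqref{GJacobianGeneral} for the replicator case, since $\bm{1}^\top{\rm diag}[\bm{\lambda}]=\bm{\lambda}^\top$ and $\bm{\lambda}^\top\bm{1}=1$.

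The main subtlety is interpretive rather than computational. The identity $\sum_i F_i=0$ must be read as holding for all $(\bm{\lambda},\bm{\tau})$ in the relevant domain, at least on the simplex $\sum_i\lambda_i=1$, not merely at equilibria; otherwise differentiating it in $\tau_k$ would not be justified. I would therefore state that \eqref{law for dynamics conservation law} is understood as an identity on the simplex, as it is for replicator and logit dynamics. Since $\bm{\tau}$ varies freely there, the $\tau_k$-derivative is legitimate.

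A second point of care is the meaning of ${\rm d}F^{\rm L}_\alpha$ when the target-level equation is written in reduced variables as in Lemma~\ref{country-levelCondense}. I would argue directly at the fundamental level, where ${\rm d}F^{\rm L}_\alpha=\sum_{i\in\mathcal{C}_\alpha}{\rm d}F_i$, which avoids that issue.
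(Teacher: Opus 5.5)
Your proof is correct and follows the paper's argument. The paper likewise sums the defining relations over the partition to obtain $\sum_{\alpha}\lambda_\alpha^{\rm L}=1$ and $\sum_{\alpha}F_\alpha^{\rm L}=0$, and then reads off all three relations by taking increments and $\tau_k$-derivatives of these identities; your explicit remarks on the disjointness of the $\mathcal{C}_\alpha$ and on reading $\sum_{i\in N}F_i=0$ as an identity in $(\bm{\lambda},\bm{\tau})$ rather than only at equilibria just make precise what the paper's short proof uses implicitly.
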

\begin{proof}
The proof is given in \ref{Proof of Lemma reductions}.
\end{proof}

  \begin{figure}
   \centering\small
   \begin{tabular}{c@{\hspace{8mm}}c} 
 \includegraphics[scale=0.43]{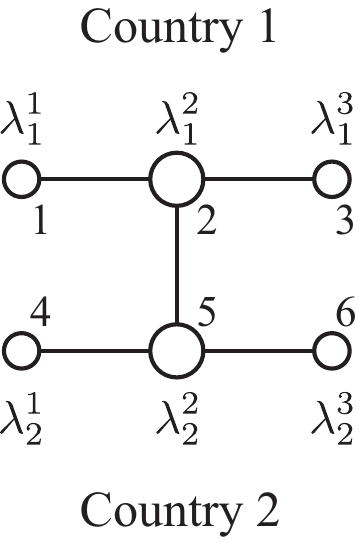} &
 \includegraphics[scale=0.43]{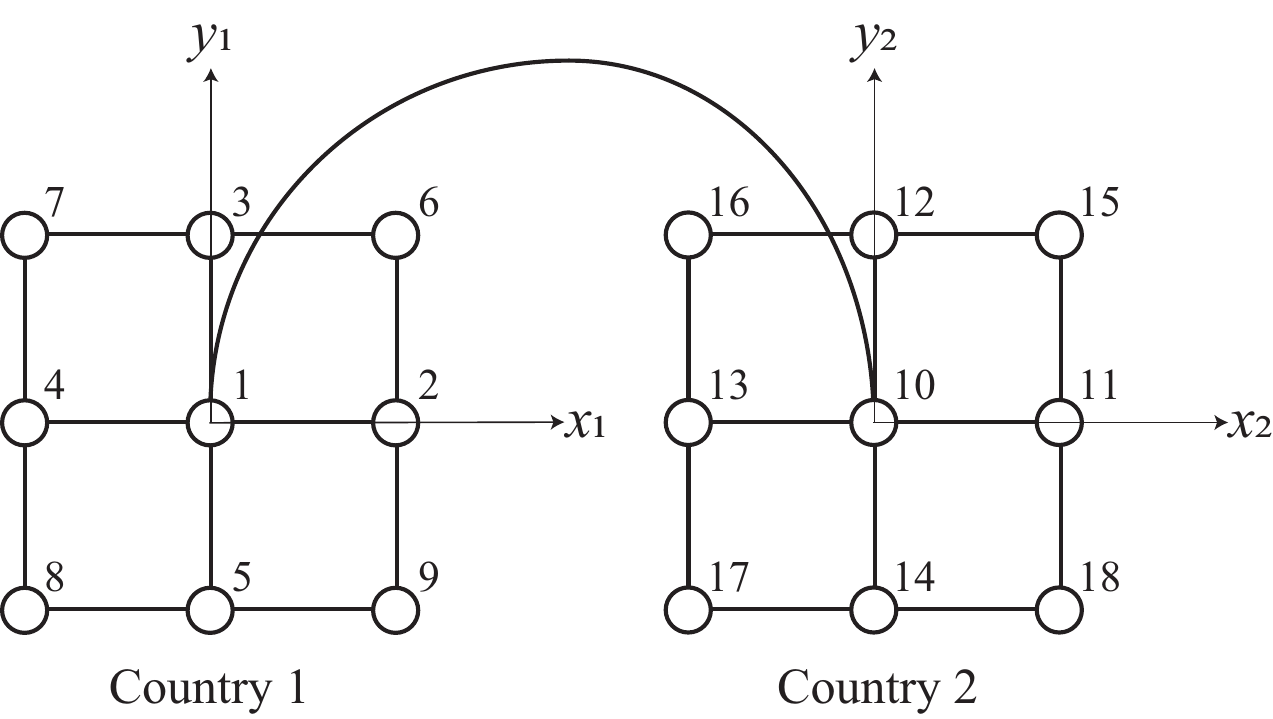} \\
(a) $3\times 2$ places &
(b) $9\times 2$ places
\end{tabular}
 
   \caption{A hierarchical spatial system consisting of two countries,
comprising several regions. 
     \label{2countries}}
 \end{figure}

We begin with a simple case of two countries ($n^{\rm L}=n^{\rm C}=2$)
and a single economic parameter $\bm{\tau}=\tau$
(cf.~Fig.~\ref{2countries}).
We suppress the superscript $(\cdot)^{\rm C}$ indicating
country-level variables.
The reduced two-dimensional system of equations for this case
is given by
\begin{align} \label{Reduced eq explicit} &
\begin{pmatrix}
{\rm d}F_1 \cr {\rm d}F_2
\end{pmatrix}
=
\begin{pmatrix}
J_{11} & J_{12} \cr
J_{21} & J_{22}
\end{pmatrix}
\begin{pmatrix}
{\rm d}\lambda_1 \cr {\rm d}\lambda_2
\end{pmatrix}
 +
\begin{pmatrix}
\frac{\partial F_1}{\partial \tau} \cr
\frac{\partial F_2}{\partial \tau}
\end{pmatrix}
{\rm d}\tau
+{\rm h.o.t.}={\bm 0}.
\end{align}
Using ${\rm d}\lambda_2 = -{\rm d}\lambda_1$
and
$\frac{\partial F_1}{\partial \tau} = -\frac{\partial F_2}{\partial \tau} 
\equiv g$,
which follow from \eqref{conservation m} with $n^{\rm L}=2$, and 
the projection matrix
$P=(1,-1)^\top$ 
(cf.~$n^{\rm L}=2$ in \eqref{PmatrixDefine}),
we obtain
\[
{\rm d}F_1-{\rm d}F_2=P^\top
\begin{pmatrix}
{\rm d}F_1 \cr {\rm d}F_2
\end{pmatrix},
\quad
\begin{pmatrix}
{\rm d}\lambda_1 \cr {\rm d}\lambda_2
\end{pmatrix}
=P{\rm d}\lambda_1,
\quad
\begin{pmatrix}
\frac{\partial F_1}{\partial \tau} \cr
\frac{\partial F_2}{\partial \tau}
\end{pmatrix}=Pg.
\]
Using these relations, we reduce the two-dimensional equation
in \eqref{Reduced eq explicit} 
to a one-dimensional equation
in terms of ${\rm d}\lambda_1$
(cf.~\ref{Proof 1D simplex}) as follows:
\begin{align} 
{\rm d}F_1-{\rm d}F_2  
    & =(J_{11} - J_{12} - J_{21} + J_{22}) \, {\rm d}\lambda_1
    +2g \, {\rm d}\tau
    +{\rm h.o.t.}=0.
    \label{1D simplex}
\end{align}

We next generalize the two-country case to
the $n^{\rm L}$-dimensional governing equation for target-level L
in \eqref{Reduced eq general}, 
as presented in Proposition~\ref{Prop for incremental}.

\begin{proposition}\label{Prop for incremental}
The governing equation 
in the $(n^{\rm L}-1)$-dimensional simplex is given by
\begin{align} \label{Gtau=Jd} &
  \tilde{J} {\rm d}\tilde{\bm{\lambda}}^{\rm L}+
\tilde{G}{\rm d}\bm{\tau}
+{\rm h.o.t.}={\bm 0}
\end{align}
with a vector
${\rm d} \tilde{\bm{\lambda}}^{\rm L}
=({\rm d}\lambda_1, \ldots, {\rm d}\lambda_{n^{\rm L}-1})^\top$,
an $(n^{\rm L}-1)\times (n^{\rm L}-1)$
matrix $\tilde{J}$, and
an $(n^{\rm L}-1)\times p$ matrix
$\tilde{G}$.
 \end{proposition}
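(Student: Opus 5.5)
The plan is to generalize the two-country computation leading to \eqref{1D simplex}, starting from the target-level equation \eqref{Reduced eq general} of Lemma~\ref{country-levelCondense}. We use the conservation relations of Lemma~\ref{Lemma reductions} to remove one redundant coordinate and one redundant equation.

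\textbf{Step 1: a projection matrix.} Define the $n^{\rm L}\times(n^{\rm L}-1)$ matrix
\[
P=\begin{pmatrix} I_{n^{\rm L}-1} \cr -\bm{1}^\top \end{pmatrix},
\]
so that $P=(1,-1)^\top$ when $n^{\rm L}=2$. By the first relation in \eqref{conservation m}, $\sum_\alpha {\rm d}\lambda_\alpha^{\rm L}=0$, which gives ${\rm d}\lambda_{n^{\rm L}}^{\rm L}=-\sum_{\alpha<n^{\rm L}}{\rm d}\lambda_\alpha^{\rm L}$. Hence ${\rm d}\bm{\lambda}^{\rm L}=P\,{\rm d}\tilde{\bm{\lambda}}^{\rm L}$.

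\textbf{Step 2: the parameter columns.} By the third relation in \eqref{conservation m}, each column of $G^{\rm L}=\partial\bm{F}^{\rm L}/\partial\bm{\tau}$ sums to zero. The columns therefore lie in $\mathrm{range}(P)$, and we may write $G^{\rm L}=P\tilde{G}_0$, where $\tilde{G}_0$ is the first $n^{\rm L}-1$ rows of $G^{\rm L}$.

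\textbf{Step 3: contract the equations.} Premultiply \eqref{Reduced eq general} by $P^\top$. Since ${\rm d}\bm{F}^{\rm L}=\bm{0}$, this yields
\[
P^\top J^{\rm L}P\,{\rm d}\tilde{\bm{\lambda}}^{\rm L}+P^\top G^{\rm L}\,{\rm d}\bm{\tau}+{\rm h.o.t.}=\bm{0}.
\]
Set $\tilde J=P^\top J^{\rm L}P$, which is $(n^{\rm L}-1)\times(n^{\rm L}-1)$, and $\tilde G=P^\top G^{\rm L}=P^\top P\,\tilde G_0$, which is $(n^{\rm L}-1)\times p$. This gives \eqref{Gtau=Jd}. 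For $n^{\rm L}=2$ the formulas reproduce the coefficients of \eqref{1D simplex}: $\tilde J=J_{11}-J_{12}-J_{21}+J_{22}$ and $\tilde G=2g$.

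\textbf{Step 4: no information is lost.} This is the step I expect to be the main obstacle, since it is what makes the reduction legitimate rather than a mere consequence. The second relation in \eqref{conservation m}, $\sum_\alpha{\rm d}F^{\rm L}_\alpha=0$, says ${\rm d}\bm{F}^{\rm L}\in\mathrm{range}(P)=\{\bm{x}:\bm{1}^\top\bm{x}=0\}$. On this subspace $P^\top$ is injective, because $P^\top P=I+\bm{1}\bm{1}^\top$ is positive definite. So $P^\top{\rm d}\bm{F}^{\rm L}=\bm{0}$ if and only if ${\rm d}\bm{F}^{\rm L}=\bm{0}$. The $(n^{\rm L}-1)$ projected equations are thus equivalent to the full system, with the one dropped equation being redundant.

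The higher-order terms pass through the fixed linear maps $P$ and $P^\top$ unchanged in order. Any other full-rank basis of the simplex tangent space would serve equally well; the choice above is made only to match the paper's two-country convention.
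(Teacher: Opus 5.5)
Your proposal is correct and follows the paper's proof essentially step for step. It uses the same matrix $P$ and the same identities ${\rm d}\bm{\lambda}^{\rm L}=P\,{\rm d}\tilde{\bm{\lambda}}^{\rm L}$ and $G^{\rm L}=P\bar G$ (your $\tilde G_0$ is the paper's $\bar G$), and arrives at the same $\tilde J=P^\top J^{\rm L}P$ and $\tilde G=P^\top P\bar G$. Your Step 4 adds something the paper only asserts (that the second relation of \eqref{conservation m} lets one equation be eliminated): by positive definiteness of $P^\top P=I+\bm{1}\bm{1}^\top$, you show rigorously that the projected system is equivalent to the full one.
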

\begin{proof}
The proof and the expressions of $\tilde{J}$ and $\tilde{G}$
are presented in \ref{Proof of Prop for incremental}.\end{proof}

\OMIT{
\subsection{Inverse Analysis}

In economic geography models,
population distributions are typically derived
for fixed economic parameters.
In contrast, 
we present a multiparameter inverse analysis
that identifies a specific set of economic parameters 
${\bm \tau} = {\bm \tau}^*$
that realizes a targeted
population distribution 
${\bm \lambda}^{\rm L} = ({\bm \lambda}^{\rm L})^{\rm *}$.
Section~\ref{UK'sContourMap} presents the application of this analysis
and \ref{InverseDetails} presents the details of the numerical inverse analysis.

In the case of $p=n^{\rm L}-1$
(see Remark~\ref{RemarkNon-Inverse} for other cases), 
the matrix
$\tilde{G}$
in \eqref{Gtau=Jd}
becomes a square matrix
and is generically invertible.
Then, \eqref{Gtau=Jd} can be solved as
\begin{align} \label{IncreReduced} &
{\rm d}\bm{\tau}=-\tilde{G}^{-1}\tilde{J} {\rm d}\tilde{\bm{\lambda}}^{\rm L}
+{\rm h.o.t.}
\end{align}
The inverse problem can then be formulated as follows:
\begin{align} \label{Problem Setting} 
\mbox{Find ${\bm \tau}={\bm \tau}^*$
such that ${\bm \lambda}^{\rm L}=
({\bm \lambda}^{\rm L})^*$, based on \eqref{IncreReduced}}.
\end{align}

\begin{remark}\label{RemarkNon-Inverse}
If $p < n^{\rm L} - 1$,
the inverse problem in \eqref{Problem Setting} cannot be solved exactly.
If $p > n^{\rm L} - 1$,
the number of economic parameters exceeds the number of equations,
yielding a set of non-unique solutions for $\bm{\tau}^*$.
Although $G$ is a square matrix if $p = n^{\rm L}$,
$G$ is singular and is not invertible due to the third 
relation in \eqref{conservation m};
accordingly, the reduction to the simplex 
in Proposition~\ref{Prop for incremental}
is mandatory.\hfill $\Box$
\end{remark}
}

\section{Parameter Sensitivity Analysis}\label{ParameterSensitivity}

We present a sensitivity analysis for the reduced system first
as a primary contribution of this paper.
Next, the direct analysis for the original system is presented
using the trade value as an example.

\subsection{Analysis of Reduced System by Population Gradient Matrix}
\label{ReductionSingleCountry}

In preparation for the sensitivity analysis
of the target-level governing equation.
This equation in Lemma \ref{country-levelCondense}
is solved for population increment 
${\rm d}{\bm \lambda}^{\rm L}$
as follows.

\begin{proposition}\label{RecursiveCondense}
When ${J}^{\rm L}$ is non-singular,
we have an explicit
${\rm d}{\bm \lambda}^{\rm L}$ versus
${\rm d}\bm{\tau}$ relation: 
\begin{align*} 
 {\rm d}{\bm \lambda}^{\rm L} & =
 T^{\rm L} {\rm d}\bm{\tau}
    +{\rm h.o.t.} 
\end{align*}
with a
\textit{population-gradient matrix}
 $T^{\rm L}=-({J}^{\rm L})^{-1}{G}^{\rm L}$.
 \end{proposition}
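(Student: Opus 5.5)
The plan is to start from the target-level incremental governing equation of Lemma~\ref{country-levelCondense} and solve it for ${\rm d}{\bm \lambda}^{\rm L}$. That lemma gives
\[
J^{\rm L}{\rm d}{\bm \lambda}^{\rm L}+G^{\rm L}{\rm d}\bm{\tau}+{\rm h.o.t.}=\bm{0},
\]
valid whenever $J_b$ is nonsingular. We assume this is in force, since $J^{\rm L}$ is only defined under it.

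Next, we isolate the linear term in ${\rm d}{\bm \lambda}^{\rm L}$: $J^{\rm L}{\rm d}{\bm \lambda}^{\rm L}=-G^{\rm L}{\rm d}\bm{\tau}-{\rm h.o.t.}$ Because $J^{\rm L}$ is assumed non-singular, we premultiply by $(J^{\rm L})^{-1}$. This gives ${\rm d}{\bm \lambda}^{\rm L}=-(J^{\rm L})^{-1}G^{\rm L}{\rm d}\bm{\tau}-(J^{\rm L})^{-1}({\rm h.o.t.})$.

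The one point needing care is that the remainder is still higher order in ${\rm d}\bm{\tau}$ alone, rather than in the pair $({\rm d}{\bm \lambda}^{\rm L},{\rm d}\bm{\tau})$. We handle this with the implicit function theorem applied to the reduced map $\bm{F}^{\rm L}({\bm \lambda}^{\rm L},\bm{\tau})$, which is smooth since $\bm{F}$ is smooth and the elimination of $\bm{b}$ is smooth when $J_b$ is invertible. Non-singularity of its Jacobian $J^{\rm L}$ gives a locally smooth solution branch ${\bm \lambda}^{\rm L}(\bm{\tau})$. Hence ${\rm d}{\bm \lambda}^{\rm L}=O(|{\rm d}\bm{\tau}|)$, so terms quadratic in $({\rm d}{\bm \lambda}^{\rm L},{\rm d}\bm{\tau})$ are $o(|{\rm d}\bm{\tau}|)$, and multiplying by the fixed matrix $(J^{\rm L})^{-1}$ preserves this order.

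Collecting terms yields ${\rm d}{\bm \lambda}^{\rm L}=T^{\rm L}{\rm d}\bm{\tau}+{\rm h.o.t.}$ with $T^{\rm L}=-(J^{\rm L})^{-1}G^{\rm L}$. Equivalently, $T^{\rm L}=\partial{\bm \lambda}^{\rm L}/\partial\bm{\tau}$ along the equilibrium branch. The only potential obstacle is justifying this order of the remainder, which the implicit function theorem step settles.
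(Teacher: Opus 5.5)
Your proposal is correct and follows the paper's route: the paper gives no separate argument beyond solving the reduced equation of Lemma~\ref{country-levelCondense} for ${\rm d}\bm{\lambda}^{\rm L}$ by premultiplying with $(J^{\rm L})^{-1}$, and your implicit-function-theorem step (first eliminating $\bm{b}$ via $J_b$, then solving for $\bm{\lambda}^{\rm L}(\bm{\tau})$ via $J^{\rm L}$) just makes explicit why the remainder is $o(|{\rm d}\bm{\tau}|)$. A remark on the statement rather than on your proof: for the replicator dynamics as written in the paper, $\sum_{i\in N}F_i\equiv 0$ holds identically in $\bm{\lambda}$, which forces $\bm{1}^\top J^{\rm L}=\bm{0}^\top$ (because $\bm{1}^\top J=\bm{0}^\top$ and $\tilde{H}_a\bm{1}=\bm{1}$), so in the paper's application the non-singularity hypothesis fails and the gradient must instead come from the simplex reduction of Proposition~\ref{Prop for incremental}, i.e.\ $T^{\rm L}=-P\tilde{J}^{-1}\tilde{G}$.
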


We hereafter set $\tau_k=\phi_k$ and 
denote by $t_{\alpha k}=\frac{\partial \lambda_\alpha^{\rm L}}{\partial \phi_k}$ the component of the matrix
$T^{\rm L}$, that is,
\begin{align} \label{GraDefine} &
T^{\rm L}=\left(t_{\alpha k}
=\frac{\partial \lambda_\alpha^{\rm L}}{\partial \phi_k}
\mid \alpha=1,\ldots,n^{\rm L};~k=1,\ldots,p\right).
\end{align}
We highlight this component
as a systematic tool for analyzing
international trade competition
(cf.~Sections~\ref{Analysis of Three Countries}--\ref{EU-Strategy}).
We can evaluate
the effects of trade-freeness parameters $\phi_k$
on the population distribution vector ${\bm \lambda}^{\rm L}$.
If $t_{\alpha k}>0$,
then the population $\lambda_\alpha^{\rm L}$ increases with $\phi_k$.
If $|t_{\alpha k}|$ is large, then the parameter $\phi_k$ is considered influential 
for the population $\lambda_\alpha^{\rm L}$.

\begin{definition}
When $\phi_k$ increases, place $\alpha$ is in  
\begin{equation} \label{WinnerLoser}
\left\{ \begin{array}{ll}
\mbox{strong position}
\quad & \mbox{if}~t_{\alpha k}>0, \cr 
\mbox{weak position} & \mbox{if}~t_{\alpha k}<0. \cr 
\end{array}\right.
\end{equation}
A target-level place $\alpha$ is in a \textit{globally strong} 
(or \textit{globally weak}) position
if $t_{\alpha k} > 0$ (or $t_{\alpha k} < 0$)
holds for any $\phi_k \in (0, 1)$.
\hfill$\Box$
\end{definition}

We have further conditions on commonly used dynamics.

\begin{proposition}
For dynamics that satisfies 
the conservation law of population $\sum_{i\in N} \lambda_i =1$
in \eqref{law for dynamics conservation law}, 
we have
\begin{align} \label{L conservation} &
\sum_{\alpha=1}^{n^{\rm L}} \lambda_\alpha^{\rm L} =1,
\quad
\sum_{\alpha=1}^{n^{\rm L}} t_{\alpha k}=0.
\end{align}
\end{proposition}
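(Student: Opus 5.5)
The plan has two parts: first sum the target-level populations, then differentiate that identity. The first relation in \eqref{L conservation} follows from the partition of places. The target-level places index a partition $N=\bigcup_{\alpha=1}^{n^{\rm L}}\mathcal{C}_\alpha$ of the fundamental-level places. I take the subsets $\mathcal{C}_\alpha$ to be pairwise disjoint, since each region belongs to exactly one country or alliance. By the definition $\lambda_\alpha^{\rm L}=\sum_{i\in\mathcal{C}_\alpha}\lambda_i$,
\[
\sum_{\alpha=1}^{n^{\rm L}}\lambda_\alpha^{\rm L}=\sum_{\alpha=1}^{n^{\rm L}}\sum_{i\in\mathcal{C}_\alpha}\lambda_i=\sum_{i\in N}\lambda_i=1,
\]
where the last equality is the conservation law in \eqref{law for dynamics conservation law}.

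For the second relation, I view the solution branch as a function of the parameters. Under the nonsingularity hypotheses of Lemma~\ref{country-levelCondense} and Proposition~\ref{RecursiveCondense}, the implicit function theorem gives a smooth dependence $\bm{\lambda}^{\rm L}=\bm{\lambda}^{\rm L}(\bm{\phi})$ near the equilibrium. Its Jacobian is $T^{\rm L}$, with components $t_{\alpha k}=\partial\lambda_\alpha^{\rm L}/\partial\phi_k$ as in \eqref{GraDefine}. Every point on this branch satisfies $\sum_{i\in N}\lambda_i=1$, so the identity $\sum_\alpha\lambda_\alpha^{\rm L}(\bm{\phi})=1$ holds identically in $\bm{\phi}$. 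Differentiating it with respect to $\phi_k$ gives $\sum_\alpha t_{\alpha k}=0$ for each $k$.

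Equivalently, one can take the first relation in \eqref{conservation m} from Lemma~\ref{Lemma reductions}, namely $\sum_\alpha{\rm d}\lambda_\alpha^{\rm L}=0$, and substitute ${\rm d}\bm{\lambda}^{\rm L}=T^{\rm L}{\rm d}\bm{\phi}+{\rm h.o.t.}$ from Proposition~\ref{RecursiveCondense}. This yields $\bm{1}^\top T^{\rm L}{\rm d}\bm{\phi}=0$ to first order. Since ${\rm d}\bm{\phi}$ is arbitrary, $\bm{1}^\top T^{\rm L}=\bm{0}^\top$.

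The only delicate point is justifying that the population constraint is preserved along perturbed solutions, so that the increment ${\rm d}\bm\lambda$ satisfies $\bm 1^\top{\rm d}\bm\lambda=0$. This is automatic because both equilibria in \eqref{IncNonlinear} are solutions of the constrained system. A possible cross-check uses $T^{\rm L}=-(J^{\rm L})^{-1}G^{\rm L}$ with $\bm 1^\top G^{\rm L}=\bm 0^\top$ from the third relation in \eqref{conservation m}. That route would additionally require $\bm 1^\top$ to be a left eigenvector of $(J^{\rm L})^{-1}$, which I do not need. I will therefore rely on differentiating the conservation identity rather than on matrix algebra.
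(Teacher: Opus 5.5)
Your argument is correct and is essentially the one the paper takes for granted. The paper gives no separate proof: it treats $\sum_{\alpha}\lambda_\alpha^{\rm L}=1$ as immediate from summing $\sum_{i\in N}\lambda_i=1$ over the partition $\{\mathcal{C}_\alpha\}$, uses it to get $\sum_\alpha{\rm d}\lambda_\alpha^{\rm L}=0$ in the proof of Lemma~\ref{Lemma reductions}, and $\sum_\alpha t_{\alpha k}=0$ then follows by differentiating, since $t_{\alpha k}=\partial\lambda_\alpha^{\rm L}/\partial\phi_k$.

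One small caution concerns the replicator dynamics. There $\sum_{i}F_i\equiv 0$ gives $\bm{1}^\top J=\bm{0}^\top$ and hence $\bm{1}^\top J^{\rm L}=\bm{0}^\top$, so $J^{\rm L}$ is singular and your ``cross-check'' via $(J^{\rm L})^{-1}$ is unavailable. The smooth branch you differentiate along should therefore be justified through the simplex-reduced matrix $\tilde{J}$ of Proposition~\ref{Prop for incremental}, not through nonsingularity of $J^{\rm L}$. Your differentiation step itself is unaffected.
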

Generically,
at least one place is in a strong position ($t_{\alpha k} > 0$) and 
at least one place is in a weak position ($t_{\beta k} < 0$)
for this dynamics.
In the two-country case,
as $\phi_k$ increases, one country occupies
a strong position and the other a weak position;
when $\phi_k$ decreases, these positions are reversed.

\subsection{Direct Parameter Sensitivity Analysis}

We consider a solution $(\bm{\lambda}(\bm{\tau}),\bm{\tau})$
of the governing equation \eqref{nonlinearGovEq}
that is parameterized by $\bm{\tau}$.
Then, an economic variable, say $E$,
becomes a function $E^*=E(\bm{\lambda}(\bm{\tau}),\bm{\tau})$ in 
the solution space.
We have
\begin{align} \label{de/dtau} &
\frac{{\rm d} E^*}{{\rm d} \tau_k}
=
\frac{\partial E}{\partial \tau_k}
+
\sum_{i\in N}
\frac{\partial E}{\partial \lambda_i}
\frac{\partial \lambda_i}{\partial \tau_k}
\Longrightarrow
E_t^* \equiv
\frac{{\rm d} E}{{\rm d} \bm{\tau}}
=
\frac{\partial E}{\partial \bm{\tau}}
+
\frac{\partial E}{\partial \bm{\lambda}}
\frac{\partial \bm{\lambda}}{\partial \bm{\tau}}.
\end{align}
We call $E_t^*$ \textit{parameter sensitivity vector}.
Using the governing equation \eqref{IncNonlinear},
 we have
$\frac{\partial \bm{\lambda}}{\partial \bm{\tau}}=-J^{-1}G$
 and can rewrite this equation into
\begin{align} \label{EtOnEqui} &
E_t^* \equiv
\frac{\partial E}{\partial \bm{\tau}}
-\frac{\partial E}{\partial \bm{\lambda}}J^{-1}G.
\end{align}
\noindent
We can derive the sensitivity matrix
$E_t=\frac{\partial E}{\partial \bm{\lambda}^{\rm L}}$
with respect to  $\bm{\lambda}^{\rm L}$.

\OMIT{
\begin{proposition}\label{PropEtReduced}
\[
E_t=\frac{\partial E}{\partial \bm{\lambda}^{\rm L}}=
\frac{\partial E}{\partial \bm{\lambda}}
\frac{\partial \bm{\lambda}}{\partial \bm{\lambda}^{\rm L}}
=\frac{\partial E}{\partial \bm{\lambda}}
(H_a-H_b J_{b}^{-1} J_{ba}).
\]
\end{proposition}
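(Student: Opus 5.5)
The plan is to regard $\bm{\lambda}$ as a function of $\bm{\lambda}^{\rm L}$ through the coordinate change \eqref{coordinate transformation}. The auxiliary coordinates $\bm{b}$ are not free: they are slaved to $\bm{\lambda}^{\rm L}$ by the auxiliary half $\bm{B}=H_b^\top\bm{F}=\bm{0}$ of the governing equation. We hold $\bm{\tau}$ fixed, since $E_t$ is a partial derivative with respect to $\bm{\lambda}^{\rm L}$. The proof then has three steps: compute ${\rm d}\bm{b}/{\rm d}\bm{\lambda}^{\rm L}$, differentiate $\bm{\lambda}=H_a\bm{\lambda}^{\rm L}+H_b\bm{b}$, and apply the chain rule to $E(\bm{\lambda})$. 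This uses the same elimination as in the proof of Lemma~\ref{country-levelCondense}, but applied to $\bm{\lambda}$ instead of $\bm{F}$.

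First, take the incremental form of $\bm{B}=H_b^\top\bm{F}$. From \eqref{IncNonlinear} and $\bm{\lambda}=H(\bm{\lambda}^{\rm L};\bm{b})$ we get
\begin{equation*}
{\rm d}\bm{B}=H_b^\top J H\begin{pmatrix}{\rm d}\bm{\lambda}^{\rm L}\cr {\rm d}\bm{b}\end{pmatrix}+H_b^\top G\,{\rm d}\bm{\tau}+{\rm h.o.t.}
=J_{ba}{\rm d}\bm{\lambda}^{\rm L}+J_b{\rm d}\bm{b}+G_b{\rm d}\bm{\tau}+{\rm h.o.t.}=\bm{0}.
\end{equation*}
Here $H_b^\top JH$ is exactly the lower block row $(J_{ba},J_b)$ of $\tilde{H}^\top JH$, and $H_b^\top G=G_b$. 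Setting ${\rm d}\bm{\tau}=\bm{0}$ and using the nonsingularity of $J_b$, as in Lemma~\ref{country-levelCondense}, gives ${\rm d}\bm{b}=-J_b^{-1}J_{ba}{\rm d}\bm{\lambda}^{\rm L}$ to first order. The implicit function theorem makes this rigorous: it provides a smooth local map $\bm{\lambda}^{\rm L}\mapsto\bm{b}(\bm{\lambda}^{\rm L})$ on $\{\bm{B}=\bm{0}\}$ with derivative $\partial\bm{b}/\partial\bm{\lambda}^{\rm L}=-J_b^{-1}J_{ba}$.

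Second, differentiate the linear relation $\bm{\lambda}=H_a\bm{\lambda}^{\rm L}+H_b\bm{b}(\bm{\lambda}^{\rm L})$:
\begin{equation*}
\frac{\partial\bm{\lambda}}{\partial\bm{\lambda}^{\rm L}}=H_a+H_b\frac{\partial\bm{b}}{\partial\bm{\lambda}^{\rm L}}=H_a-H_bJ_b^{-1}J_{ba}.
\end{equation*}
Third, apply the chain rule to $E=E(\bm{\lambda},\bm{\tau})$ at fixed $\bm{\tau}$. This gives $E_t=\frac{\partial E}{\partial\bm{\lambda}}\frac{\partial\bm{\lambda}}{\partial\bm{\lambda}^{\rm L}}$, which is the claimed formula.

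The main obstacle is conceptual rather than computational. One must state precisely in what sense $E$ is a function of $\bm{\lambda}^{\rm L}$: it is defined on the submanifold $\bm{B}=\bm{0}$ at fixed $\bm{\tau}$. Only the auxiliary equations are imposed, not $\bm{A}=\bm{0}$, since otherwise $\bm{\lambda}^{\rm L}$ would not be free. This choice must be justified by the nonsingularity of $J_b$. It is also worth checking consistency with \eqref{EtOnEqui}. Composing with ${\rm d}\bm{\lambda}^{\rm L}=T^{\rm L}{\rm d}\bm{\tau}$ from Proposition~\ref{RecursiveCondense}, and adding the contribution of $\bm{b}$ through $G_b$, should recover $-\frac{\partial E}{\partial\bm{\lambda}}J^{-1}G$ via the Schur complement identity for $(\tilde{H}^\top JH)^{-1}$.
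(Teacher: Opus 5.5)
Your proof is correct and follows the paper's own argument. The paper likewise solves the auxiliary block ${\rm d}\bm{B}=\bm{0}$ for ${\rm d}\bm{b}=-J_b^{-1}J_{ba}{\rm d}\bm{\lambda}^{\rm L}-J_b^{-1}G_b{\rm d}\bm{\tau}$ and substitutes it into ${\rm d}\bm{\lambda}=H_a{\rm d}\bm{\lambda}^{\rm L}+H_b{\rm d}\bm{b}$, then reads off $\partial\bm{\lambda}/\partial\bm{\lambda}^{\rm L}=H_a-H_bJ_b^{-1}J_{ba}$ at fixed $\bm{\tau}$; your implicit-function-theorem remark only makes that step rigorous.

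One caution on your optional consistency check with \eqref{EtOnEqui}: it assumes $J$ is invertible. That fails for the replicator dynamics, because $\sum_i F_i\equiv 0$ forces $\bm{1}^\top J=\bm{0}^\top$. The proposition itself needs only $J_b$ nonsingular, which is all your proof uses.
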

\begin{proof}
The proof is given in \ref{ProofEtReduced}
\end{proof}
}

\subsection{Sensitivity of Trade Value to Transport Cost Parameter}

As an example, we consider 
the value of exports from place $i$ to place $j$
for the Helpman model.
This value is defined as 
\begin{align} &
V_{ij} = \int_0^{m_i}p_{ij}q_{ij} \lambda_j {\rm d} \varphi.
\end{align}
Since prices and demands do not vary across varieties, we have $V_{ij} = m_i p_{ij} q_{ij} \lambda_j$.
The change in the trade value $V_{ij}$ induced by a change in transportation cost $\tau_k$ can be decomposed into the effects of changes in price, the mass of varieties, demand, and population:
\begin{align}
  \frac{\mathrm{d} V_{ij}}{\mathrm{d} \tau_k}
  & = \Delta_p+\Delta_m+\Delta_q+\Delta_\lambda \nonumber \\
     & = 
  m_i q_{ij} \lambda_j \frac{\mathrm{d} p_{ij}}{\mathrm{d} \tau_k} + 
  p_{ij} q_{ij} \lambda_j \frac{\mathrm{d} m_i}{\mathrm{d} \tau_k} + 
  m_i p_{ij} \lambda_j \frac{\mathrm{d} q_{ij}}{\mathrm{d} \tau_k} + 
  m_i p_{ij} q_{ij} 
\frac{\mathrm{d} \lambda_j}{\mathrm{d} \tau_k}.
\label{decomposition_trade}\end{align}
The first term
$\Delta_p=m_i q_{ij} \lambda_j 
\frac{\mathrm{d} p_{ij}}{\mathrm{d} \tau_k}$
 represents the price effect that captures how the trade value changes through variations in the price $p_{ij}$ of goods shipped from place $i$ to place $j$, holding other variables constant.
The second term $\Delta_m$
represents the variety effect that reflects the impact of changes in the mass of varieties produced in place $i$, $m_i$, on the trade value from place $i$ to $j$.
The third term $\Delta_q$ represents the demand effect that measures how changes in the demand $q_{ij}$ for varieties from place $i$ in place $j$ affect the trade value.
The sum $\Delta_p+\Delta_m+\Delta_q$
of the first three terms
corresponds to 
$\frac{\partial E}{\partial \bm{\tau}}$
in the general expression \eqref{EtOnEqui},
and is in line with standard CES-based approaches to the interpretation of trade flows \citep[e.g.,][]{Redding.Venable.2004, Head.Mayer.2014}. 
The Helpman model further includes the fourth term $\Delta_\lambda$,
expressing the population effect that captures how changes in the population $\lambda_j$ in place $j$ affect the trade value by altering market size in place $j$.
We can use $\frac{\partial \bm{\lambda}}{\partial \bm{\tau}}=-J^{-1}G$ to express this term.


\section{The UK, France, and Germany in the EU Single Market}\label{Analysis of Three Countries}

By the proposed theoretical framework, we
analyze trade competition among the UK, France, and Germany
in the hierarchical spatial system
shown in Fig.~\ref{Global--local system 3 countries Intro},\footnote{%
A finer irregular mesh could be used to investigate local properties in greater detail.
} 
calibrated to the pre-Brexit period
(cf.~Section~\ref{Modeling and Trade Framework}).
This system comprises three levels: regions, countries and an alliance of countries.
We choose the regions as the fundamental level,
and set the target level as either
\[
\left\{
\begin{array}{ll}
\mbox{country level}: 
 & \mbox{the UK, France, and Germany}, \\
\mbox{alliance level}: 
 & \mbox{the UK and the EU}. \\
\end{array}
\right.
\]
Here, the EU refers to the alliance of France and Germany.

We employ the \citet{Helpman.1998Ch01} model 
(cf.~Section~\ref{HelpmanModel})
with replicator dynamics,
specifying its parameters 
as $\sigma=5.0$ and $\mu=0.75$,\footnote{%
The values of $\sigma$ and $\mu$ used in this study follow 
\citet[\S 3.9]{Redding.Rossi-Hansberg.2017Ch01}.
} 
while the analytical framework proposed here is applicable to a broad class of 
economic models beyond the Helpman model.

\subsection{Spatial Modeling and Trade Framework}\label{Modeling and Trade Framework}

The 2020 population ratios of the UK, France, and Germany are\footnote{%
See https://www.populationpyramid.net:
Population Pyramids of the World.
 } 
\begin{align}
 & 67,351,861 : 65,905,277 : 83,628,708 
  \ \approx \ 31.1 \% : 30.4 \% : 38.6 \% .
  \label{3 countries popu ratio}
\end{align}
Based on these ratios, the numbers of local regions representing
the UK, France, and Germany are set to 34, 33, and 42, respectively,
in Fig.~\ref{Global--local system 3 countries Intro}.

Domestic trade freeness for each country 
or the EU 
(designated as $\alpha$)
is defined as
\[
\phi_\alpha\quad (\alpha={\rm UK, Fra, Ger, EU}).
\]
International trade freeness
for exports from $\beta$ to $\alpha$ is defined as
\[
\phi_{\beta \rightarrow \alpha}
 \quad (\alpha,\beta={\rm UK},{\rm Fra},{\rm Ger},{\rm EU};~
\alpha\neq \beta).
\]
We employ two types of tariffs:
\begin{align} \label{TwoTariffTypes}  &
\left\{\begin{array}{ll}
\mbox{reciprocal tariff}: \quad &
\phi_{\beta \rightarrow \alpha}=\phi_{\alpha \rightarrow \beta}, \\
\mbox{asymmetric tariff}: \quad &
\phi_{\beta \rightarrow \alpha}\neq\phi_{\alpha \rightarrow \beta}. 
\end{array}\right.
\end{align}
A country (or alliance of countries) $\alpha$ is assumed to control
 its domestic trade freeness 
$\phi_{\alpha}$ and the import trade freeness
$\phi_{\beta \rightarrow \alpha}$.

France and Germany form an economic 
alliance represented by a single domestic trade-freeness parameter:
\begin{align} \label{SeamlessEU}
 & \phi_{\rm EU} \equiv \phi_{\rm Fra}
 =\phi_{\rm Ger}
 = \phi_{{\rm Fra}\rightarrow{\rm Ger}}
= \phi_{{\rm Ger}\rightarrow{\rm Fra}}
\end{align}
and apply the same level of tariffs 
to the trade with the UK,
that is,
\begin{align} \label{phiIntAsym}
 &  
 \phi_{\rightarrow {\rm UK}}\equiv
 \phi_{{\rm Fra}\rightarrow{\rm UK}}
  = \phi_{{\rm Ger}\rightarrow{\rm UK}},
 \quad
 \phi_{\rightarrow {\rm EU}}\equiv
 \phi_{{\rm UK}\rightarrow{\rm Fra}} 
 = \phi_{{\rm UK}\rightarrow{\rm Ger}}.
\end{align}

Using these trade-freeness parameters, we analyze several scenarios:
the pre-Brexit EU single market of the three countries 
 (Section~\ref{CooperativeEconomicIntegration}),
the post-Brexit trade strategy
of the UK in Section~\ref{UK-Strategy},
and that of the EU in Section~\ref{EU-Strategy}.

\subsection{Trade Analysis of the EU Single Market}%
\label{CooperativeEconomicIntegration}

We model the pre-Brexit EU single market
using the domestic trade freeness $\phi$
for the single EU market
and 
the international trade freeness $\phi_{\rm Int}$
under the reciprocal tariff
in \eqref{TwoTariffTypes}, being defined as
\begin{align*} &
\phi \equiv \phi_{\rm UK}=\phi_{\rm EU},
\quad
\phi_{\rm Int} \equiv 
 \phi_{\rightarrow{\rm  UK}} =
 \phi_{\rightarrow{\rm  EU}}.
 \end{align*}

Figure~\ref{PathTracing gamma=0.3UK}(a)
shows the $\phi_{\rm Int}$--$\lambda_{\rm UK}$ curve for $\phi=0.3$ 
and Panel (b) depicts
the associated  
region-level population distributions.
As trade liberalization progresses
(as $\phi_{\rm Int}$ increases),
agglomeration to a single city 
in each country is observed,\footnote{%
Agglomeration is observed
 around the hubs of direct international trade
(London, UK; Paris, France; Frankfurt, Germany).
As $\phi_{\rm Int}$ increases,
this agglomeration intensifies from points A to F,
benefiting the trading hubs. 
This tendency is in accordance with 
\citet{Behrens.etal.2006,Behrens.etal.2007}
(cf. Footnote~\ref{Behrens-footnote}).
} 

The country-level population distribution at point O,  
with $(\phi,\phi_{\rm Int})=(0.3, 0.3)$,
is
$\bm{\lambda}^{\rm C}=(\lambda_{\rm UK},\lambda_{\rm Fra},\lambda_{\rm Ger})
\approx (0.309, \allowbreak 0.299, 0.392)$
and is close to the ratios
in \eqref{3 countries popu ratio}
based on 2020 pre-Brexit population data.
Accordingly, we represent
the state of the EU single market by point O
and refer to it as the \textit{origin point}.

 \begin{figure}
   \centering\small
   \begin{tabular}{cc}
   \includegraphics[scale=0.4]{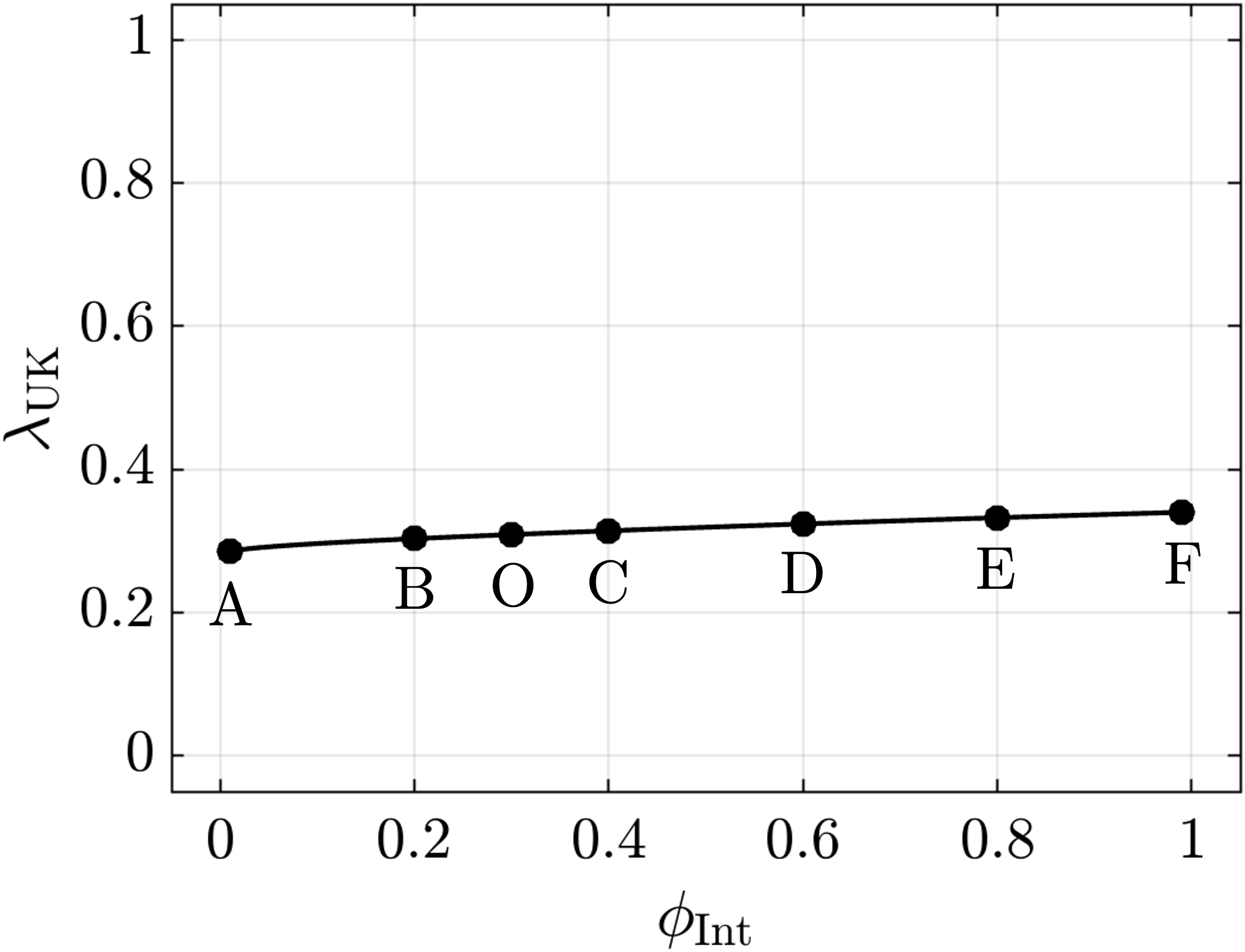} &
   \includegraphics[scale=0.34]{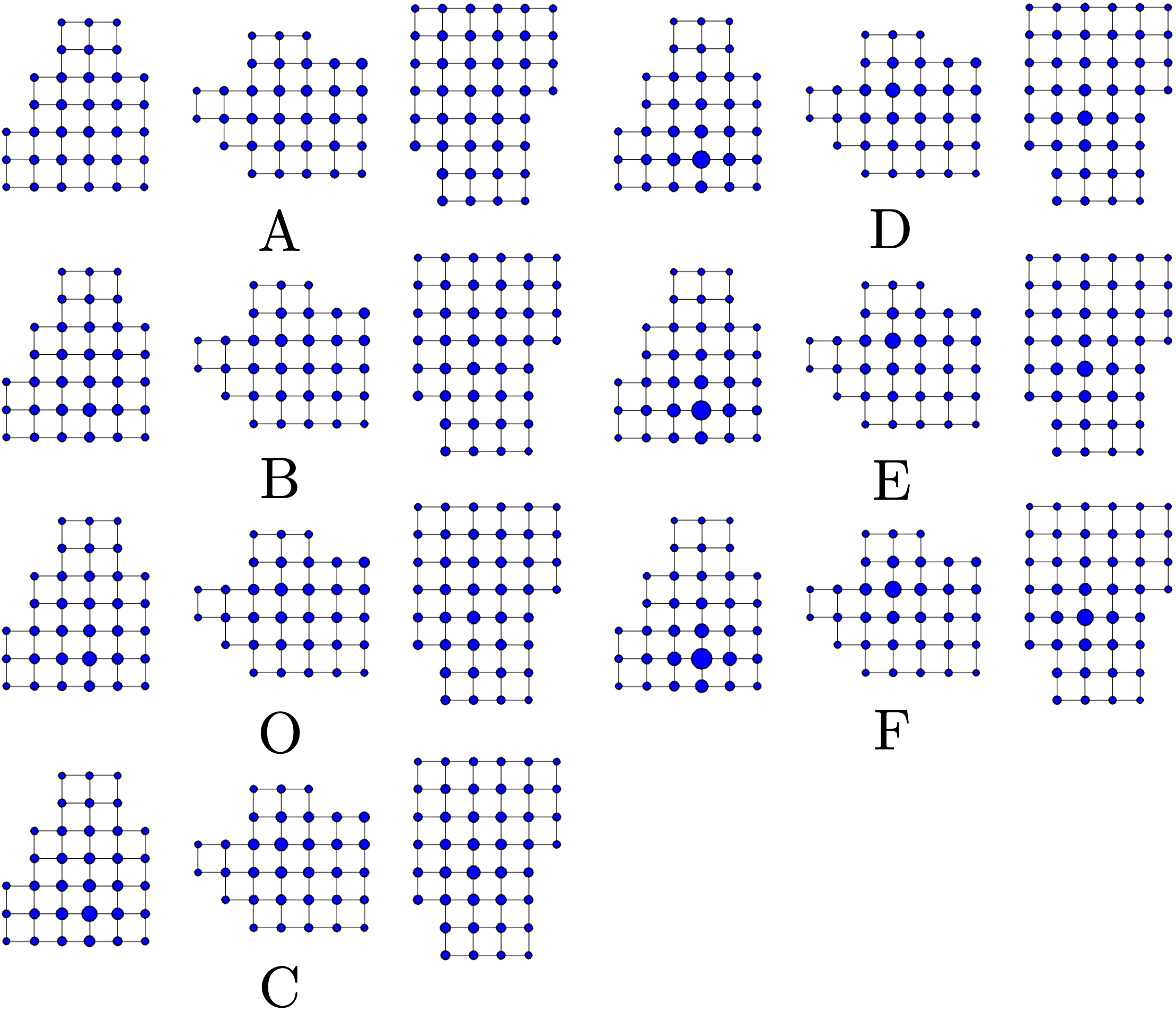} \\
      (a) $\phi_{\rm Int}$--$\lambda_{\rm UK}$ curve
   for $\phi=0.3$ \hspace{20mm}
   & (b) Population distributions 
   \\
   \end{tabular}
   \vspace{-1mm}
 \caption{$\lambda_{\rm UK}$ plotted against $\phi_{\rm Int}$ 
   for $\phi=0.3$.
   The $\phi_{\rm Int}$-$\lambda_{\rm UK}$ curve in (a)
    was obtained by comparative statics. The 
    region-level population distribution of the three countries
    at Points A to F are depicted in (b).}
   \label{PathTracing gamma=0.3UK}  
   \end{figure}

We conduct a local analysis at the origin point O,
using an alliance-level
population-gradient matrix 
(${\rm L}={\rm A}$ in \eqref{GraDefine}),
evaluated at this point 
for $\bm{\lambda}^{\rm A}=(\lambda_{\rm UK},\lambda_{\rm EU})$ and
$\bm{\tau}=(\phi_{\rm Int},\phi,\sigma,\mu)$ as
\begin{align*}
 & T^{\rm A}=
 \begin{pmatrix}
 t_{{\rm UK}, \phi_{\rm Int}} & \!
 t_{{\rm UK}, \phi} & \!
 t_{{\rm UK}, \sigma} & \!
 t_{{\rm UK}, \mu} \cr
 t_{{\rm EU}, \phi_{\rm Int}} & \!
 t_{{\rm EU}, \phi} & \!
 t_{{\rm EU}, \sigma} & \!
 t_{{\rm EU}, \mu} 
  \end{pmatrix}
 =
 \begin{small}
\begin{pmatrix}
\mbox{}+0.0546 \ \mbox{}-0.0209 \ \mbox{} -0.0000  \ \mbox{} +0.0055 \cr
\mbox{} -0.0546 \ \mbox{}+0.0209 \ \mbox{} +0.0000 \ \mbox{} -0.0055
 \end{pmatrix}.
 \end{small}
 \end{align*} 
The economic parameters
$\sigma$ and $\mu$ 
have components
of much smaller absolute values than
those of $\phi_{\rm Int}$ and $\phi$
and are therefore expected to be less influential.\footnote{%
We conducted the numerical analysis to ensure that
this is especially the case for $\sigma > 4$
 and $\mu < 0.8$. 
Our selection of $(\sigma,\mu)=(5.0,0.75)$
belongs to this case.
} 
Accordingly, we hereafter investigate the effects of
$\phi_{\rm Int}$ and $\phi$.

The UK's population ratio $\lambda_{\rm UK}$ 
increases with $\phi_{\rm Int}$,
because $ t_{{\rm UK}, \phi_{\rm Int}}=0.0546>0$.
Accordingly, the UK is in a strong position as
trade liberalization progresses (cf.~\eqref{WinnerLoser}).\footnote{%
The UK is in a 
globally strong position,
in accordance with the positive slope of the curve 
for all values of $\phi_{\rm Int}$
in Fig.~\ref{PathTracing gamma=0.3UK}(a).
} 
This result is somewhat ironic, since events such as Brexit
 (a decrease in $\phi_{\rm Int}$) undermine the UK.
In the three countries' joint domestic development (an increase in $\phi$),
the UK with $t_{{\rm UK}, \phi}=-0.0209<0$ is in a weak position,
whereas the EU with $t_{{\rm EU}, \phi}=0.0209>0$
is in a strong position
(see Section~\ref{ReductionSingleCountry}
for the reciprocity of the positions of two countries).
 These findings suggest that 
the UK and the other EU countries
 had stakes in each other's economic activities 
 in the EU single market prior to Brexit.

 We investigate the competition 
 between France and Germany within the EU, using
 the country-level population-gradient matrix 
 \begin{align*}
 & 
 T^{\rm C}=
 \begin{pmatrix}
 t_{{\rm UK}, \phi_{\rm Int}} &
 t_{{\rm UK}, \phi} \cr
 t_{{\rm Fra}, \phi_{\rm Int}} &
 t_{{\rm Fra}, \phi} \cr
 t_{{\rm Ger}, \phi_{\rm Int}} &
 t_{{\rm Ger}, \phi} 
 \end{pmatrix}
 =
 \begin{pmatrix}
+0.0546 & -0.0209 \cr
-0.0097 & -0.0031 \cr
-0.0449 & +0.0240 
 \end{pmatrix}
 \end{align*} 
for $\bm{\lambda}^{\rm C}=(\lambda_{\rm UK},\lambda_{\rm Fra},\lambda_{\rm Ger})$,
evaluated at the origin point O.
Since 
$t_{{\rm Fra}, \phi_{\rm Int}}=-0.0097<0$
and $t_{{\rm Ger}, \phi_{\rm Int}}=-0.0449<0$,
France and Germany are in weak positions under  
 trade liberalization (an increase in $\phi_{\rm Int}$),
 as is the EU.
 When $\phi$ increases,
 France is in a weak position
 ($t_{{\rm Fra}, \phi}=-0.0031<0$), 
 whereas 
 Germany is in a strong position
 ($t_{{\rm Ger}, \phi}=0.0240>0$).
Thus, France and Germany
have conflicting interests
within the EU single market
that must be carefully managed.
As shown above,
the population-gradient matrix $T$ helps analyze such conflicting interests.

\begin{remark}
As $\phi_{\rm Int}$ decreases from the 
original value of 0.3, modeling Brexit,
(i) the UK loses mobile population
(cf. Fig.~\ref{PathTracing gamma=0.3UK})
and (ii) the trade value declines sharply
(cf.~Fig.~\ref{TradeVolumeSce1}(a)).
The former aligns with the study by \citet{DiIasio.Wahba.2023},
which state: ^^ ^^ the UK has become less attractive
to EU potential and current immigrants.”
The latter agrees with \citet{DeLucio.etal.2024},
who report: ^^ ^^ We find that Spanish exports and imports to the UK decreased by 24\% and 27\%,  respectively, compared to the period before the Brexit referendum.”
\hfill$\Box$
\end{remark}

\begin{figure}
   \centering\small

\begin{tabular}{cc}
   \includegraphics[scale=0.37]{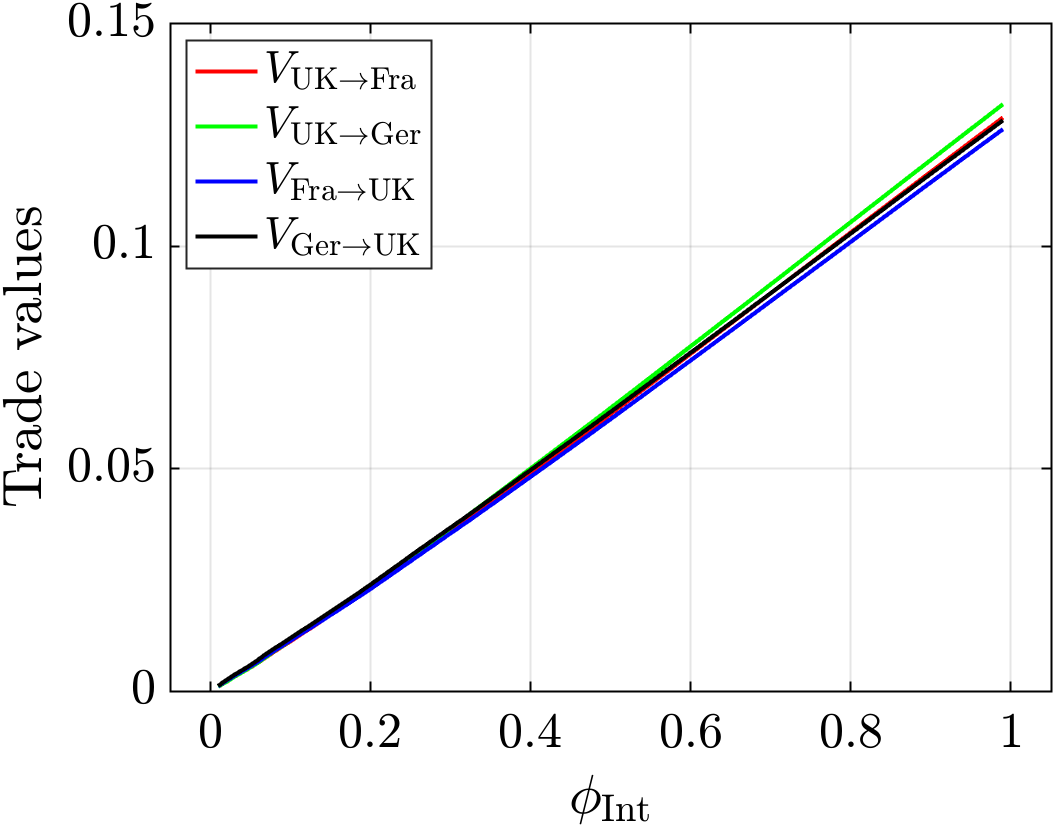} &
   \includegraphics[scale=0.37]{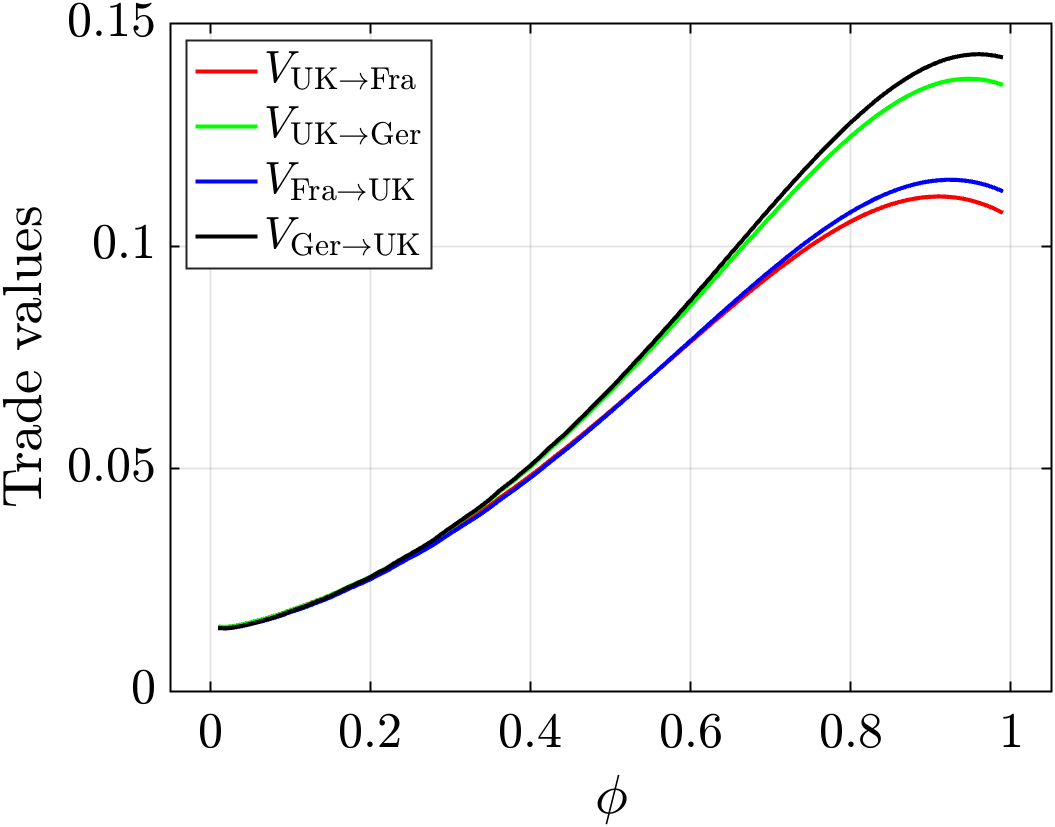} 
    \cr
  (a) Plot  against $\phi_{\rm Int}$ for $\phi=0.3$ &
  (b) Plot  against $\phi$ for $\phi_{\rm Int}=0.3$ 
\end{tabular}
   \caption{Dependence of trade values on trade-freeness parameters.
See \eqref{Tvolume-define} 
for the definition of these trade values.
}
   \label{TradeVolumeSce1}
\end{figure}

\OMIT{
\textcolor{red}{We may plot ??? the formula \eqref{decomposition_trade}
for the trade value:}
\begin{align*}
  \frac{\mathrm{d} V_{ij}}{\mathrm{d} \tau_k} = 
  m_i q_{ij} \lambda_j \frac{\mathrm{d} p_{ij}}{\mathrm{d} \tau_k} + 
  p_{ij} q_{ij} \lambda_j \frac{\mathrm{d} m_i}{\mathrm{d} \tau_k} + 
  m_i p_{ij} \lambda_j \frac{\mathrm{d} q_{ij}}{\mathrm{d} \tau_k} + 
  m_i p_{ij} q_{ij} \frac{\mathrm{d} \lambda_j}{\mathrm{d} \tau_k}.
\end{align*}
}


 \section{The UK's Post-Brexit Trade Strategy}\label{UK-Strategy}

The UK's departure from the EU in 2021
ended its economic integration with Europe.
We investigate the UK's trade policy to attract mobile workers 
in the post-Brexit period.

The UK seeks to expand its population share $\lambda_{\rm UK}$
by adjusting trade-freeness parameters
$\phi_{\rm UK}$ and $\phi_{\rightarrow{\rm UK}}$ 
 $(=\phi_{{\rm EU}\rightarrow{\rm UK}})$
(cf.~\eqref{phiIntAsym}).
 The EU, an alliance of France and Germany,
maintains a single market characterized by $\phi_{\rm EU}=0.3$
(cf.~\eqref{SeamlessEU}),
and chooses
between two types of tariffs on imports:
\begin{align*} &
\left\{
\begin{array}{ll}
\mbox{reciprocal tariff}: \quad &
\phi_{\rightarrow{\rm EU}}=
\phi_{\rightarrow{\rm UK}}, \cr
\mbox{asymmetric tariff}: \quad &
\phi_{\rightarrow{\rm EU}}=0.3.
\end{array}\right.
\end{align*}

\subsection{The UK's Trade Strategy at the Onset of Brexit}\label{AnalysisUK-TradePosition}

We analyze the UK’s instantaneous trade strategy 
at the onset of Brexit,
which we model as the origin point O
with $\lambda_{\rm UK}=0.309$ 
(cf.~Fig.~\ref{PathTracing gamma=0.3UK}).
In this analysis, we refer to the alliance-level population-gradient matrix
(${\rm L}={\rm A}$ in \eqref{GraDefine}),
 evaluated at this point, as
\begin{align*}
  T^{\rm A}=
   \begin{pmatrix}
   t_{{\rm UK},\phi_{\rm UK}} &
   t_{{\rm UK},\phi_{\rightarrow {\rm UK}}} \cr
   t_{{\rm EU},\phi_{\rm UK}} &
   t_{{\rm EU},\phi_{\rightarrow {\rm UK}}} \cr
 \end{pmatrix}=
 & 
 \left\{\begin{array}{ll}
 \begin{scriptsize}
 \begin{pmatrix}
+1.437 & +0.055 \cr
-1.437 & -0.055 \cr
 \end{pmatrix} 
 \end{scriptsize}
 & \mbox{under reciprocal tariff}, \cr 
 \noalign{\vskip 1ex}
 \begin{scriptsize}
 \begin{pmatrix}
+1.437 & ~+0.167 \cr
-1.437 &  ~-0.167 \cr
 \end{pmatrix} 
 \end{scriptsize}
~~ & \mbox{under asymmetric tariff}. 
 \end{array}\right.
 \end{align*} 
The gradient
$t_{{\rm UK},\phi_{\rm UK}}=1.437$ 
is much larger than the gradients 
$t_{{\rm UK},\phi_{\rightarrow {\rm UK}}}=0.055,~0.167$;
accordingly, $\phi_{\rm UK}$
is expected to exert greater influence on 
$\lambda_{\rm UK}$ than $\phi_{\rightarrow {\rm UK}}$.
Under any tariff type,
these gradients are positive;
accordingly, the UK is in a strong position 
with respect to increases both in
 $\phi_{\rm UK}$ and $\phi_{\rightarrow {\rm UK}}$.
The EU is in a weak position, 
as the strong and weak positions are reciprocal between the UK and the EU
 (cf.~Section~\ref{ReductionSingleCountry}).

The UK’s recommended instantaneous trade strategy just after Brexit 
is not contingent on the EU’s tariff choice and it is as follows.
The UK should invest in its infrastructure to increase 
domestic trade freeness and pursue trade liberalization 
to increase import trade freeness.
This strategy aligns with the argument by \citet{Brakman.etal.2018}, 
based on gravity-equation results:
 ^^ ^^ Paradoxically, only a trade agreement with the EU can compensate
 for Brexit's trade losses.”

\subsection{Analysis of Global Trade Strategy I: Contour Map}\label{UK'sContourMap}

We study the UK's global trade strategy for increasing
the UK's population share $\lambda_{\rm UK}$
with reference to the contour maps 
of $\lambda_{\rm UK}$ in Fig.~\ref{aUK-contour}. 
These maps are drawn in the whole parameter space of 
$(\phi_{\rightarrow{\rm UK}},\phi_{\rm UK})
\in (0,1)\times (0,1)$ and, accordingly,
provide global information on the effects of the parameters.

\begin{figure}
   \centering\small
\begin{tabular}{@{\hspace{-2.5mm}}cc}
 \includegraphics[scale=0.35]{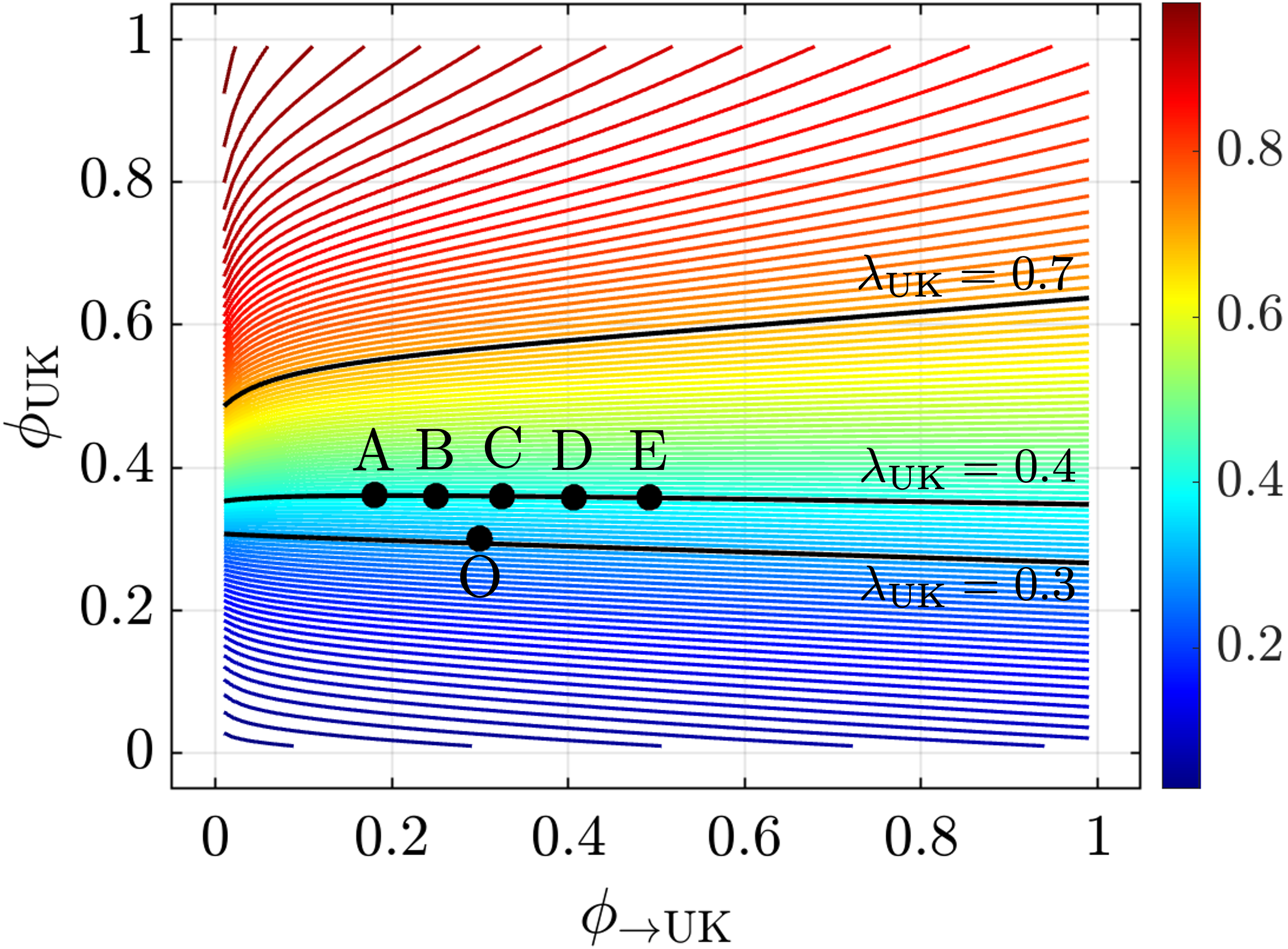} &
 \includegraphics[scale=0.35]{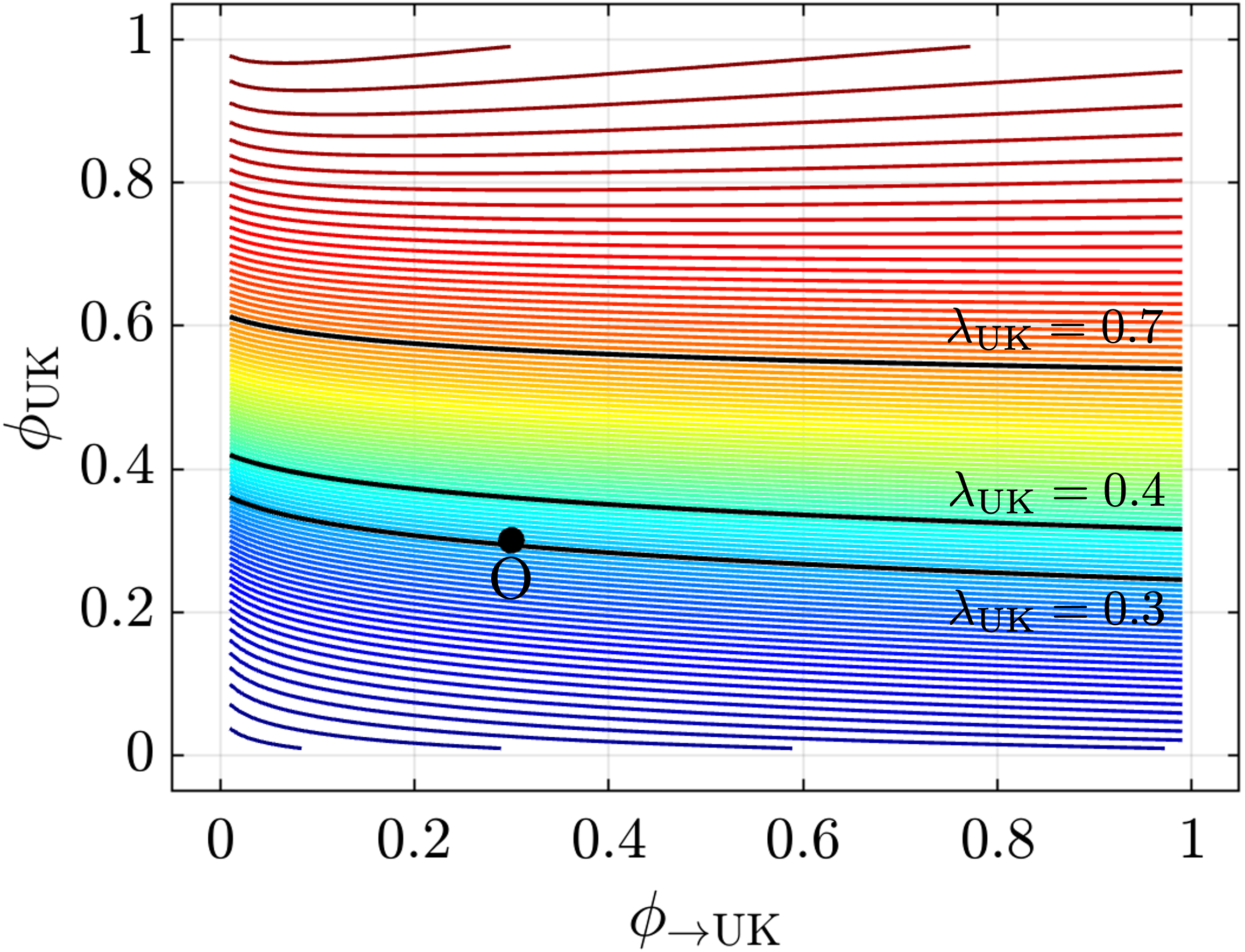} \\
(a) Reciprocal tariff  ($\phi_{\rightarrow{\rm EU}}=\phi_{\rightarrow {\rm UK}}$) &
(b) Asymmetric tariff ($\phi_{\rightarrow{\rm EU}}=0.3$) \\
  \noalign{\vskip 1.5ex}
 \multicolumn{2}{c}{\includegraphics[scale=0.3]{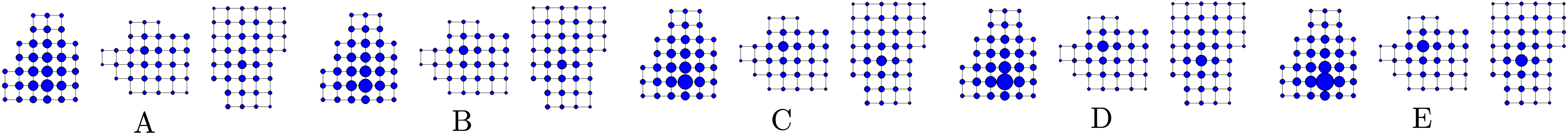}} \\
  \multicolumn{2}{c}{(c) Population distributions 
  with $\lambda_{\rm UK}=0.4$ (reciprocal tariff)} \\
  \noalign{\vskip .5ex}
  \end{tabular}
 \caption{
  Contour maps of $\lambda_{\rm UK}$ in the space of 
  $(\phi_{\rightarrow{\rm UK}},\phi_{\rm UK})$
  plotted under both tariff types.
  \OMIT{
We conducted an inverse analysis
(cf.~Section~\ref{FurtherReduction})
to identify 
  $(\phi_{\rightarrow{\rm UK}},\phi_{\rm UK})$
that yields the target country-level population distribution
$\bm{\lambda}^*=(\lambda_{\rm UK}^*,\lambda_{\rm Fra}^*,
\lambda_{\rm Ger}^*)$.
By targeting the points
$\bm{\lambda}^* = (0.40, 0.26+\eta, 0.34-\eta)$ for
several values of $\eta$,
we obtained points A--E in (a).
Similarly, the contours in (a) and (b) were obtained.
This corresponds to the case of $p=n^{\rm L}-1$
with $p=2$ and $n^{\rm L}=3$ in Section~\ref{FurtherReduction},
and an iteration was conducted to upgrade the accuracy of the inverse analysis
(cf.~\ref{InverseDetails}).
}
   }
    \label{aUK-contour}
  \end{figure}

First, we note that
$\lambda_{\rm UK}$ increases monotonically
with $\phi_{\rm UK}$ 
for any given $\phi_{\rightarrow{\rm UK}}$.
Thus, increasing $\phi_{\rm UK}$, 
through the domestic infrastructure promotion, 
always benefits the UK. 

Next, we investigate the effect of
the international trade freeness
$\phi_{\rightarrow{\rm UK}}$,
referring to the slope of the contour lines as follows:
 \begin{quote}
 When a contour line has a negative slope, 
  $\lambda_{\rm UK}$ increases with 
 $\phi_{\rightarrow{\rm UK}}$ 
for a constant $\phi_{\rm UK}$
and decreases when $\phi_{\rightarrow{\rm UK}}$ decreases.
 When it has a positive slope, this tendency is reversed.
 \end{quote}
When $\lambda_{\rm UK}=0.3$, the contour line has a negative slope under each type of tariff
(cf.~Fig.~\ref{aUK-contour}); accordingly,  trade liberalization favors the UK. 
If the infrastructure promotion of the UK
reaches a certain level, 
the slope becomes positive 
and trade liberalization becomes unfavorable
(e.g., $\lambda_{\rm UK}=0.7$ under the reciprocal tariff
in Fig.~\ref{aUK-contour}(a)).

\subsection{Analysis of Global Trade Strategy II: Trade Positions}\label{UK'sGlobalTrade}

We search for the general mechanism behind the reversed trend
observed above.
Using the gradient $t_{\alpha,\phi_{\rightarrow \alpha}}$
(slope) of the $\phi_{\rightarrow \alpha}$ versus $\lambda_{\alpha}$ curve,
we introduce Definition~\ref{TradePositionDefinition}.
\begin{definition}\label{TradePositionDefinition}
The trade position of country $\alpha$ is  
\begin{align}\label{MaturePremature} &
\left\{
\begin{array}{ll}
\mbox{premature} \quad & 
\mbox{if~}
t_{\alpha,\phi_{\rightarrow \alpha}}>0,
 \cr
\mbox{turning point (TP)} \quad & 
\mbox{if~}
t_{\alpha,\phi_{\rightarrow \alpha}}=0,
 \cr
\mbox{mature} \quad & 
\mbox{if~}
t_{\alpha,\phi_{\rightarrow \alpha}}<0.
\end{array}
\right. 
\end{align}
\end{definition}
\noindent
A country in a premature position 
is in a strong position
when $\phi_{\rightarrow \alpha}$ increases
and in a weak position when it decreases. 
Trade liberalization benefits a country in a premature position,
whereas protectionism benefits a country in a mature position.  

We propose Conjecture~\ref{MaturePrematureConj} on 
the transition of a country’s trade position
based on the discussion below.

\begin{conjecture}\label{MaturePrematureConj}
A premature position transitions into a mature one
when the country's domestic trade-freeness
parameter reaches a certain level.
\hfill $\Box$
\end{conjecture}

The UK's trade position depends
on the slope of the contour line and, in turn, 
on the trade-freeness parameters and the tariff type
(cf.~Fig.~\ref{aUK-contour}).
Figure~\ref{UKAnalysisParameter} shows
parameter zones of the UK's positions 
in the space of $(\phi_{\rightarrow{\rm UK}},\phi_{\rm UK})$.
For example, the origin point O at 
$(\phi_{\rightarrow {\rm UK}},\phi_{\rm UK})=(0.3,0.3)$
belongs to a premature trade position (colored red)
under both tariff types.
As $\phi_{\rm UK}$ increases from 0 along the vertical dashed line
at $\phi_{\rightarrow{\rm UK}}=0.3$,
the premature position transitions to a mature position colored blue 
at the point TP (turning point) at $\phi_{\rm UK}\approx 0.38$
under the reciprocal tariff (Panel (a)).
Under the asymmetric tariff (Panel (b)), TP resides at $\phi_{\rm UK}
\approx0.8$,
and the premature zone widens,
thereby favoring the UK under trade liberalization.

To sum up, the UK's suggested global trade strategy 
is to increase its national trade freeness.
The choice of either trade liberalization or protectionism
is contingent on
the EU's tariff policy for 
$0.38<\phi_{\rm UK} <0.80$
and is summarized 
in Table~\ref{UK-positions}
for several  ranges of $\phi_{\rm UK}$.

As shown in Fig.~\ref{UKAnalysisVolume}, 
exports from the UK to the EU increase with trade liberalization
(an increase in $\phi_{\rightarrow{\rm UK}}$),
especially under the reciprocal tariff.
Conversely, as $\phi_{\rightarrow{\rm UK}}$ decreases from its original value of 0.3---such as during Brexit---the UK exports decline sharply.
This decline is consistent with \citet{Freeman.etal.2025}:
 ^^ ^^ Our estimates imply that,
in the short term, leaving the EU 
reduced worldwide UK exports by 6.4\% 
and worldwide imports by 3.1\%.”

 \begin{figure}
   \centering\small
\begin{tabular}{cc}
 \includegraphics[scale=0.4]{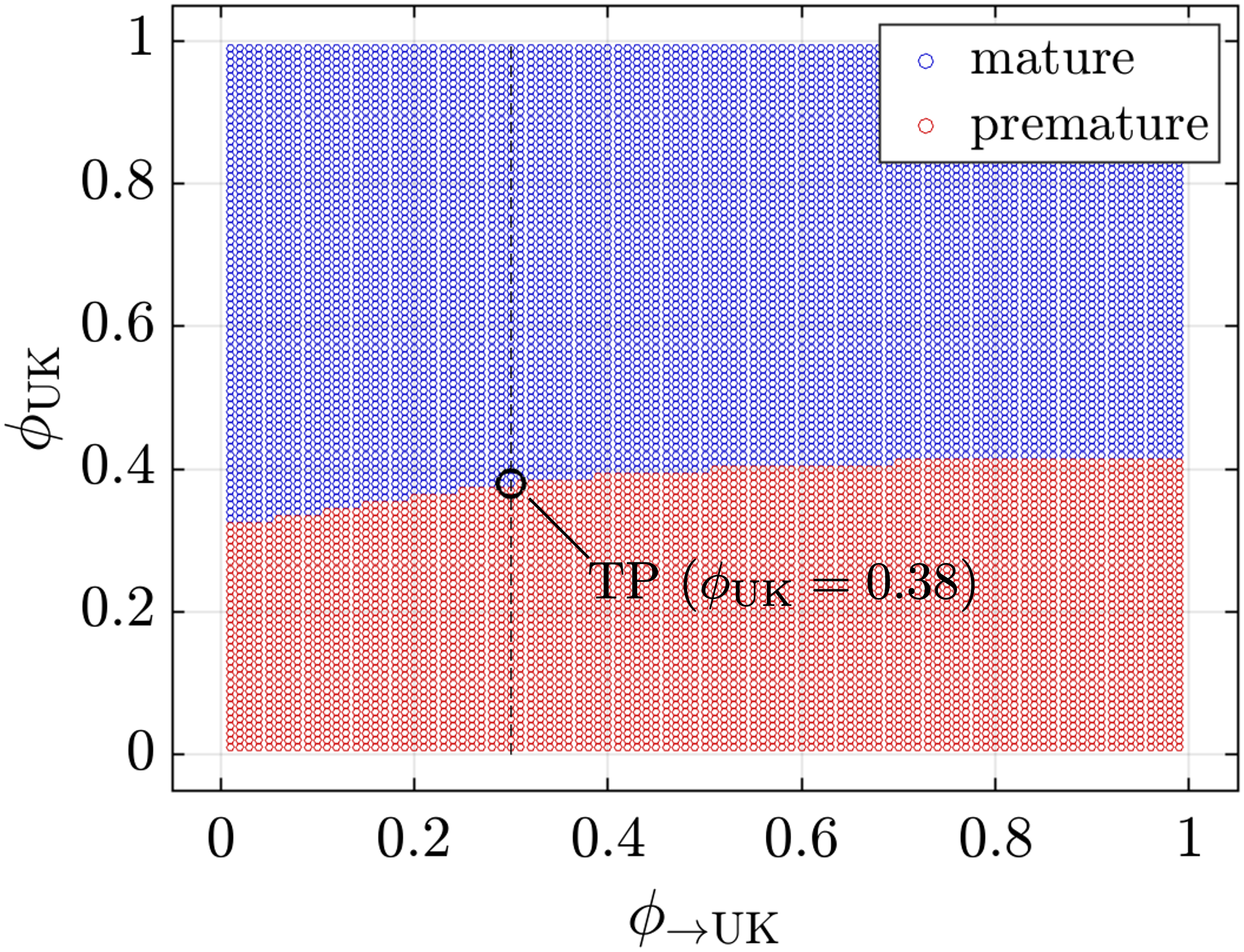} &
 \includegraphics[scale=0.4]{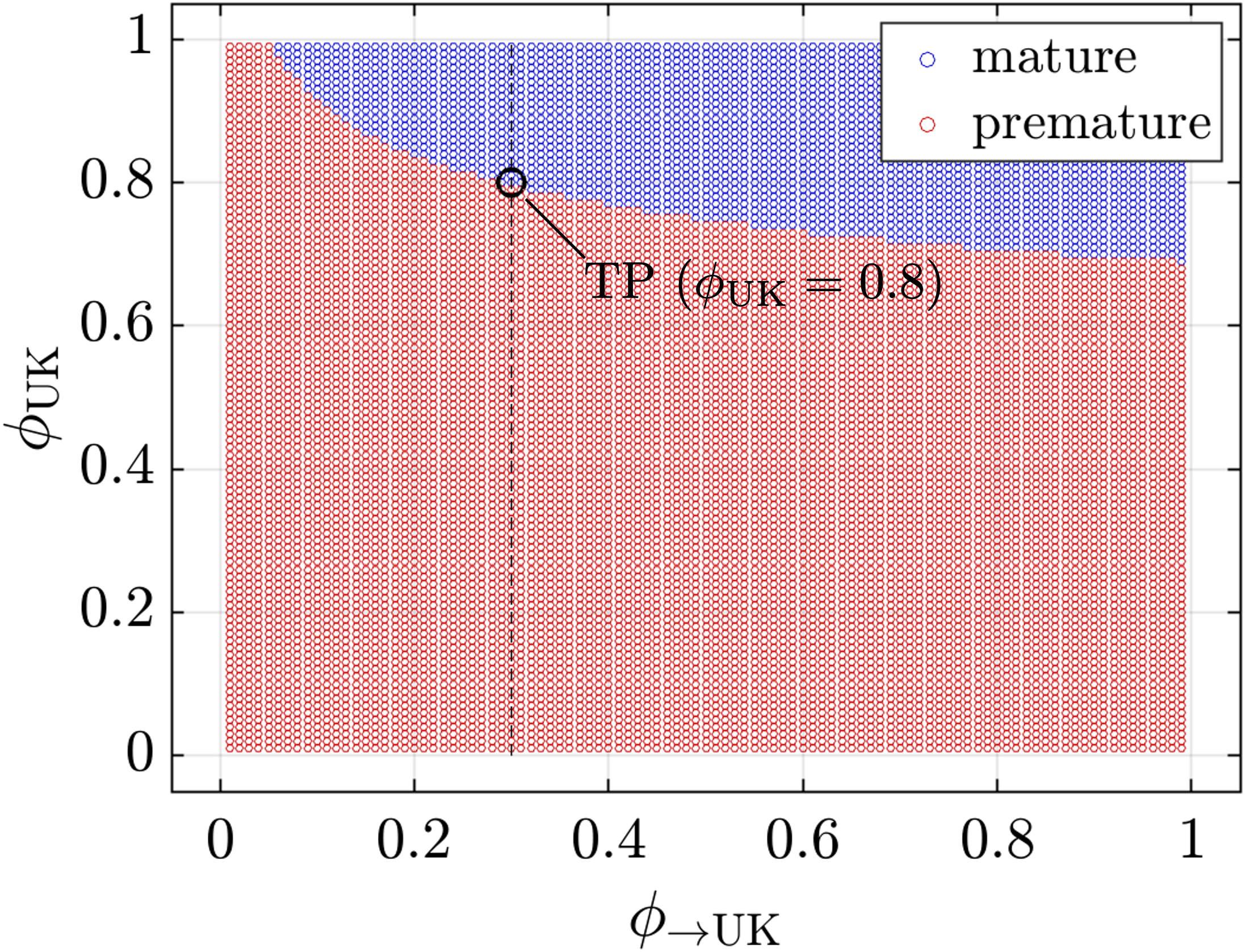} \\ 
(a) Reciprocal tariff  ($\phi_{\rightarrow{\rm EU}}=\phi_{\rightarrow {\rm UK}}$) &
(b) Asymmetric tariff ($\phi_{\rightarrow{\rm EU}}=0.3$) \\
\end{tabular}
 \caption{Parameter zones of trade positions for the UK
in the space of $(\phi_{\rightarrow {\rm UK}},\phi_{\rm UK})$.
The zone of a premature position is colored red
and the zone of a mature one is colored blue.
 } 
   \label{UKAnalysisParameter}
\vspace{-2mm}
 \end{figure}

\begin{table}
\centering
\caption{Classification of the UK's 
 trade position
and suggested trade policy
($\phi_{\rightarrow{\rm UK}}\nearrow$ and 
$\phi_{\rightarrow{\rm UK}}\searrow$:
an increase and a decrease of
$\phi_{\rightarrow{\rm UK}}$)
}\label{UK-positions}
\vspace{-2mm}
\begin{tabular}{cl|lc}
Range of $\phi_{\rm UK}$       &  Tariff type &UK's position &  Suggested trade policy
 \\ \hline
$0.30<\phi_{\rm UK} <0.38$  & Both types &  Premature & Liberalization 
($\phi_{\rightarrow{\rm UK}}\nearrow$) \cr
$0.38<\phi_{\rm UK} <0.80$  & Reciprocal &  Mature & Protectionism 
($\phi_{\rightarrow{\rm UK}}\searrow$) \cr
                                                   & Asymmetric& Premature & Liberalization
($\phi_{\rightarrow{\rm UK}}\nearrow$) \cr
$0.80<\phi_{\rm UK} <1.00$   & Both types & Mature &  Protectionism
($\phi_{\rightarrow{\rm UK}}\searrow$) \cr
\end{tabular}
\end{table}

\begin{figure}
   \centering\small
\begin{tabular}{cc}
\includegraphics[scale=0.4]{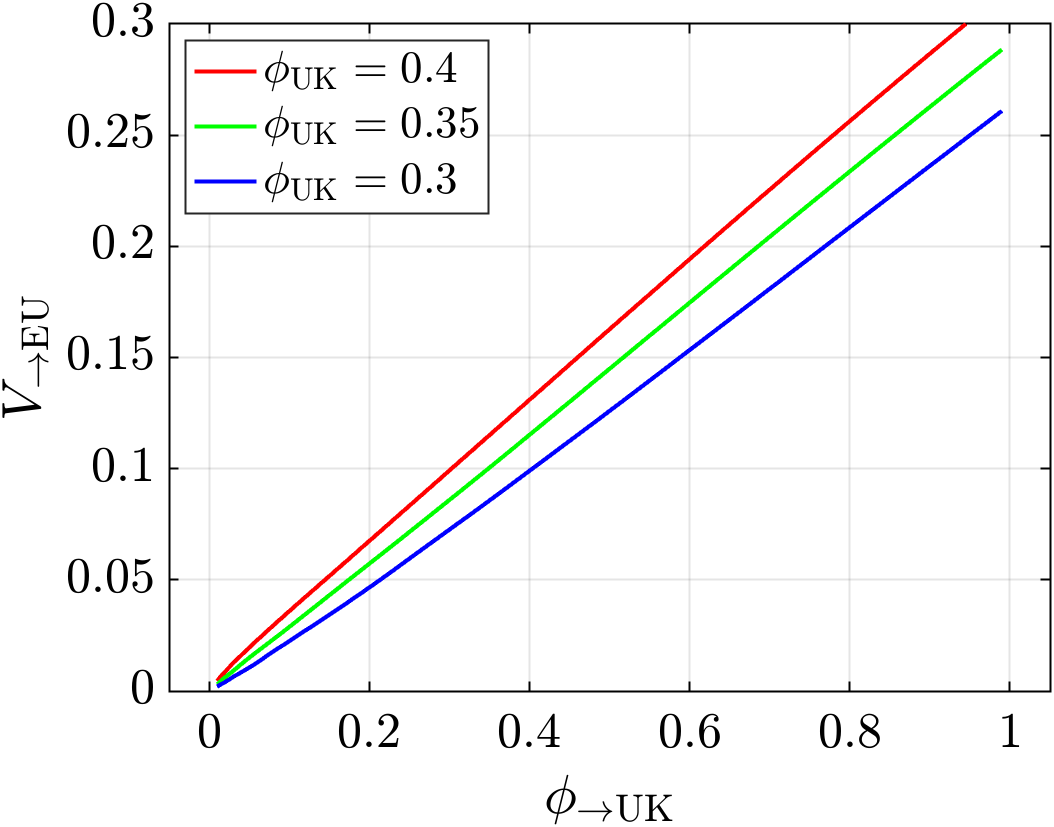} & \includegraphics[scale=0.4]{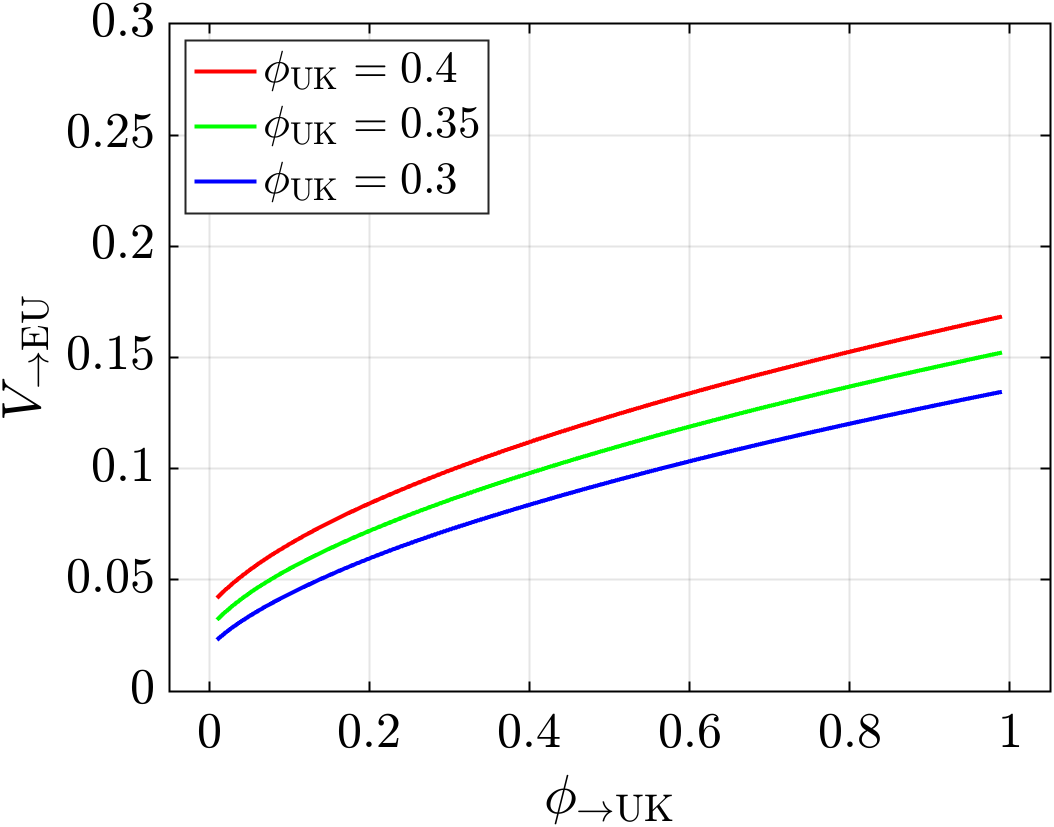} \\ 
(a) Reciprocal tariff  ($\phi_{\rightarrow{\rm EU}}=\phi_{\rightarrow {\rm UK}}$) &
(b) Asymmetric tariff ($\phi_{\rightarrow{\rm EU}}=0.3$)
\end{tabular}

 \caption{The influence of trade liberalization
 on trade value $T_{\rightarrow{\rm EU}}$ of the UK exports to the EU.
    See \eqref{Tvolume-define} in \ref{TradeValueEachCountry}
for the definition of this trade value.
   The UK's imports displayed the same trend as
  the exports and are therefore omitted here.
 } 
   \label{UKAnalysisVolume}
\vspace{-5mm}
  \end{figure}


 \section{The EU's Post-Brexit Trade Strategy}\label{EU-Strategy}

We analyze the EU's trade strategy
to increase its population share
$\lambda_{\rm EU}$ by adjusting
its domestic and trade freeness parameters $\phi_{\rm EU}$ 
and $\phi_{\rightarrow{\rm EU}}$.
The UK maintains its domestic trade freeness
at $\phi_{\rm UK}=0.3$ 
and chooses between two types of tariffs on imports from the EU:
\begin{align} \label{UK'sTrade} &
\left\{
\begin{array}{ll}
\mbox{reciprocal tariff}: \quad &
\phi_{\rightarrow{\rm UK}}=
\phi_{\rightarrow{\rm EU}}, \\ 
\mbox{asymmetric tariff}: &
\phi_{\rightarrow{\rm UK}}=0.3.
\end{array}
\right.
\end{align}

\subsection{The EU's Instantaneous Trade Strategy}

We analyze the EU's instantaneous trade strategy
at the origin point O
(cf.~Fig.~\ref{PathTracing gamma=0.3UK}),
which represents the state just before Brexit.
The population-gradient matrix at this point is given by
\begin{align*}
  T^{\rm A}=
   \begin{pmatrix}
   t_{{\rm EU},\phi_{\rm EU}} &
   t_{{\rm EU},\phi_{\rightarrow {\rm EU}}} \cr
   t_{{\rm UK},\phi_{\rm EU}} &
   t_{{\rm UK},\phi_{\rightarrow {\rm EU}}} \cr
 \end{pmatrix}=
 & \left\{\begin{array}{ll}
 \begin{scriptsize}
 \begin{pmatrix}
+1.458&~  -0.055 \cr
-1.458 & ~ +0.055 \cr
 \end{pmatrix} 
 \end{scriptsize}
 & \mbox{under reciprocal tariff}, \cr
 \noalign{\vskip 1ex}
 \begin{scriptsize}
 \begin{pmatrix}
+1.458 &~ +0.113 \cr
-1.458 &~ -0.113 \cr
 \end{pmatrix} 
 \end{scriptsize}
 \ \ & \mbox{under asymmetric tariff}.
 \end{array}\right.
 \end{align*} 
First, we note that 
$t_{{\rm EU},\phi_{\rm EU}}=1.458$ is positive
for any tariff type.
Accordingly, the EU holds a strong position as $\phi_{\rm EU}$ increases 
and benefits from infrastructure development 
that enhances $\phi_{\rm EU}$.
In contrast, the UK has a negative gradient $t_{{\rm UK},\phi_{\rm EU}}=-1.458$
and therefore in a weak position.
 
Next, we analyze the influence of 
the import trade freeness 
$\phi_{\rightarrow {\rm EU}}$.
Under the reciprocal tariff,
the EU has a negative gradient  
$t_{{\rm EU},\phi_{\rightarrow {\rm EU}}}=-0.055$
and is in a mature position
(cf.~\eqref{MaturePremature}).
Trade liberalization (an increase in $\phi_{\rightarrow {\rm EU}}$)
reduces $\lambda_{\rm EU}$ and is unfavorable for the EU.
Under the asymmetric tariff, we observe a reversed trend:
Since $t_{{\rm EU},\phi_{\rightarrow {\rm EU}}}=0.113>0$,
the EU is in a premature position and 
 $\lambda_{\rm EU}$ increases with $\phi_{\rightarrow {\rm EU}}$.
Thus, 
the EU's success or failure of
tariff changes depends
on the UK's tariff choices. 
Shifts in the international trade environment 
act as a double-edged sword, 
that is, the strategic gamble summarized in Table~\ref{WinnerEU}.
The recommended trade strategy for the EU is to
maintain current tariff levels---avoiding risky moves---
while increasing national trade freeness
through infrastructure development.

\begin{table}
\caption{Trade strategies of the EU and the UK and the resulting 
population gainer}
\label{WinnerEU}
\centering
\begin{small}
\begin{tabular}{cc|c} 
\multicolumn{2}{c|}{Country's Tariff Policy} & 
Winner \\ 
EU & UK & (Population Gainer) \\ \hline 
Reduce Tariff & Hold Tariff & EU \\
 & Reduce Tariff & UK \\ \hline
Raise Tariff & Raise Tariff & EU \\
& Hold Tariff & UK \\  \hline
\end{tabular}
\end{small}
\end{table}

\subsection{The EU's Global Trade Strategy}

To design the EU's global trade strategy under both tariff types in \eqref{UK'sTrade},
we refer to the contour maps of the EU's population 
share $\lambda_{\rm EU}$
in Fig.~\ref{aEU-contour}. 
First, 
we note that $\lambda_{\rm EU}$ increases with $\phi_{\rm EU}$
for any $\phi_{\rightarrow{\rm EU}}$ and any tariff type.
Thus, increasing $\phi_{\rm EU}$ benefits the EU.

Next, we investigate the influence of 
$\phi_{\rightarrow{\rm EU}}$.
Whether $\lambda_{\rm EU}$ increases or decreases,
as $\phi_{\rightarrow{\rm EU}}$ changes, 
depends on whether the contour in Fig.~\ref{aEU-contour}
has a negative or positive slope,
respectively (cf.~Section~\ref{UK'sContourMap}).
The mechanism of this dependence can be 
understood from the parameter zones of the EU's trade positions 
in Fig.~\ref{EUAnalysisParameter}
that vary with the values of the parameters 
$(\phi_{\rightarrow{\rm EU}},\phi_{\rm EU})$
and with the tariff type.
As $\phi_{\rm EU}$ increases from 0, 
a premature position (negative slope) colored red transitions 
to a mature position (positive slope) colored blue
for any $\phi_{\rightarrow{\rm EU}}$, 
as in the UK's case
 (cf.~Fig.~\ref{UKAnalysisParameter}).

\begin{figure}
   \centering\small
\begin{tabular}{@{\hspace{-2.5mm}}cc}
 \includegraphics[scale=0.38]{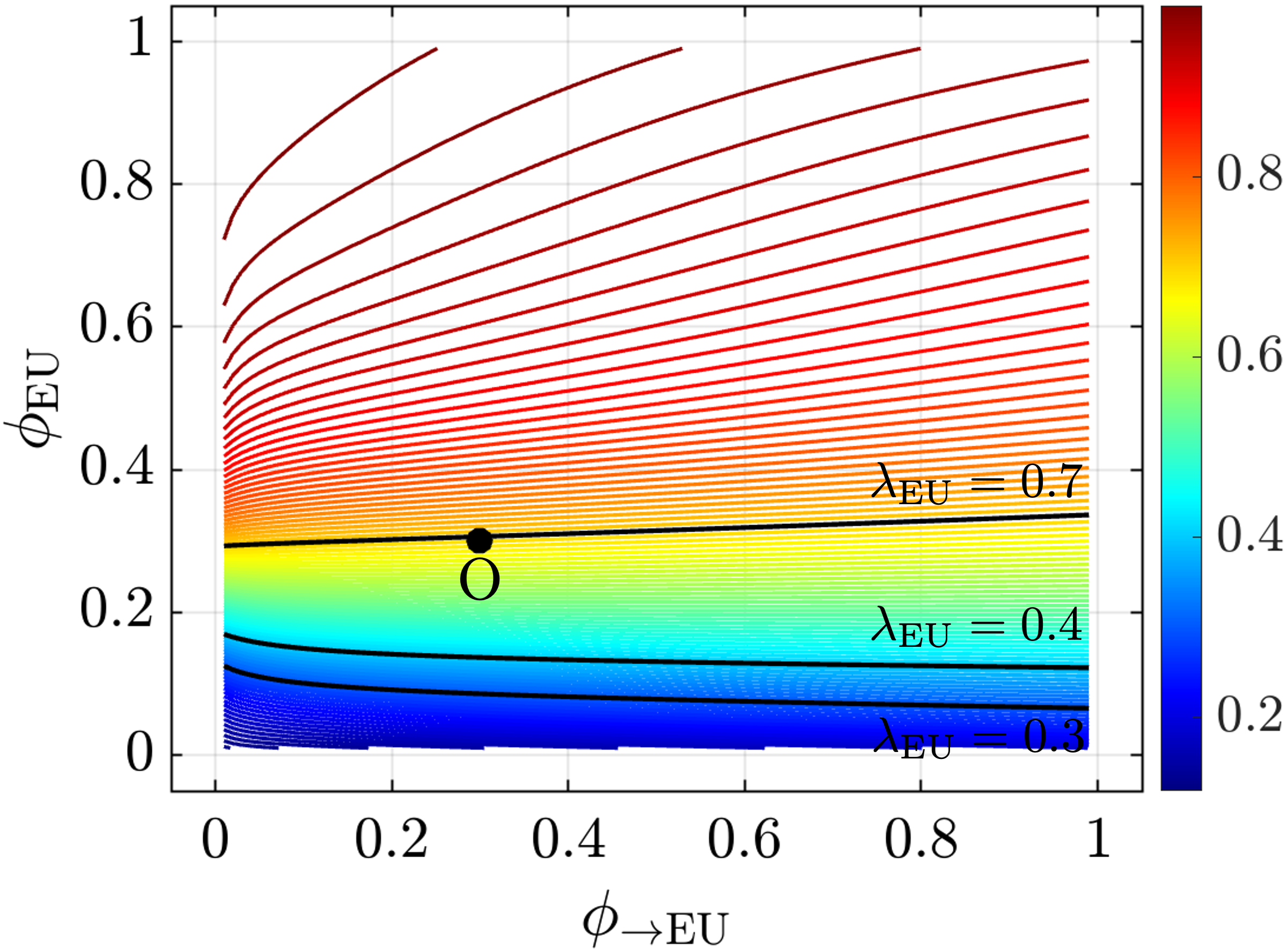} &
 \includegraphics[scale=0.38]{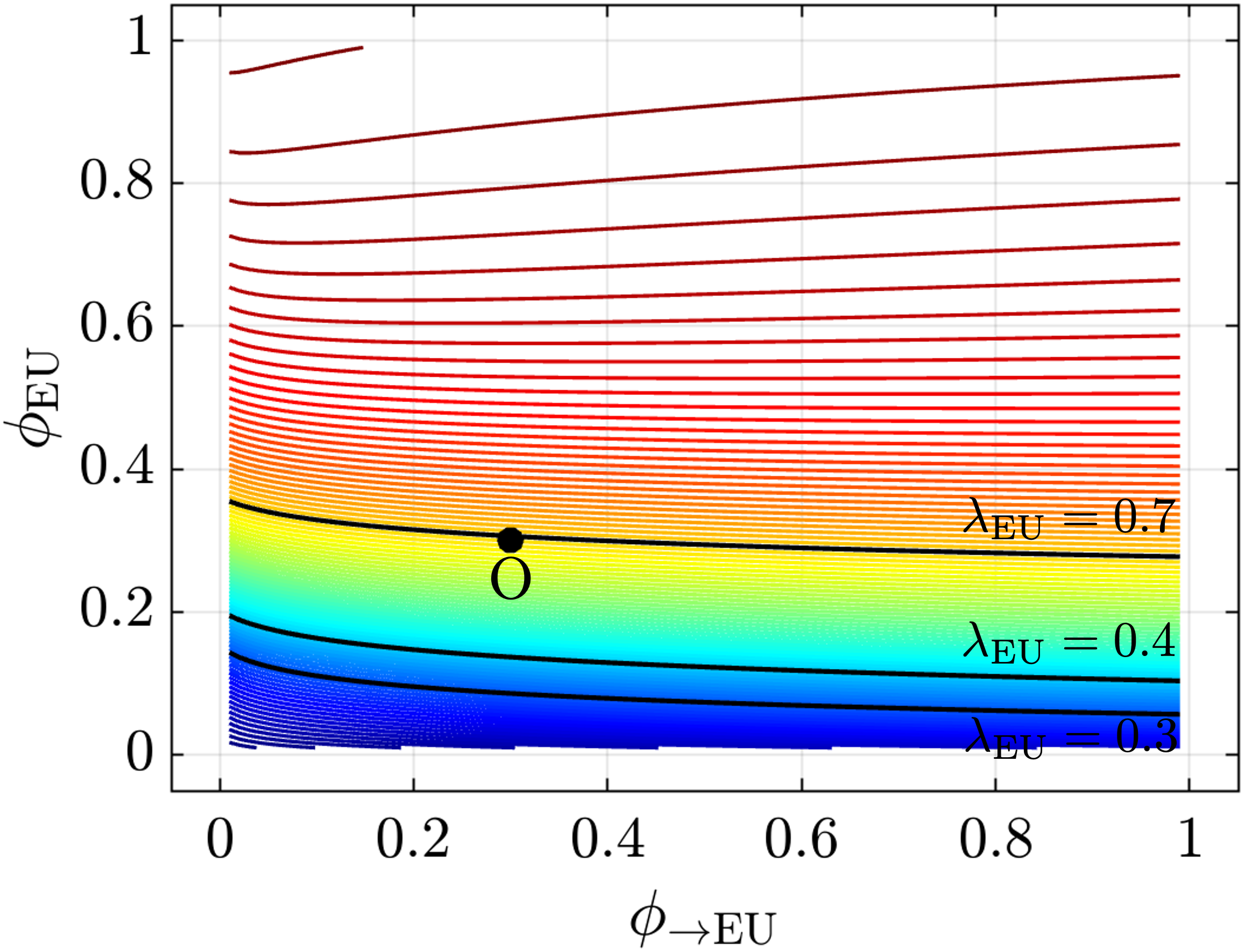} \\
(a) Reciprocal tariff  ($\phi_{\rightarrow{\rm UK}}=\phi_{\rightarrow{\rm EU}}$) &
(b) Asymmetric tariff ($\phi_{\rightarrow{\rm UK}}=0.3$)
  \end{tabular}
 \caption{
  Contour maps of $\lambda_{\rm EU}$ in the space of 
  $(\phi_{\rightarrow{\rm EU}},\phi_{\rm EU})$
  plotted under both tariff types.
   }
    \label{aEU-contour}

    \vspace{5mm}

   \centering\small
\begin{tabular}{cc}
 \includegraphics[scale=0.38]{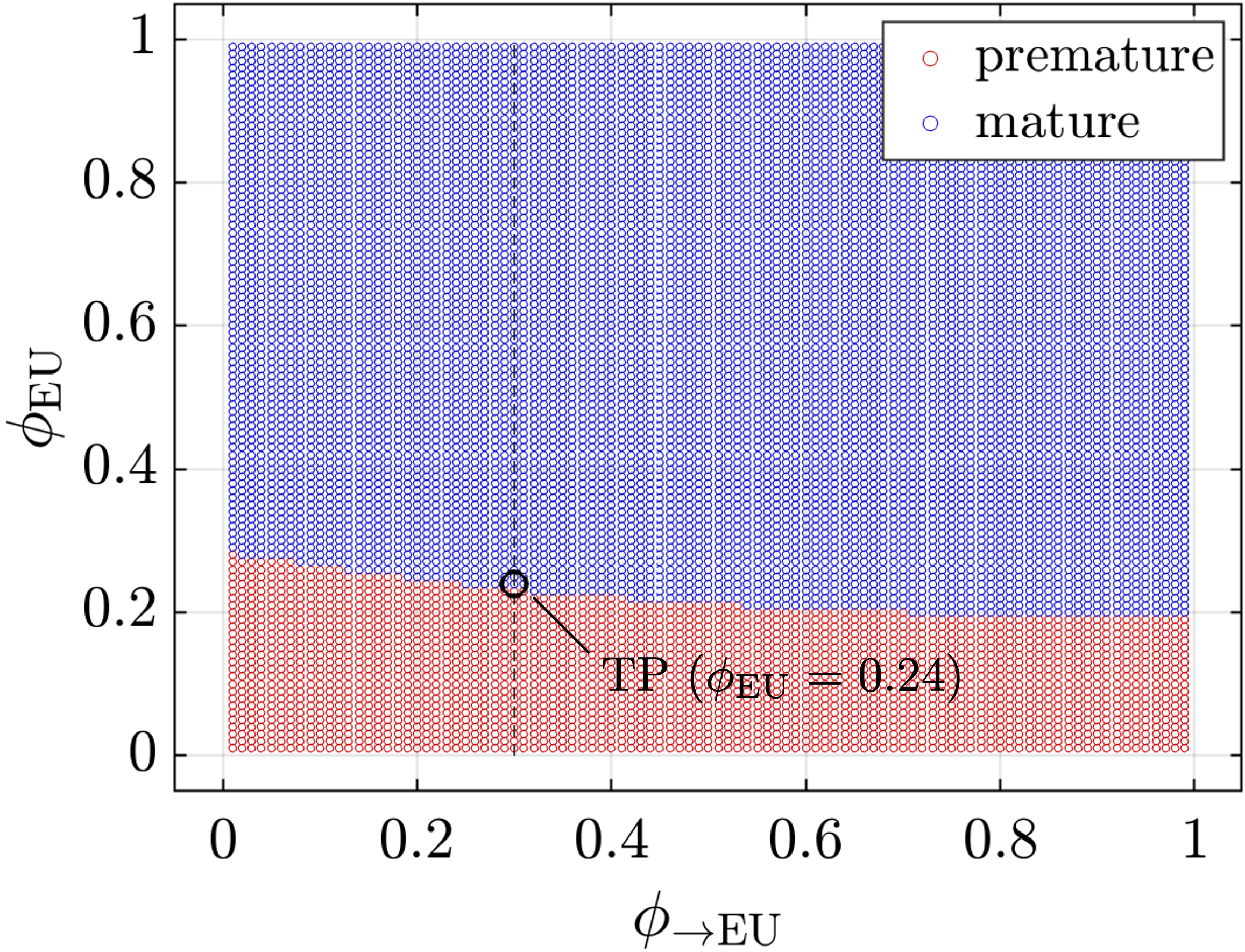} &
 \includegraphics[scale=0.38]{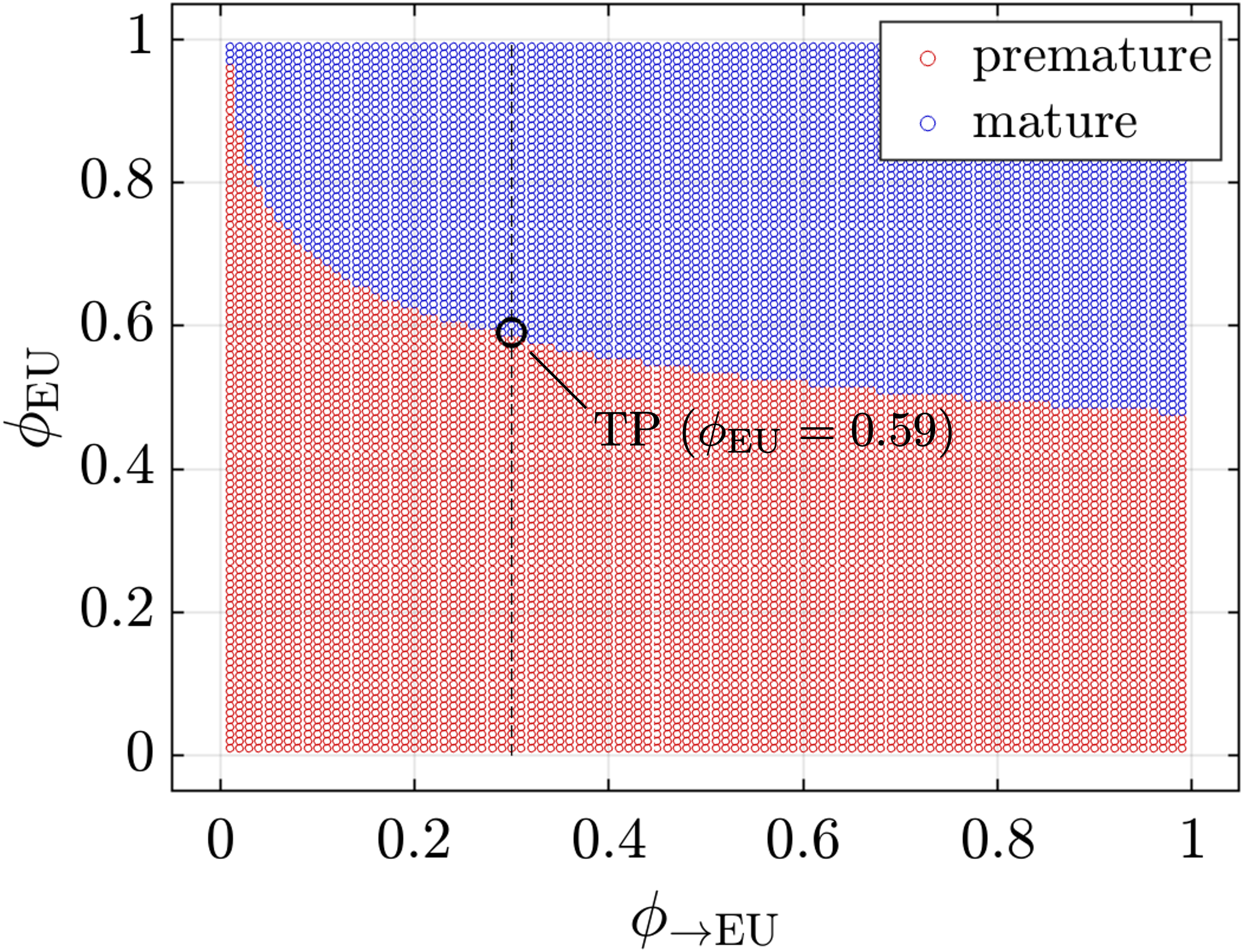} \\ 
(a) Reciprocal tariff  ($\phi_{\rightarrow{\rm UK}}=\phi_{\rightarrow{\rm EU}}$) &
(b) Asymmetric tariff ($\phi_{\rightarrow{\rm UK}}=0.3$)
\end{tabular}
 \caption{ Parameter zones of trade positions for the EU
in the space of $(\phi_{\rightarrow {\rm EU}},\phi_{\rm EU})$.
The zone of a premature position is colored red
and the zone of a mature one is colored blue.
 } 
   \label{EUAnalysisParameter}
  \end{figure}

The origin point O at 
$(\phi_{\rightarrow {\rm EU}},\phi_{\rm EU})=(0.3,0.3)$
is in a mature position under the reciprocal tariff
but in a premature one under the asymmetric tariff.
In contrast,
the origin point of the UK is in a premature position under both tariff types
(cf.~Fig.~\ref{UKAnalysisParameter}).
The EU and the UK have different 
locations of the turning points TP, although they share the same mechanism governing the transition of their trade positions.
The turning point is located below the origin point $(\phi_{\rm EU}=0.24)$
for the present case
but above the origin point
 ($\phi_{\rm UK}=0.38$ in Fig.~\ref{UKAnalysisParameter})
in the case of the UK's strategy.

To sum up, the EU's suggested global trade strategy 
is to increase its national trade freeness.
The EU's choice of either trade liberalization or protectionism
is contingent on the UK's tariff policy and is summarized 
in Table~\ref{EU-positions}
for several  ranges of $\phi_{\rm EU}$.

\bigskip
\begin{table}[h!]
\centering
\caption{Classification of the EU's
 trade position
and suggested trade policy
($\phi_{\rightarrow{\rm EU}}\nearrow$ and 
$\phi_{\rightarrow{\rm EU}}\searrow$:
an increase and a decrease of
$\phi_{\rightarrow{\rm EU}}$, respectively)
}\label{EU-positions}
\begin{tabular}{cc|cc}
Range of $\phi_{\rm EU}$       &  Tariff type &EU's position &  Suggested trade policy
 \\ \hline
$0.30<\phi_{\rm EU} <0.59$  & Reciprocal &  Mature & Protectionism 
($\phi_{\rightarrow{\rm EU}}\searrow$) \cr
                                                   & Asymmetric& Premature & Liberalization
($\phi_{\rightarrow{\rm EU}}\nearrow$) \cr
$0.59<\phi_{\rm EU} <1.00$   & Both types & Mature &  Protectionism
($\phi_{\rightarrow{\rm EU}}\searrow$) \cr
\end{tabular}
\end{table}

\section{Concluding Remarks}\label{sec:global_concl}

This study presented a reduction 
method for analyzing international trade competition 
within a hierarchical spatial system in economic geography.
The original region-level governing equation
was reduced to a country-level equation 
and an alliance-level equation both with 
substantially fewer degrees of freedom,
thereby yielding insights into 
international trade competition.
We then developed analytical methods based on these reduced equations.
The population-gradient matrix quantifies the
influence of trade costs 
and helps design an instantaneous strategy 
for a given trade environment, such as Brexit.
The inverse analysis generates contour maps of the target country’s population share, providing a global view of the influence of these parameters.

We applied the proposed method to analyze trade competition among 
the UK, France, and Germany
during the Brexit period, incorporating internationally mobile workers.
The analysis reveals several mechanisms driving international migration.
The population-gradient matrix helps analyze how variations in transport costs affect country-level populations.
We demonstrated the crucial role of domestic infrastructure development
and clarified how 
tariff policies can attract mobile workers.

We considered a pre-Brexit EU single-market scenario in which the three countries jointly designed their national and international trade environments.
Ironically, trade liberalization benefits the UK the most,
even though it was the country that initiated 
economic disintegration through Brexit.

In the post-Brexit trade environment,
the UK can gain mobile population primarily through
domestic infrastructure development.
Trade liberalization benefits the UK irrespective of the EU’s tariff policy.
The recommended strategy for the UK is to enhance 
both domestic and import trade freeness.
Although trade liberalization favors the UK in the original state, 
the UK’s trade position would undergo a phase shift 
after a significant increase in the domestic trade freeness,
making protectionism favorable for the UK.
 
In the EU’s post-Brexit trade strategy,
investing in its domestic infrastructure
always favors the EU.
Altering the tariff level is a double-edged sword,
because the consequence depends critically on the UK’s response.
When international trade freeness is increased by reducing the tariff,
the EU loses population under the reciprocal tariff
but gains population 
if the UK retains its tariff level.

When a country raises a tariff against another,
how the affected country responds is critical.
There may be a political temptation to 
introduce a retaliatory tariff.
However, our results indicate that retaliatory tariffs are detrimental
to both the UK and the EU.
Although tariffs can be imposed quickly and may seem attractive,
this study emphasizes the importance of infrastructure
development.
Such development takes time---possibly many years, spanning the terms of
several presidents or prime ministers.

While this study employs the Helpman (1998) model
in the analysis, 
the proposed approach applies to a broad class of general-equilibrium environments.
For example, it can be used to study how tariff policies shape the spatial distribution of heterogeneous workers and capital across countries, as well as the cross-border relocation of firms.
It is  left for future research to apply this approach to other economic geography models
 (cf.~\citealp{Zeng.Zhao.2010},
\citealp{Fajgelbaum.Gaubert.2020}, and
\citealp{Janeba.Schulz.2024}).

We conclude this study with a remark on the US's trade tactics.
The US raised the tariffs on steel and aluminum imports
during the President Trump's first term
and ^^ ^^ metal production picked up,
but higher costs slowed other industries”
\citep{Sachdev.Rao.2025}.
The analysis presented in this study 
provides insight into the consequences 
of the high tariffs introduced by the US 
 in 2025.
The US can raise tariffs 
first and then invest in domestic infrastructure,
possibly using some of the tariff revenue.
Such tactics may strain other countries but
can also undermine the US economy.
These tactics are fragile, as their success
depends critically on how other countries respond.
A critical task for future research 
is to investigate further the mechanisms of trade competition involving the US.

\OMIT{
\section*{Acknowledgements}
We gratefully acknowledge the essential academic guidance of Prof. J.-F. Thisse. 
This work has received Grants-in-Aid from JSPS 21K04299/24K22968/\allowbreak24K16372/25K01339/25H00543
and Fusion Oriented Research for Disruptive Science and Technology (Grant No. JPMJFR215M).
}



\begin{singlespace}

\end{singlespace}






\appendix 

\section{Details of the Helpman Model}\label{HelpmanDetails}

\subsection{Basic Framework}

Preferences over housing and differentiated goods are identical for all workers, as shown in Section~\ref{HelpmanModel}.
Each place hosts a continuum of firms.
Each firm produces a single type of differentiated good using only labor as the factor of production. 
To produce $x_i(\varphi)$ units of the $\varphi$th differentiated good, 
each firm requires $f + c \, x_i(\varphi)$ units of labor.
Accordingly, the total production cost of a firm located in place $i$ is 
$w_i(f+c \, x_i(\varphi))$.
Each firm located in place $i$ maximizes its profit which is given by
\begin{align}
    \pi_i(\varphi) = \sum_{j \in N} p_{ij}(\varphi) q_{ij}(\varphi)\lambda_j 
    - w_i \left\{f+ c \, x_i(\varphi)\right\},
\end{align}
where $q_{ij}(\varphi)\lambda_j$ denotes total demand in place $j$ for the $\varphi$th differentiated good produced in place $i$.
The labor market is perfectly competitive, and all firms treat wages as given.

Under iceberg transportation costs 
(cf.~Section~\ref{IcebergTransportationCost}), only a fraction $1/\tau_{ij}$ of a unit shipped from place $i$ to place $j$ arrives.
Consequently, the total supply satisfies
 $x_i(\varphi)=\sum_{j \in N}\tau_{ij} \, q_{ij}(\varphi)\lambda_j$.
Hence, the first-order condition for profit maximization yields 
\begin{align}\label{price_variety}
    & p_{ij}(\varphi) = \frac{\sigma}{\sigma - 1} \, c \tau_{ij} w_i,
\end{align}
which is identical for all differentiated goods.
Likewise, $q_{ij}(\varphi)$ and $\pi_i(\varphi)$
are independent of $\varphi$.
We hereafter omit the argument of these variables.

In this model, market clearing holds for a given spatial 
distribution vector $\bm{\lambda}$ of mobile workers.  
Under the land market-clearing condition,
 the housing stock at each place satisfies
 $S = \lambda_i h_i$.
Substituting $h_i=\frac{(1 - \mu)Y_i }{r_i}$ in \eqref{qijhivi}
into $S = \lambda_i h_i$
gives the equilibrium housing price:
\begin{align}\label{housing_price_equilibrium}
  r_i = \frac{(1 - \mu) Y_i \lambda_i }{S}.
\end{align}
Combining this equation with 
the assumption of public ownership
yields the expenditure of a worker residing in place $i$: 
\begin{align}
    Y_i = w_i + (1 - \mu) \sum_{j \in N} \lambda_j Y_j.
\end{align}
Using this equation, we express equilibrium expenditure as a function of 
the wage vector $\bm{w}=(w_i)$ 
and the distribution vector $\bm{\lambda}=(\lambda_i)$ as
\begin{align}\label{income_wage}
    \bm{Y} = 
   \left[
        I - (1 - \mu) \bm{1} \, \bm{\lambda}^\top 
    \right]^{-1} \bm{w},
\end{align}
where 
$\bm{Y}=(Y_i)$,
$\bm{1} = ({1, \ldots, 1})^{\top}
 \in \mathbb{R}\sp{n}$, 
and $I$ denotes the $n \times n$ identity matrix.

The labor market-clearing and zero-profit conditions are given by 
\begin{align}
\label{LaborClearZeroProfit}
  &
    \lambda_i = m_i
    \left[
        \left(\sum_{j\in N} c \lambda_j q_{ij} \tau_{ij}\right)
         + f 
    \right], \quad
  \sum_{j \in N} \lambda_j q_{ij} (p_{ij} - c \tau_{ij} w_i ) - w_i f = 0.
\end{align}
Substituting the price from \eqref{price_variety} into
the second relation of \eqref{LaborClearZeroProfit} yields 
\begin{align}\label{zero_profit_deform} 
  \frac{1}{\sigma - 1}\sum_{j\in N} c \lambda_j q_{ij} \tau_{ij}  = f.
\end{align} 
Using the first relation in \eqref{LaborClearZeroProfit}, 
together with the above equation, gives
  $m_i = \frac{\lambda_i}{\sigma f}$.
Substituting this relation and \eqref{price_variety} into the price index 
\eqref{price_index} yields 
\begin{align}\label{price_index_equilibrium}
  P_i = 
  \left[
      \sum_{j\in N}
      \frac{\lambda_j}{f\sigma}
      \left(
          \frac{\sigma}{\sigma - 1}
          c \tau_{ji} w_j
      \right)^{1 - \sigma}
  \right]^{1 / (1 - \sigma)}.
\end{align} 
Substituting $q_{ij}$ given in \eqref{qijhivi}
into \eqref{zero_profit_deform}
and using  \eqref{price_index_equilibrium},
we obtain the following fixed-point problem for the wage:
\begin{align}\label{wage_equation}
    \mu
    \sum_{j\in N}
    \left[
   { (\lambda_j Y_j \tau_{ij}^{1 - \sigma} w_i^{1 - \sigma}) / }{ \left(\sum_{k\in N} \lambda_k \tau_{kj}^{1 - \sigma}
        w_k^{1 - \sigma} \right)
    }
    \right] 
    = w_i.
\end{align}
This equation implicitly determines the equilibrium wage $w_i$.
Substituting the price index \eqref{price_index_equilibrium} 
and the housing price \eqref{housing_price_equilibrium} into the indirect utility
function $v_i$ in \eqref{qijhivi} gives 
\begin{align}
    v_i =  \zeta \,
    \lambda_{i}^{\mu - 1} Y_i^\mu 
    \left(
      \sum_{j \in N} \lambda_j (\tau_{ji} w_j)^{1 - \sigma}
    \right)^{\mu / (\sigma - 1)},
\end{align}
where $\zeta$ is a constant that depends on exogenous variables:
\begin{align*}
\zeta=
  \left(
    \frac{1}{f \sigma}
  \right)^{-\mu / (1 - \sigma)}
  \left(\frac{\sigma c}{\sigma - 1}\right)^{-\mu}
  \left(
    \frac{1 - \mu}{S}
  \right)^{-(1 - \mu)}.
\end{align*}

\subsection{Parameter Sensitivity of Economic Variables}

In the Helpman model, we can derive explicit expressions for the derivatives of the functions with respect to the transportation cost $\tau_k$.

To simplify the presentation of matrix expressions in the sequel, 
we introduce the following short-hand expressions: 
\begin{align*}
  & \bm{x}^{\alpha} \equiv (x_i^\alpha)_{i},
  \quad
   (\bm{x}^\alpha \bm{y}^\beta)^* \equiv (x_i^\alpha y_i^\beta)_{i},
  \quad
  (\bm{x} A)^\dagger  \equiv ( x_i A_{ij} )_{i, j},
\end{align*}
and so on.
Here, 
$\bm{x}$ and $(\cdot)^*$ are column vectors, 
and $A$ and $(\cdot)^\dagger$ are matrices.

\subsubsection{Wage}

\begin{lemma}\label{derivative_wage_implicit}
  Derivative $\partial \bm{w} / \partial \tau$ can be analytically obtained. 
\end{lemma}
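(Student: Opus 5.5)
We differentiate the wage fixed-point equation \eqref{wage_equation} implicitly with respect to $\tau$, holding $\bm{\lambda}$ fixed. This is the partial derivative needed in $\partial\bm v/\partial\bm\tau$ and in $G$ of \eqref{GJacobianGeneral}; the induced population change is added afterwards via $-J^{-1}G$.

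\emph{Step 1 (setup).} Introduce the trade-share matrix
\[
M_{ij}=\frac{\lambda_j\,\tau_{ij}^{1-\sigma}w_i^{1-\sigma}}{\Delta_j},
\qquad
\Delta_j=\sum_{k\in N}\lambda_k\tau_{kj}^{1-\sigma}w_k^{1-\sigma}.
\]
Then \eqref{wage_equation} reads $\bm w=\mu M\bm Y$, and by \eqref{income_wage} we have $\bm Y=K\bm w$ with $K=[I-(1-\mu)\bm1\bm\lambda^\top]^{-1}$. The matrix $K$ has the explicit Sherman--Morrison form $I+\frac{1-\mu}{\mu}\bm1\bm\lambda^\top$, using $\bm\lambda^\top\bm1=1$. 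Define the residual $\bm R(\bm w,\tau)=\mu M(\bm w,\tau)K\bm w-\bm w=\bm0$.

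\emph{Step 2 (differentiate).} By the implicit function theorem,
\[
\frac{\partial\bm w}{\partial\tau}=-\Big(\frac{\partial\bm R}{\partial\bm w}\Big)^{-1}\frac{\partial\bm R}{\partial\tau}.
\]
Both Jacobians are computed in closed form from the logarithmic derivatives of $M_{ij}$. With respect to $w_l$,
\[
\frac{\partial\ln M_{ij}}{\partial w_l}=(1-\sigma)\Big[\frac{\delta_{il}}{w_i}-\frac{\lambda_l\tau_{lj}^{1-\sigma}w_l^{-\sigma}}{\Delta_j}\Big].
\]
With respect to $\tau$,
\[
\frac{\partial\ln M_{ij}}{\partial\tau}=(1-\sigma)\Big[\frac{\partial\ln\tau_{ij}}{\partial\tau}-\sum_k\frac{\lambda_k\tau_{kj}^{1-\sigma}w_k^{1-\sigma}}{\Delta_j}\frac{\partial\ln\tau_{kj}}{\partial\tau}\Big],
\]
where $\partial\ln\tau_{ij}/\partial\tau$ is read off from Section~\ref{IcebergTransportationCost}: it equals $L_{ij}$, $L_{ii^*}$, $L_{jj^*}$, or $1$ depending on which parameter $\tau$ is. In the shorthand notation $(\cdot)^*,(\cdot)^\dagger$ these assemble into
\[
\frac{\partial\bm R}{\partial\bm w}=\mu\big[(\partial_{\bm w}M)\,\bm Y+MK\big]-I,
\qquad
\frac{\partial\bm R}{\partial\tau}=\mu\big(M\circ\partial_\tau\ln M\big)\bm Y,
\]
where $\circ$ denotes the elementwise product. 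Each is an explicit matrix or vector expression.

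\emph{Step 3 (invertibility).} The main obstacle is that $\partial\bm R/\partial\bm w$ is singular, because wages are determined only up to scale. The system is homogeneous of degree zero in $M$, and \eqref{wage_equation} is homogeneous of degree one in $\bm w$, so $\bm w$ itself spans the kernel. To handle this, add a normalization, for instance a numéraire $w_1=1$ or $\bm\lambda^\top\bm w=\mathrm{const}$. Impose it by replacing one row of $\partial\bm R/\partial\bm w$ with the normalization gradient (or, equivalently, by bordering the matrix). Under the generic nondegeneracy of the equilibrium, the modified matrix is nonsingular, and $\partial\bm w/\partial\tau$ then follows analytically. Since $v_i$ is homogeneous of degree zero in $(\bm w,\bm Y)$, the normalization does not affect the subsequent sensitivities $\partial\bm v/\partial\tau$.
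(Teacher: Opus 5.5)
Your argument follows the paper's route. The paper also writes \eqref{wage_equation} as $\bm{W}(\bm{w},\tau)=\bm{0}$: its $\bm{\gamma}=D^\top(\bm{\lambda}\bm{w}^{1-\sigma})^*$ is your $\Delta$, and $(\bm{w}^{1-\sigma}D)^\dagger\,\mathrm{diag}[(\bm{\lambda}\bm{\gamma}^{-1})^*]$ is your $M$. It then applies the implicit function theorem, $\partial\bm{w}/\partial\tau=-(\partial\bm{W}/\partial\bm{w})^{-1}\partial\bm{W}/\partial\tau$, and assembles both Jacobians in closed form using $\partial\bm{Y}/\partial\bm{w}=[I-(1-\mu)\bm{1}\bm{\lambda}^\top]^{-1}$ and $\partial\bm{\gamma}/\partial\bm{w}$, exactly as you do.

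The genuine difference is your Step~3, and here you are right where the paper is silent. The paper's Jacobian includes the dependence of $\bm{Y}$ on $\bm{w}$, so $\bm{W}$ is homogeneous of degree one in $\bm{w}$. Hence $(\partial\bm{W}/\partial\bm{w})\,\bm{w}=\bm{0}$ at equilibrium, and the matrix the paper inverts is singular. Your normalization (a num\'eraire, or $\bm{\lambda}^\top\bm{w}$ fixed) is what makes the formula well defined. Your remark that $\bm{v}$ is homogeneous of degree zero correctly explains why $G$ and $T^{\rm L}$ do not depend on that choice.

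One step should be made explicit: replacing a row is justified by Walras' law. From $\bm{\lambda}^\top M=\bm{\lambda}^\top$ and $\bm{\lambda}^\top K=\mu^{-1}\bm{\lambda}^\top$ (using $\bm{\lambda}^\top\bm{1}=1$), you get $\bm{\lambda}^\top\bm{R}(\bm{w},\tau)\equiv 0$ for all $\bm{w}$ and $\tau$. Therefore $\bm{\lambda}^\top\partial_{\bm{w}}\bm{R}=\bm{0}$ everywhere (a second certificate of singularity) and $\bm{\lambda}^\top\partial_\tau\bm{R}=0$. This has three consequences:
\begin{itemize}
\item any row $i$ with $\lambda_i>0$ is a combination of the others on both sides of the linearized system;
\item the right-hand side lies in the range of $\partial_{\bm{w}}\bm{R}$;
\item the row-replaced (or bordered) system returns a solution of the full linearized system.
\end{itemize}
Without this observation, dropping a row could discard an independent equation.

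Finally, the normalization does matter for nominal quantities such as the trade values $V_{ij}$. Their sensitivities in the paper also use $\partial\bm{w}/\partial\tau$, so a fixed normalization must be used consistently there.
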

\begin{proof}
  Using the wage equation \eqref{wage_equation}, we derive the Jacobian matrix of the wage $\bm{w}$ with respect to the exogenous variable $\tau$.
This equation can be expressed in the following matrix form:
\begin{align}
\mu \, (\bm{w}^{1 - \sigma}\,
D)^\dagger \,
( \bm{\lambda} \bm{Y} \bm{\gamma}^{-1} ) ^* \,
=
\bm{w}, 
\end{align}
where $\bm{\gamma} = D^\top (\bm{\lambda} \bm{w}^{1 - \sigma})^*$ with $D = (\tau_{ij}^{1 - \sigma})_{i,j}$.
Applying the implicit function theorem,
 we obtain 
\begin{align}\label{Jacobian_wage}
  \frac{\partial \bm{w}}{\partial \tau} = 
  -
  \left( 
    \frac{\partial \bm{W}}{\partial \bm{w} }
  \right)^{-1}
    \frac{\partial \bm{W}}{\partial \tau },
\end{align}
where 
\begin{align}\label{Wexpression in w}
 & \bm{W} = \mu \, (\bm{w}^{1 - \sigma}\,
D)^\dagger \,
(\bm{\lambda} \bm{Y} \bm{\gamma}^{-1})^* \,
-
\bm{w}.
\end{align}
The wage $\bm{w}$ in equilibrium is the solution to the equation $\bm{W} = 0$.
The derivatives of $\bm{W}$ with respect to $\tau$
 and $\bm{w}$ are given as follows:
\begin{align}
  \frac{\partial \bm{W}}{\partial \tau } = &\,
 \mu \, \{ 
  [(\bm{w}^{1 - \sigma}\,
      D_{\tau})^\dagger \,
  (\bm{\lambda} \bm{Y} \bm{\gamma}^{-1})^*]^* -
  [(\bm{w}^{1 - \sigma}\, D)^\dagger
   (\bm{\lambda} \bm{Y} \bm{\gamma}^{-2})^* 
   D_{\tau}^\top (\bm{\lambda} \bm{w}^{1 - \sigma})^*
  ]^* \},
  \\
  \frac{\partial \bm{W}}{\partial \bm{w} } = &\, 
  \mu (1 - \sigma) \,
\mathrm{diag}\left[ (\bm{w}^{- \sigma} D)^\dagger \, (\bm{\lambda} \bm{Y} \bm{\gamma}^{-1})^*\right] 
\nonumber
\\ & +
\mu \, (\bm{w}^{1 - \sigma} \,
D)^\dagger \,
\left[
\left(\bm{\lambda} \bm{\gamma}^{-1} \frac{\partial \bm{Y}}{\partial \bm{w}} \right)^\dagger
- \left(\bm{\lambda} \bm{Y} \bm{\gamma}^{-2} \frac{\partial \bm{\gamma}}{\partial \bm{w}}
\right)^\dagger \right] - I,
\end{align}
where $D_{\tau} = \partial D / \partial \tau = ( \partial D_{ij} / \partial \tau )_{i,j}$. 
$\partial \bm{Y} / \partial \bm{w}$ in the above equation can be obtained by virtue of \eqref{income_wage}:
\begin{align}
      \frac{\partial \bm{Y}}{\partial \bm{w}} &= 
    \left[
      I - (1 - \mu)\,\bm{1}\bm{\lambda}^{\top}
    \right]^{-1},
    \label{Jacobian_income_with_wage} 
\end{align}
Moreover, we have 
\begin{align}
\frac{\partial \bm{\gamma}}{\partial \bm{w}} &= 
(1 - \sigma) D^\top 
\mathrm{diag}[(\bm{\lambda} \bm{w}^{- \sigma})^*].
\end{align}
\end{proof}

\subsubsection{Indirect Utility} 

We can express the indirect utility $\bm{v} = (v_i)_i$ in \eqref{qijhivi} as
\begin{align}
  \bm{v} = 
    \left(\frac{Y_i}{P_i^\mu r_i^{1 - \mu}}\right)_i
=   (\bm{P}^{-\mu} \bm{r}^{-(1 - \mu)} \bm{Y})^*,
\end{align}
where $\bm{P} = (P_i)_i$, $\bm{r} = (r_i)_i$, and $\bm{Y} = (Y_i)_i$.
The Jacobian matrix of $\bm{v}$ with respect to $\bm{\tau}$ is
\begin{align}\label{par_equ_indi}
\frac{\partial \bm{v}}{\partial \bm{\tau}}
&=
\left(\bm{P}^{-\mu} \bm{r}^{1-\mu} \frac{\partial \bm{Y}}{\partial \bm{\tau}}\right)^\dagger
- \mu \left(\bm{P}^{-(1+\mu)} \bm{r}^{\,1-\mu} \bm{Y} \frac{\partial \bm{P}}{\partial \bm{\tau}}\right)^\dagger 
- (1-\mu) \left(\bm{P}^{-\mu} \bm{r}^{-\mu} \bm{Y} \frac{\partial \, \bm{r}}{\partial \, \bm{\tau}}\right)^\dagger.
\end{align}
The three terms on the right hand side express the effects on indirect utility
of changes in income,  price index, and housing prices, 
respectively.
For $\tau = \tau_k$ $(k = 1, \ldots, p)$, the column vectors of 
the Jacobian matrices in the above equation are
given as follows:

\begin{lemma}\label{lemma_partial_derivative_Helpman}
  We have 
\begin{align}
  \frac{\partial \bm{Y}}{\partial \tau} & =
\left[
I - (1 - \mu)\,\bm{1}\bm{\lambda}^{\top}
\right]^{-1} 
 \frac{\partial \bm{w}}{\partial \tau},
\label{Jacobian_income_with_tau} \\
\frac{\partial \bm{r}}{\partial \tau} & = 
\frac{1 - \mu}{S} \, \left(\bm{\lambda} 
\frac{\partial \bm{Y}}{\partial \tau}\right)^*,
  \label{differentiation_rent}
\\
\frac{\partial \bm{P}}{\partial \tau} & =
  \frac{\kappa^{1 - \sigma}}{1 - \sigma}
\bm{P}^{\,\sigma} 
\left[
D_{\tau}^{\top} (\bm{\lambda}\bm{w}^{\,1-\sigma})^* + 
(1 - \sigma) \, D^\top 
\left(\bm{\lambda} \bm{w}^{-\sigma} 
\frac{\partial \bm{w}}{\partial \tau}\right)^*\right],
\label{P-differentiation-tau}
\end{align}  
where $D = (\tau_{ij}^{1 - \sigma})_{i,j}$, 
$D_{\tau} = \partial D / \partial \tau = ( \partial D_{ij} / \partial \tau )_{i,j}$, 
$\partial \bm{w} / \partial \tau$ is given in \eqref{Jacobian_wage}, 
and $\kappa$ is a constant:
\begin{align*}
\kappa
=
\frac{\sigma c}{\sigma - 1}
\left(
\frac{1}{f \sigma}\right)^{1 / (1-\sigma)}.
\end{align*}
\end{lemma}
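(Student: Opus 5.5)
The plan is to differentiate each of the three closed-form relations of the Helpman model with respect to a single transport parameter $\tau=\tau_k$. Throughout, the population vector $\bm{\lambda}$ is held fixed, as it is in the partial derivative $\partial\bm{v}/\partial\bm{\tau}$ entering $G$ in \eqref{GJacobianGeneral}. Each component then becomes a chain-rule computation, with $\partial\bm{w}/\partial\tau$ taken as already known from Lemma~\ref{derivative_wage_implicit}.

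For \eqref{Jacobian_income_with_tau}, I start from the income relation \eqref{income_wage}, $\bm{Y}=[I-(1-\mu)\bm{1}\bm{\lambda}^\top]^{-1}\bm{w}$. Since $\bm{\lambda}$ is fixed and $\tau$ does not appear in the matrix, the inverse is a constant linear map, and differentiation gives the claim at once; this is the $\tau$-analogue of \eqref{Jacobian_income_with_wage}.

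For \eqref{differentiation_rent}, I differentiate the housing price \eqref{housing_price_equilibrium}, $r_i=(1-\mu)Y_i\lambda_i/S$, componentwise. With $\lambda_i$ and $S$ fixed this gives $\partial r_i/\partial\tau=(1-\mu)\lambda_i(\partial Y_i/\partial\tau)/S$, which is the Hadamard-product form $(\bm{\lambda}\,\partial\bm{Y}/\partial\tau)^*$ in the shorthand notation.

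For \eqref{P-differentiation-tau}, I rewrite \eqref{price_index_equilibrium} as $P_i^{1-\sigma}=\kappa^{1-\sigma}\sum_j\lambda_j\tau_{ji}^{1-\sigma}w_j^{1-\sigma}=\kappa^{1-\sigma}[D^\top(\bm{\lambda}\bm{w}^{1-\sigma})^*]_i$. To confirm the constant, I collect $(1/(f\sigma))\,(\sigma c/(\sigma-1))^{1-\sigma}=\kappa^{1-\sigma}$ with the stated $\kappa$. I then differentiate both sides in $\tau$. The left side gives $(1-\sigma)P_i^{-\sigma}\partial P_i/\partial\tau$. The right side gives $\kappa^{1-\sigma}$ times the sum of a term $D_\tau^\top(\bm{\lambda}\bm{w}^{1-\sigma})^*$ from the explicit dependence of $D$ on $\tau$ and a term $(1-\sigma)D^\top(\bm{\lambda}\bm{w}^{-\sigma}\partial\bm{w}/\partial\tau)^*$ through the wages. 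Solving for $\partial P_i/\partial\tau$ by multiplying by $P_i^\sigma/(1-\sigma)$ yields the formula.

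The only genuine obstacle is bookkeeping. The two dependence channels of $P$ on $\tau$, the explicit one through $D$ and the implicit one through $\bm{w}$, must not be conflated, and the transpose on $D$ must be right: $P_i$ sums over origins $j$ of $\tau_{ji}$, so $D^\top$ appears. Everything else is direct differentiation.
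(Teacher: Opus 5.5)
Your proposal is correct and follows the paper's proof. Like the paper, you differentiate \eqref{income_wage} and \eqref{housing_price_equilibrium} with $\bm{\lambda}$ held fixed, then differentiate the power form $\bm{P}^{1-\sigma}=\kappa^{1-\sigma}D^\top(\bm{\lambda}\bm{w}^{1-\sigma})^*$, separating the explicit $D_\tau$ channel from the wage channel; your checks of $\kappa^{1-\sigma}$ and of the transpose on $D$ just fill in details the paper leaves implicit.
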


\begin{proof}
By virtue of  \eqref{income_wage}, we have \eqref{Jacobian_income_with_tau}.
Moreover, using  \eqref{housing_price_equilibrium} yields the Jacobian matrix of the housing price $\bm{r}$ shown in  \eqref{differentiation_rent}.
Using Eq.~\eqref{price_index_equilibrium}, we rewrite the price index in the following matrix form:
\begin{align}
\bm{P}
=
\kappa
\left[
D^{\top} (\bm{\lambda}\,\bm{w}^{\,1-\sigma})^*
\right]^{1 / (1-\sigma)} \quad \Longrightarrow \quad
\bm{P}^{1-\sigma}
=
\kappa^{1-\sigma}
D^{\top} (\bm{\lambda}\,\bm{w}^{\,1-\sigma})^*
\end{align}
Using the above equation yields
\eqref{P-differentiation-tau}.
\end{proof}

\subsubsection{Trade Value} 

We decompose the change in the trade value $V_{ij}$ induced by a change in transportation cost $\tau_k$ into the effects of changes in price, the mass of varieties, demand, and population:
\begin{align}
  \frac{\mathrm{d} V_{ij}}{\mathrm{d} \tau_k} = 
  m_i q_{ij} \lambda_j \frac{\mathrm{d} p_{ij}}{\mathrm{d} \tau_k} + 
  p_{ij} q_{ij} \lambda_j \frac{\mathrm{d} m_i}{\mathrm{d} \tau_k} + 
  m_i p_{ij} \lambda_j \frac{\mathrm{d} q_{ij}}{\mathrm{d} \tau_k} + 
  m_i p_{ij} q_{ij} \frac{\mathrm{d} \lambda_j}{\mathrm{d} \tau_k}.
\end{align}
The derivatives of the above endogenous variables are given by 
\begin{align*}
  \frac{\mathrm{d} \Phi_{ij}}{\mathrm{d} \tau_k} &=  
  \frac{\partial \Phi_{ij}}{\partial \tau_k} + 
  \sum_{n \in N}
  \frac{\partial \Phi_{ij}}{\partial \lambda_n} \frac{\partial \lambda_n}{\partial \tau_k} \quad (\Phi=p,q),
  \\
  \frac{\mathrm{d} m_i}{\mathrm{d} \tau_k} &=  
  \frac{\partial m_i}{\partial \tau_k} + 
  \sum_{n \in N}
  \frac{\partial m_i}{\partial \lambda_n} \frac{\partial \lambda_n}{\partial \tau_k},
  \\
  \frac{\mathrm{d} \lambda_{j}}{\mathrm{d} \tau_k} &=   
  \frac{\partial \lambda_{j}}{\partial \tau_k},
\end{align*}
where 
\begin{align*}
  \frac{\partial p_{ij}}{\partial \tau_k} =& \,
  \frac{c \sigma}{\sigma - 1} \, \left( 
    w_i \frac{\partial \tau_{ij}}{\partial \tau_k} + 
    \tau_{ij} \frac{\partial w_i}{\partial \tau_k}
    \right),
  \\
  \frac{\partial p_{ij}}{\partial \lambda_n} =& \,
  \frac{c \sigma}{\sigma - 1} \, \left( 
    \tau_{ij} \frac{\partial w_i}{\partial \lambda_n}
    \right),
  \\
  \frac{\partial m_i}{\partial \tau_k} =&\, 0,
  \\
  \frac{\partial m_i}{\partial \lambda_n} =& 
  \begin{cases}
    1 / (\sigma f)  \quad (i = n), 
    \\ 
    0 \qquad \quad \ \,  (i \neq n), 
  \end{cases} 
  \\
  \frac{\partial q_{ij}}{\partial \tau_k} =&\, \mu 
  P_j^{-(1 - \sigma)}
  \left(
    p_{ij}^{-\sigma} \frac{\partial Y_j}{\partial \tau_k} - 
    \sigma Y_j p_{ij}^{-\sigma - 1} \frac{\partial p_{ij}}{\partial \tau_k}
  \right) 
  \nonumber
  \\ & 
  - 
  \mu (1 - \sigma) Y_j
  p_{ij}^{-\sigma}
  P_j^{-2(1 - \sigma)}
  \left(
    \sum_{o\in N}  m_o p_{oj}^{- \sigma} \frac{\partial p_{oj}}{\partial \tau_k}
  \right),
  \\
  \frac{\partial q_{ij}}{\partial \lambda_n} =&
  \mu 
  P_j^{-(1 - \sigma)}
  \left(
    p_{ij}^{-\sigma} \frac{\partial Y_j}{\partial \lambda_n} - 
    \sigma Y_j p_{ij}^{-\sigma - 1} \frac{\partial p_{ij}}{\partial \lambda_n}
  \right) 
  \nonumber
  \\ & 
  - 
  \mu Y_j
 p_{ij}^{-\sigma}
  P_j^{-2(1 - \sigma)}
  \left(
    \frac{p_{nj}^{1 - \sigma}}{\sigma f} + 
    (1 - \sigma) \sum_{o\in N}
    \frac{m_o}{p_{oj}^{\sigma}}
      \frac{\partial p_{oj}}{\partial \lambda_n}
  \right)
  .
\end{align*}
$\partial \bm{Y} / \partial \bm{\tau} = (\partial Y_i / \partial \tau_k)_{i,k}$ and 
$\partial \bm{w} / \partial \bm{\tau} = (\partial w_i / \partial \tau_k)_{i,k}$ 
are shown in Eqs.~\eqref{Jacobian_income_with_tau} and \eqref{Jacobian_wage}, respectively.
$\partial \bm{Y} / \partial \bm{\lambda} = (\partial Y_i / \partial \lambda_j)_{i,j}$ and $\partial \bm{w} / \partial \bm{\lambda} = (\partial w_i / \partial \lambda_j)_{i,j}$ in the above equations can be analytically obtained, as shown in the following lemma.
\begin{lemma}\label{}
  The derivatives $\partial \bm{Y} / \partial \bm{\lambda}$ and $\partial \bm{w} / \partial \bm{\lambda}$ can be analytically obtained. 
\end{lemma}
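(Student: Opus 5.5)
The plan follows the proof of Lemma~\ref{derivative_wage_implicit}, with $\tau$ replaced by $\bm{\lambda}$. There are two coupled relations. The first is the income relation \eqref{income_wage}, $\bm{Y}=[I-(1-\mu)\bm{1}\bm{\lambda}^\top]^{-1}\bm{w}$, which depends on $\bm{\lambda}$ both explicitly and through $\bm{w}$. The second is the wage equation $\bm{W}(\bm{w},\bm{\lambda})=\bm{0}$ in \eqref{Wexpression in w}, which depends on $\bm{\lambda}$ explicitly through $(\bm{\lambda}\bm{Y}\bm{\gamma}^{-1})^*$ and $\bm{\gamma}=D^\top(\bm{\lambda}\bm{w}^{1-\sigma})^*$, and implicitly through $\bm{Y}$.

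I will first handle $\bm{Y}$ for fixed $\bm{w}$. Write $M=I-(1-\mu)\bm{1}\bm{\lambda}^\top$ and use $\partial M^{-1}/\partial\lambda_n=-M^{-1}(\partial M/\partial\lambda_n)M^{-1}$ with $\partial M/\partial\lambda_n=-(1-\mu)\bm{1}\bm{e}_n^\top$. This gives the explicit partial derivative
\[
\left.\frac{\partial\bm{Y}}{\partial\lambda_n}\right|_{\bm{w}}=(1-\mu)M^{-1}\bm{1}\,Y_n .
\]
Since $\bm{1}\bm{\lambda}^\top$ has rank one, Sherman--Morrison gives $M^{-1}=I+\frac{1-\mu}{1-(1-\mu)\bm{\lambda}^\top\bm{1}}\bm{1}\bm{\lambda}^\top$. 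With $\bm{\lambda}^\top\bm{1}=1$ this becomes $M^{-1}=I+\frac{1-\mu}{\mu}\bm{1}\bm{\lambda}^\top$, which shows $M$ is invertible because $0<\mu<1$. The total derivative is then
\[
\frac{\partial\bm{Y}}{\partial\bm{\lambda}}=(1-\mu)M^{-1}\bm{1}\bm{Y}^\top+M^{-1}\frac{\partial\bm{w}}{\partial\bm{\lambda}} .
\]

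Next I differentiate $\bm{W}$ with respect to $\bm{\lambda}$ at fixed $\bm{w}$, treating $\bm{Y}$ through the explicit part above. I use the same short-hand calculus as in Lemma~\ref{derivative_wage_implicit} and collect three contributions:
\begin{itemize}
\item[(i)] the factor $\bm{\lambda}$ in $(\bm{\lambda}\bm{Y}\bm{\gamma}^{-1})^*$, giving $\mu(\bm{w}^{1-\sigma}D)^\dagger\,\mathrm{diag}[(\bm{Y}\bm{\gamma}^{-1})^*]$;
\item[(ii)] the explicit dependence of $\bm{Y}$ on $\bm{\lambda}$, giving $\mu(\bm{w}^{1-\sigma}D)^\dagger\,\mathrm{diag}[(\bm{\lambda}\bm{\gamma}^{-1})^*]\,(1-\mu)M^{-1}\bm{1}\bm{Y}^\top$;
\item[(iii)] the dependence of $\bm{\gamma}$ on $\bm{\lambda}$, where $\partial\bm{\gamma}/\partial\bm{\lambda}=D^\top\mathrm{diag}[\bm{w}^{1-\sigma}]$, giving $-\mu(\bm{w}^{1-\sigma}D)^\dagger\,\mathrm{diag}[(\bm{\lambda}\bm{Y}\bm{\gamma}^{-2})^*]\,D^\top\mathrm{diag}[\bm{w}^{1-\sigma}]$.
\end{itemize}
The implicit function theorem then gives
\[
\frac{\partial\bm{w}}{\partial\bm{\lambda}}=-\left(\frac{\partial\bm{W}}{\partial\bm{w}}\right)^{-1}\frac{\partial\bm{W}}{\partial\bm{\lambda}},
\]
where $\partial\bm{W}/\partial\bm{w}$ is already available from the proof of Lemma~\ref{derivative_wage_implicit}. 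Substituting this into the total-derivative formula above yields $\partial\bm{Y}/\partial\bm{\lambda}$ in closed form.

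The main obstacle is the invertibility of $\partial\bm{W}/\partial\bm{w}$. The wage equation is homogeneous of degree one in $\bm{w}$: scaling $\bm{w}$ scales $\bm{Y}$, leaves $\bm{w}^{1-\sigma}/\bm{\gamma}$ invariant, and scales both sides equally. Hence $\partial\bm{W}/\partial\bm{w}$ has $\bm{w}$ in its kernel and is singular. I would fix this as is standard, with a wage normalization such as $\bm{\lambda}^\top\bm{w}=1$ or a numéraire. One equation of $\bm{W}=\bm{0}$ is redundant by Walras' law, so I replace it with the normalization and differentiate the normalization as well, adding $\bm{w}^\top$ to the $\bm{\lambda}$-derivative in that row. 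The bordered Jacobian is generically nonsingular, and I assume so here, as the paper implicitly does in Lemma~\ref{derivative_wage_implicit}.
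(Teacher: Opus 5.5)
Your proposal is correct and follows essentially the paper's route: the paper also applies the implicit function theorem to $\bm{H}=\bm{w}-[I-(1-\mu)\bm{1}\bm{\lambda}^\top]\bm{Y}$, which gives exactly your fixed-$\bm{w}$ partial $(1-\mu)M^{-1}\bm{1}\bm{Y}^\top$, and to $\bm{W}=\bm{0}$, building $\partial\bm{W}/\partial\bm{\lambda}$ from the same three contributions (i)--(iii). Your additions are sound and go beyond the paper: the paper records only the fixed-$\bm{w}$ derivative of $\bm{Y}$ rather than your total one, and it writes $(\partial\bm{W}/\partial\bm{w})^{-1}$ without comment, whereas, as you note, Euler's relation $(\partial\bm{W}/\partial\bm{w})\bm{w}=\bm{W}=\bm{0}$ and Walras' law $\bm{\lambda}^\top\bm{W}\equiv 0$ (on $\bm{1}^\top\bm{\lambda}=1$) make that matrix singular, so your bordered system with a normalization such as $\bm{\lambda}^\top\bm{w}=1$ is what actually makes the implicit-function step well posed.
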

\begin{proof}
The derivatives $\partial \bm{Y} / \partial \bm{\lambda} = (\partial Y_i / \partial \lambda_j)_{i,j}$ 
and $\partial \bm{w} / \partial \bm{\lambda} = (\partial w_i / \partial \lambda_j)_{i,j}$ 
are derived by following the same procedure as in the proof of Lemma~\ref{derivative_wage_implicit}.
\begin{align}
  \frac{\partial \bm{Y}}{\partial \bm{\lambda}} &= -
  \left( 
    \frac{\partial \bm{H}}{\partial \bm{Y} }
  \right)^{-1}
    \frac{\partial \bm{H}}{\partial \bm{\lambda} }
,
  \label{Jacobi_income_population}
  \\
  \frac{\partial \bm{w}}{\partial \bm{\lambda}} &= -
  \left( 
    \frac{\partial \bm{W}}{\partial \bm{w} }
  \right)^{-1}
    \frac{\partial \bm{W}}{\partial \bm{\lambda} }
,
  \label{Jacobi_wage_population}
\end{align}
where 
$  \bm{H} = 
  \bm{w} - 
  \left[
        I - (1 - \mu) \bm{1} \, \bm{\lambda}^\top 
  \right] \bm{Y}$ 
  (cf. \eqref{income_wage}),
  $\bm{W}$ is given in 
  \eqref{Wexpression in w},
and
\begin{align*} 
  \frac{\partial \bm{H}}{\partial \bm{\lambda} } &=
  (1 - \mu) \bm{1} \bm{Y}^\top,
  \\
  \frac{\partial \bm{W}}{\partial \bm{\lambda} } &=
  \mu \, (\bm{w}^{1 - \sigma} \, D)^\dagger \,
\frac{\partial}{\partial  \bm{\lambda}}
(\bm{\lambda} \bm{Y} \bm{\gamma}^{-1})^* \\
 & = \mu \, (\bm{w}^{1 - \sigma} \,
D)^\dagger \,
\left\{
  \mathrm{diag}[ (\bm{Y} \bm{\gamma}^{-1})^*]  +
  \left(\bm{\lambda} \bm{\gamma}^{-1} \frac{\partial \bm{Y}}{\partial \bm{\lambda}}\right)^\dagger - 
  \left(\bm{\lambda} \bm{Y} \bm{\gamma}^{-2} \frac{\partial \bm{\gamma}}{\partial \bm{\lambda}}\right)^\dagger \,
\right\}
\end{align*}
with
$\frac{\partial \bm{\gamma}}{\partial \bm{\lambda}}
 = D^\top 
 \frac{\partial}{\partial \bm{\lambda}}
 [(\bm{\lambda}\bm{w}^{1 - \sigma})^*]=
D^\top  \mathrm{diag}(\bm{w}^{1 - \sigma})$.
\end{proof}


\section{Theoretical Details}\label{AppendixTheoryDetail}

We present theoretical details.
We often set $n^{\rm L}=m$ and suppress the superscript $(\cdot)^{\rm L}$
for simplicity.

\subsection{Transformation Matrices and Reduced Governing Equation}%
\label{TransMatH}

The coordinate transformation
\eqref{coordinate transformation}
can be rewritten 
using $\bm{a}=\bm{\lambda}^{\rm L}$
and 
$\bm{A}=\bm{F}^{\rm L}$
 as follows:
\begin{align} \label{coordinate transformationApp} &
{\bm \lambda}=H\begin{pmatrix}
{\bm a} \cr
{\bm b}
\end{pmatrix}=
(H_a,H_b)\begin{pmatrix}
{\bm a} \cr
{\bm b}
\end{pmatrix}
=H_a{\bm a}+H_b{\bm b},
\qquad 
\begin{pmatrix}
{\bm A} \cr
{\bm B}
\end{pmatrix}=\tilde{H}^\top \bm{F}=
\begin{pmatrix}
\tilde{H}_a^\top \bm{F} \cr
H_b^\top \bm{F}
\end{pmatrix} .
\end{align}
The submatrices $H_a$ and $\tilde{H}_a$ 
are defined by
\begin{align*}
 & H_a=
 \begin{scriptsize}
\begin{pmatrix}
\frac{1}{n_1}\bm{1}_{n_1} && \cr
  & \ddots & \cr
  && 
  \frac{1}{n_m}  \bm{1}_{n_m}
 \end{pmatrix}
 \end{scriptsize},
 \qquad
 \tilde{H}_a=
 \begin{scriptsize}
\begin{pmatrix}
\bm{1}_{n_1} && \cr
  & \ddots & \cr
  && 
 \bm{1}_{n_m}
 \end{pmatrix} ,
  \end{scriptsize}
 \end{align*}
where $\bm{1}_{n_\alpha}
  =(1,\ldots,1)^\top
  \in \mathbb{R}^{n_\alpha}$
$(\alpha=1,\ldots,m)$
  and $n_\alpha$ denotes
   the number of local places for the component $a_\alpha$
  of $\bm{a}$.
  The column vectors of $H_b$ are constructed
so as to be orthogonal to the columns of 
both $H_a$ and $\tilde{H}_a$,
   as follows: 
\begin{align*}
 & H_b=
\begin{pmatrix}
 W_1 && \cr
  & \ddots & \cr
  && 
 W_m
 \end{pmatrix}
\quad \mbox{with} \quad
 W_\alpha \equiv
\left(
\begin{scriptsize}
\begin{array}{cccccccccc}
n_\alpha - 1 & \cr
-1 & n_\alpha - 2\cr
\vdots & \ddots & \ddots \cr
\vdots & \ddots & -1 & 1 \cr
-1 & \cdots & \cdots & -1
\end{array}
\end{scriptsize}
\right).
\end{align*}

Then, we can transform the governing equation
\eqref{IncNonlinear} as
\begin{align}
 &\begin{pmatrix}
{\rm d}{\bm A} \cr {\rm d}{\bm B}
\end{pmatrix}= 
\tilde{H}^\top{\rm d}\bm{F}=
\tilde{H}^\top JH\begin{pmatrix}
{\rm d} {\bm a} \cr
{\rm d} {\bm b}
\end{pmatrix}
 +\tilde{H}^\top G{\rm d} \bm{\tau} +{\rm h.o.t.}=\bm{0}.
 \label{TransformedGovern}
\end{align}
By applying the relations 
\[
\tilde{H}^\top JH=
\begin{pmatrix}
J_{a} & J_{ab} \cr
J_{ba} & J_{b} 
\end{pmatrix},
\quad
\tilde{H}^\top G=
\begin{pmatrix}
G_a \cr
G_b
\end{pmatrix},
\]
we decompose \eqref{TransformedGovern} into
the following two equations:
\begin{align}
\label{Increment 1st} &
{\rm d}\bm{A}=
 J_{a} \, {\rm d}{\bm a}+  J_{ab} \, {\rm d}{\bm b}
    +  G_a {\rm d}\bm{\tau} +{\rm h.o.t.}=\bm{0}, 
    \\
 &{\rm d}\bm{B}=
  J_{ba} \, {\rm d} {\bm a} +  J_{b} \, {\rm d}{\bm b}
    +  G_b {\rm d}\bm{\tau}
    +{\rm h.o.t.}=\bm{0} \label{Increment 2nd}.
\end{align}
Under the condition that 
$J_{b}$ is nonsingular,
\eqref{Increment 2nd} can be solved to obtain 
\begin{align} \label{dbSolvedApp} &
{\rm d}{\bm b} = 
-J_{b}^{-1} J_{ba}{\rm d}{\bm a}  
 - J_{b}^{-1}  G_b{\rm d}\bm{\tau} +{\rm h.o.t.}
 \end{align}
Substituting this equation 
into \eqref{Increment 1st} yields \eqref{Reduced eq general}
in Lemma~\ref{country-levelCondense}
with ${J}^{\rm L}=J_{a}-J_{ab}J_{b}^{-1} J_{ba}$,
${G}^{\rm L}=G_a - J_{ab} J_{b}^{-1} G_b$,
$\bm{a}=\bm{\lambda}^{\rm L}$,
and 
$\bm{A}=\bm{F}^{\rm L}$.

  \subsection{Proof of Lemma~\ref{Lemma reductions}}%
  \label{Proof of Lemma reductions}

The relation $\sum_{\alpha=1}^{n^{\rm L}} \lambda_\alpha^{\rm L} =1$
in \eqref{L conservation}
gives the first condition 
$\sum_{\alpha=1}^{n^{\rm L}} {\rm d} \lambda_\alpha^{\rm L} =0$.
The relation $\sum_{i\in N} F_i =0$
 in \eqref{law for dynamics conservation law}
 leads to $\sum_{\alpha=1}^{n^{\rm L}}F_\alpha^{\rm L} =0$,
which further entails the second condition
$\sum_{\alpha=1}^{n^{\rm L}}{\rm d}F_\alpha^{\rm L} =0$
and the third condition
$\sum_{\alpha=1}^{n^{\rm L}}
 \frac{\partial F_\alpha^{\rm L}}{\partial \tau_k}=0.$

\subsection{Proof of \eqref{1D simplex}}\label{Proof 1D simplex}
\begin{align*} 
{\rm d}F_1-{\rm d}F_2  
 & = P^\top 
 \left\{
\begin{pmatrix}
J_{11} & J_{12} \cr
J_{21} & J_{22}
\end{pmatrix}
P {\rm d}\lambda_1
 +
Pg \,
{\rm d}\tau
+{\rm h.o.t.} \right\}
 \cr
    & = (1,-1)
\begin{pmatrix}  J_{11} & J_{12} \cr J_{21} & J_{22} \end{pmatrix}   
\begin{pmatrix} 1 \cr -1 \end{pmatrix} {\rm d}\lambda_1
  +(1,-1)
\begin{pmatrix}
1 \cr -1 \end{pmatrix} g \, {\rm d}\tau
+{\rm h.o.t.}
 \cr
    & =(J_{11} - J_{12} - J_{21} + J_{22}) \, {\rm d}\lambda_1
    +2g \, {\rm d}\tau
    +{\rm h.o.t.}=0.
\end{align*}

\subsection{Proof of Proposition~\ref{Prop for incremental}}%
\label{Proof of Prop for incremental}

In the target-level governing equation
 in \eqref{Reduced eq general},
we introduce the notation:
\begin{align*}
 & {\rm d} \tilde{\bm{\lambda}}=({\rm d}\lambda_1, \ldots, {\rm d}\lambda_{m-1})^\top, \cr
 & {\rm d}\tilde{\bm{F}} =
 \left( {\rm d}F_1 -{\rm d}F_m, \ldots, 
 {\rm d}F_{m-1}-{\rm d}F_m
 \right)^\top, \cr
& \bar{G}=\left\{ g_{\alpha k}
 \mid \alpha=1,\ldots,m-1;~k=1,\ldots,p \right\}
  \end{align*}
  ($m=n^{\rm L}$ and suppress $(\cdot)^{\rm L}$)
  and an $m\times (m-1)$ matrix:
\begin{align} \label{PmatrixDefine} &
P=
\begin{scriptsize}
\begin{pmatrix}
 1 \cr 
   & \ddots \cr
   &           & 1 \cr
    -1 & \cdots & -1
\end{pmatrix}.
\end{scriptsize}
\end{align}

The reduction of
${\bm{\lambda}}$, ${\bm{F}}$, and ${G}$
is given in Lemma~\ref{lemma for these reductions}. 

\begin{lemma}\label{lemma for these reductions}
${\rm d}\bm{\lambda}=P{\rm d}\tilde{\bm{\lambda}}$,  
${\rm d}\tilde{\bm{F}} =P^\top{\rm d}\bm{F}$, and
${G}=P\bar{G}$.
\end{lemma}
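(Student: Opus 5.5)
The three identities are direct consequences of the conservation relations in Lemma~\ref{Lemma reductions}, combined with the explicit form of $P$ in \eqref{PmatrixDefine}. I will check each one row by row, with $m=n^{\rm L}$ throughout.

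\emph{The population identity.} The first $m-1$ rows of $P$ form the identity $I_{m-1}$, and the last row is $-\bm{1}^\top$. Hence $P{\rm d}\tilde{\bm{\lambda}}$ has entries ${\rm d}\lambda_1,\ldots,{\rm d}\lambda_{m-1}$ followed by $-\sum_{\alpha=1}^{m-1}{\rm d}\lambda_\alpha$. By the first relation $\sum_{\alpha=1}^{m}{\rm d}\lambda_\alpha=0$ in \eqref{conservation m}, this last entry equals ${\rm d}\lambda_m$, so ${\rm d}\bm{\lambda}=P{\rm d}\tilde{\bm{\lambda}}$.

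\emph{The governing-equation identity.} This one needs no conservation law. The $\alpha$th column of $P$ is $\bm{e}_\alpha-\bm{e}_m$, so the $\alpha$th entry of $P^\top{\rm d}\bm{F}$ is ${\rm d}F_\alpha-{\rm d}F_m$. That is exactly the $\alpha$th component of ${\rm d}\tilde{\bm{F}}$.

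\emph{The parameter-derivative identity.} I will read $G$ here as the target-level matrix $(\partial F_\alpha/\partial\tau_k)$ and $\bar G$ as its first $m-1$ rows. The first $m-1$ rows of $P\bar G$ are $\bar G$ itself, so they agree with $G$. The last row of $P\bar G$ is $-\sum_{\alpha=1}^{m-1} g_{\alpha k}$, and the third relation of \eqref{conservation m}, $\sum_{\alpha=1}^{m}\partial F_\alpha/\partial\tau_k=0$, turns this into $g_{mk}$.

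The only real obstacle is one of interpretation. The $G$ in the statement must be read as the target-level $G^{\rm L}$ of Lemma~\ref{country-levelCondense}, which by Lemma~\ref{Lemma reductions} has zero column sums, and not as the region-level $G$. With that reading, nothing remains beyond the entrywise checks above.
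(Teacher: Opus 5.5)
Your proof is correct and follows the paper's argument essentially line by line. Like the paper, you obtain ${\rm d}\bm{\lambda}=P{\rm d}\tilde{\bm{\lambda}}$ from $\sum_\alpha {\rm d}\lambda_\alpha=0$, read ${\rm d}\tilde{\bm F}=P^\top{\rm d}\bm F$ off the rows of $P^\top$, and get the last row of $P\bar G$ from the zero column sums of $G$, with $G$ understood as $G^{\rm L}$ exactly as the paper states at the end of its proof. Your remark that the second identity is purely definitional is also accurate: the paper's appeal there to the second relation of \eqref{conservation m} only justifies discarding one equation, and is not needed to prove the identity itself.
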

\begin{proof}
First,
from the first relation of \eqref{conservation m},
we have ${\rm d}\lambda_m
=- \sum_{\alpha=1}^{m-1} 
 {\rm d}\lambda_\alpha$.
 Then, 
 using the expression of $P$ in \eqref{PmatrixDefine},
 we obtain
\begin{align*}
{\rm d}\bm{\lambda} & =
\left({\rm d}\lambda_1, \ldots,{\rm d} \lambda_{m-1},
-{\rm d}\lambda_1
-\cdots-{\rm d}\lambda_{m-1}
\right)^\top
 = P({\rm d}\lambda_1, \ldots, {\rm d}\lambda_{m-1})^\top 
 =P{\rm d}\tilde{\bm{\lambda}}. 
\end{align*}
Next, using the second relation of \eqref{conservation m},
we can eliminate one equation and
introduce an $(m-1)$-dimensional vector:
\begin{align*} 
{\rm d}\tilde{\bm{F}} & =
\left( {\rm d}F_1 -
 {\rm d}F_m, \ldots, 
 {\rm d}F_{m-1}-
 {\rm d}F_m
 \right)^\top =P^\top{\rm d}\bm{F}.
\end{align*}
Finally, Lemma~\ref{Lemma reductions} and 
 $g_{\alpha k}= \frac{\partial F_\alpha}{\partial \tau_k}$ lead to
$ g_{m k}=- 
\sum_{\alpha=1}^{m-1} 
g_{\alpha k}$.
Then, we have
  \begin{align*}
P\bar{G} & 
=P\left\{ g_{\alpha k}
 \mid \alpha=1,\ldots,m-1;~k=1,\ldots,p \right\} \cr
 &
 = \begin{pmatrix} & \bar{G} & \cr
  -\sum_{\alpha=1}^{m-1} g_{\alpha 1} & \cdots &
    -\sum_{\alpha=1}^{m-1} g_{\alpha p}
 \end{pmatrix}
 = \begin{pmatrix} & \bar{G} & \cr
  g_{m1} & \cdots & g_{mp}
 \end{pmatrix} \cr
 & =\left\{ g_{\alpha k}
 \mid \alpha=1,\ldots,m;~k=1,\ldots,p \right\}={G}.
 \end{align*}
Hence, ${G}=P\bar{G}$.
\end{proof}

  Using ${\rm d}\bm{F}=
 {J}{\rm d}{\bm \lambda}
    + {G}{\rm d}\bm{\tau}+
    {\rm h.o.t.}$ given in \eqref{Reduced eq general}
    and 
    Lemma~\ref{lemma for these reductions},
we obtain
\begin{align*}
 {\rm d}\tilde{\bm{F}}  = P^\top {\rm d}\bm{F}  
  & =P^\top( {J}{\rm d}{\bm \lambda}
    + {G}{\rm d}\bm{\tau}
    +{\rm h.o.t.}) 
   =P^\top( {J} P{\rm d}\tilde{\bm \lambda}
    + P\bar{G}{\rm d}{\bm \tau}
    +{\rm h.o.t.}) \cr
     & =P^\top{J} P {\rm d}\tilde{\bm \lambda}
         + P^\top P \bar{G}{\rm d}{\bm \tau}
    +{\rm h.o.t.} =\mbox{$\bm{0}$}. 
\end{align*}
Setting 
$\tilde{J}=P^\top{J} P$ and
$\tilde{G}=P^\top P \bar{G}$
yields the relation \eqref{Gtau=Jd}
($J=J^{\rm L},~ P\bar{G}=G^{\rm L}$).

\OMIT{
\subsection{Proof of Proposition~\ref{PropEtReduced}}\label{ProofEtReduced}

Under the condition that 
$J_{b}$ is nonsingular,
\eqref{dbSolvedApp} is substituted into
\eqref{coordinate transformationApp}
to arrive at
\begin{align*} 
{\rm d}{\bm \lambda} &
=H_a \, {\rm d}{\bm a}+H_b \,{\rm d}{\bm b} \cr &
= H_a\, {\rm d}{\bm a}+H_b(-J_{b}^{-1} J_{ba}{\rm d}{\bm a}  
 - J_{b}^{-1}  G_b{\rm d}\bm{\tau} +{\rm h.o.t.})
 \cr &
 =Q{\rm d}{\bm a}+R{\rm d}\bm{\tau} +{\rm h.o.t.}
\end{align*}
with 
\[
Q=H_a-H_b J_{b}^{-1} J_{ba}, \quad
R=-H_b J_{b}^{-1} G_b.
\]
Then,
$\frac{\partial \bm{\lambda}}{\partial \bm{a}}=Q$.
}

\subsection{Trade Value for Each Country}\label{TradeValueEachCountry}

We define the trade value for each country.
Let $\mathcal{C}_\alpha$ denote the set of places in country $\alpha$ with $n_\alpha$ places.
The set of all places can be expressed as $N=\{1, \ldots, n \} = \bigcup_{\alpha = 1}^{n^C} \mathcal{C}_\alpha$, where $n^C$ is the number of the countries.
For 
$i \in \mathcal{C}_\beta$ 
and $j \in \mathcal{C}_\alpha$, 
$T_{ij}$ denotes the 
value of exports from place $i$ in country $\beta$ to place $j$ in country $\alpha$.
Using this definition, the trade values
 from country $\beta$ to $\alpha$ is given by
\begin{align} \label{Tvolume-define}
  V_{\beta\rightarrow\alpha} = 
  \sum_{i \in \mathcal{C}_\beta}
  \sum_{j \in \mathcal{C}_\alpha}
   V_{ij}.
\end{align}


\end{document}